\pgfplotsset{compat=1.13}
\numberwithin{equation}{section}
\theoremstyle{definition}
\newtheorem{definition}{Definition}[section]
\newtheorem{theorem}{Theorem}
\algnewcommand\algorithmicforeach{\textbf{for each}}
\title{Dissertation: Quantum Approximate Optimisation Applied to Graph Similarity}
\author{Nicholas Pritchard 21726929\\\textit{Supervisors:} Prof. Jingbo Wang and Prof. Amitava Datta}
\begin{document}
\definecolor{rvwvcq}{rgb}{0.08235294117647059,0.396078431372549,0.7529411764705882}
\maketitle
\section*{Abstract}
	Quantum computing promises solutions to classically difficult and new-found problems through controlling the subtleties of quantum computing. The Quantum Approximate Optimisation Algorithm (QAOA) is a recently proposed quantum algorithm designed to tackle difficult combinatorial optimisation problems utilising both quantum and classical computation\cite{farhi_quantum_2014-1}. 
	The hybrid nature, generality and typically low gate-depth make it a strong candidate for near-term implementation in quantum computing. Finding the practical limits of the algorithm is currently an open problem. Until now, no tools to facilitate the design and validation of probabilistic quantum optimisation algorithms such as the QAOA on a non-trivial scale exist.\newline
	Graph similarity is a long standing classically difficult problem withstanding decades of research from academia and industry. Determining the maximal edge overlap between all possible node label permutations is an NP-Complete task which has faced little research from classical computer science and provides an apt measure of graph similarity. We introduce a novel quantum optimisation simulation package facilitating investigation of all constituent components of the QAOA from desktop to cluster scale using graph similarity as an example.\newline
	Our simulation provides class-leading flexibility and performance. We investigate eight classical optimisation methods each at six levels of decomposition; the most exhaustive study to date. Moreover a novel encoding for permutation based problems such as graph similarity through edge overlap to the QAOA allows for significant quantum memory savings at the cost of additional operations. This compromise extends into the classical portion of the algorithm as the inclusion of infeasible solutions creates a particularly difficult cost-function landscape.\newline
	We present performance analysis of our simulation and of the quantum algorithm itself setting a precedent for investigating and validating numerous other difficult problems to the QAOA as we move towards realising practical quantum computation.
\newpage
\section*{Acknowledgements}
Firstly, I would like to thank my supervisors, Prof. Jingbo Wang and Prof. Amitava Datta. Not only was their support invaluable but the faith they placed in me to learn in an area of study completely new to me was inspiring. Thank you for giving me the chance to learn some Physics.\newline
I would also like to thank Matthew, Mitchell, Kooper, Gareth, Sam, Amit, Leigh, Edric, Ja-Jet and Lyle from the Quantum Dynamics Research Group who made the office arguably the most vibrant destination on campus and for their feedback, assistance and encouragement of my work.\newline
Finally, I thank my parents for their unending encouragement and support throughout the year. 
\tableofcontents
%\listoftables  %optional
%\listoffigures  %optional

\chapter{Introduction}
Seeking use from digital computers is an invariable goal in computer science. The computers we currently know stem from mathematics recruiting the laws of physics to realise our definition of computing. Graphs are widely used to represent real-world phenomenon in manner suitable for computation. Determining whether one graph is identical or similar to another when node correspondence is unknown is a long standing difficulty. Approximation of such measures prove useful in a myriad of real-world contexts. Quantum computing defines computation as a physical process linking computation to the underpinning physics stronger than previously encountered. Historically intractable problems can be explored in new ways as increasingly sophisticated quantum algorithms are formulated. Combinatorial optimisation is amongst the most general and practically applicable computational paradigms. Based on the formulation of quantum annealing the recently formulated Quantum Approximate Optimisation Algorithm (QAOA) \autocite{farhi_quantum_2014-1} approaches NP-Complete combinatorial optimisation problems on discrete gate-based quantum computers. We explore the long-standing difficult problem of graph similarity via the QAOA using a problem encoding scheme novel to this algorithm. Furthermore, empirical validation is the only powerful method known to evaluate heuristic based or very probabilistic algorithms such as the QAOA. We present Qolab (Quantum Optimisation Laboratory), a software package designed to provide scalable efficient simulation of the QAOA from desktop to cluster scale for a generalised problem encodings and variations on the QAOA itself. A familiarity with the fundamentals of quantum computing is essential to understanding quantum algorithms. This familiarity is difficult to acquire without a background in quantum-physics. To aid with this we additional present a brief introduction to the field providing the minimal required knowledge to understand the QAOA. As physical implementations of gate-based quantum computers grow in scale, focus is shifting towards practical quantum computation with tangible real-world applications.
\section{Computer Science Thus Far}
Arguably, computing begins with the abacus. Over the past millennium humanity has discovered increasingly sophisticated methods to implement the model of computing. Such developments are governed by physics. The most radical discoveries in the field invariably drive innovation in the era following; Galileo formulated simple machines establishing a relationship between mathematics and theoretical and experimental physics. Newton's laws of motion gave birth to the machine age, similarly the discoveries of electromagnetism drove the development of the information age. We are yet to realise a fitting use for quantum mechanics. Quantum computing is the strongest candidate thus far and has garnered ferocious support from private industry and public institutions across the globe. Quantum computing represents a natural development that radically diverges from classical computer science.
\section{Combinatorial Optimisation}
A combinatorial optimisation problem consists of finding some optimal object from a finite set of possible choices. Specifically we define combinatorial optimisation problems with respect to the following components.
\begin{definition}
	A combinatorial optimisation is specified precisely by the following components
	\begin{itemize}
		\item A specific problem type $I$. We must be able to efficiently determine if an arbitrary problem belongs to the set we are concerned with.
		\item For each valid problem instance $p \in I$ a solution validation function $S : p \rightarrow \mathcal{P}(\mathcal{U})$ which determines if a given input $x$ is a feasible solution to $p$. The computational resources required to store both $x$ and $p$ must be bounded by some polynomial. This is required to efficiently verify an arbitrary solution $y$ as a valid solution to $p$.
		\item An objective function $C(x) : I \times \mathcal{U} \rightarrow \mathbb{Z}$ which maps a feasible solution to a non-negative integer value indicating the quality of the solution. The highest value of $C(x)$ in the set of feasible solutions indicates the optimal solution for a given problem instance
	\end{itemize}
\end{definition}
Given a problem instance $p$ of type $I$ we aim to find an $x$ such that
\begin{equation}
C(x) = max(S(p))
\end{equation}
This leads to a clean definition of \textit{NP-optimisation} problems.
\begin{definition}
	An NP optimisation problem $NP$ is a combinatorial optimisation problem where the set of feasible solutions $S(x,NP)$ cannot be exhausted in polynomial time. 
\end{definition}
We further define a \textit{bounded optimisation problem NPO-PB}
\begin{definition}
	A bounded NP optimisation problem $NPO-PB$ is an NP optimisation problem with the additional constraint that the resources required to represent $C(x) \forall x \in S(NPO-PB)$ must be bounded by some polynomial.
\end{definition}
Given the intractability of $NP$ and $NPO-PB$ optimisation problems approximation algorithms are tolerated relaxing the problem to find an optimal approximate solution. This general problem description captures a number of useful problems in computer science such as planning, scheduling, protein folding, vertex colouring and the travelling salesman problem \autocite{garey_computers_1979}. Finding approximate solutions of superior accuracy for such problems is an open area of research in classical computer science.\newline 
\section{Graph Similarity}
Graphs are well generalised mathematical structure. A graph encodes relations between entities and as such graphs can represent a vast number of real-world problems. The features of faces, topology of the Internet, road-networks, decision-flow, computer programs, cosmological bodies and any other number of natural and unnatural phenomenon can be expressed through this marvellous data-structure. 
For completeness we define the graph.
\begin{definition}
	A graph $G(v,e)$ is a collections of vertices $v$ and a collection of $(v,v')$ pairs termed edges $e$ which may be directed or un-directed \autocite{diestel_graph_2000}. 
\end{definition}
The rich history of graph theory provides many metrics on graphs such as shortest path-length. While many of these measures are trivial or at least tractable to compute they typically concern themselves with the internal structure of a particular graph. Finding information about the overall structure of a graph is a much more laborious task. Graph isomorphism is a quintessential structural problem when comparing graphs; are two given graphs alike? This problem has no P algorithm and showing NP-completeness is a long-standing open problem \autocite{garey_computers_1979} and so it occupies a unique complexity class often termed graph isomorphism complete \autocite{weisstein_graph_2018}. Difficulty arises from the factorial number of mappings between the features of two unlabelled graphs with unknown node correspondence. Relaxing the isomorphism problem provides numerous measures of similarity between graphs which are known to be NP-Complete \autocite{garey_computers_1979} and have been given a large amount of academic and industrial attention.\newline
We specifically investigate a measure of \textit{whole graph similarity} which we defined as
\begin{definition}{Whole Graph Similarity:}
	Given two graphs $G_1(v_1, e_1)$ and $G_2(v_2, e_2)$ with possibly different numbers of vertices and edges, find an algorithm which returns a measure of similarity $S | S \in [0,1]$. Furthermore:
	\begin{enumerate}
		\item $S(G_1, G_1) = 1$
		\item $S(G_1, G_2) = S(G_2, G_1)$
	\end{enumerate}
	\label{ref:similarity} 
\end{definition}
This method is simple to understand and allows for concise error determinations when comparing a brute-force optimal and computed approximate solution. Further, we define and present the measure of edge overlap (EO) as a whole graph similarity measure adhering to Definition \ref{ref:similarity}
Consider two directed, un-weighted graphs $G_1$ and $G_2$ with no known vertex labelling. For each permutation $\sigma$ of potential node labels between $G_1$ and $G_2$ we provide a penalty of one for every edge (or non-edge) which differs between the two graphs. This penalty score is then normalised by the maximum number of edges possible. The maximum possible score is bound by the number of edges possible which is $v^2$ for a directed graph, $v(v-1)$ for an un-directed graph with self-edges and $\frac{v(v-1)}{2}$ for an un-directed graph without self-edges. Normalisation yields an edge-difference value $e_d$ which we take as our similarity value. Importantly, graphs of differing numbers of vertex can be compared by adding degenerate, unconnected nodes to the smaller graph.\newline The computational difficulty of finding this value arises from the factorial number of possible node-labelling permutations.\newline
For example, consider the two graphs depicted in figure \ref{fig:graphSim}. The maximum number of identical edges is $14$ counting the identical edges present and identical edges which are missing (they only differ in two edges). Thus the graph similarity for these two is $1 - \frac{2}{4^2} = 1 - \frac{1}{8} = 0.875$.\newline
%TODO: Better diagram
\begin{figure}[h!tbp]
	\begin{center}
		\resizebox{!}{1in}{
			\begin{tikzpicture}[line cap=round,line join=round,>=triangle 45,x=1cm,y=1cm]
			\begin{axis}[
			x=1cm,y=1cm,
			axis lines=none,
			xmin=-6,
			xmax=8,
			ymin=-1.6,
			ymax=2.25,
			xtick=\empty,
			ytick=\empty,]
			\clip(-4.5,-4.59183983444371) rectangle (4.5,5);
			\draw [->,line width=2pt] (-3,2)-- (0,2);
			\draw [->,line width=2pt] (-3,2)-- (-3,-1);
			\draw [->,line width=2pt] (-3,-1)-- (0,2);
			\draw [->,line width=2pt] (0,2)-- (0,-1);
			\draw [->,line width=2pt] (1,2)-- (4,2);
			\draw [->,line width=2pt] (4,2)-- (4,-1);
			\draw [->,line width=2pt] (4,-1)-- (1,-1);
			\draw [->,line width=2pt] (1,2)-- (1,-1);
			\begin{scriptsize}
			\draw [fill=rvwvcq] (-3,2) circle (2.5pt);
			\draw [fill=rvwvcq] (0,2) circle (2.5pt);
			\draw [fill=rvwvcq] (-3,-1) circle (2.5pt);
			\draw [fill=rvwvcq] (0,-1) circle (2.5pt);
			\draw [fill=rvwvcq] (1,2) circle (2.5pt);
			\draw [fill=rvwvcq] (4,2) circle (2.5pt);
			\draw [fill=rvwvcq] (1,-1) circle (2.5pt);
			\draw [fill=rvwvcq] (4,-1) circle (2.5pt);
			\end{scriptsize}
			\end{axis}
			\end{tikzpicture}
		}
		\caption{Two example unlabelled graphs ($G_1$ and $G_2$)}
		\label{fig:graphSim}
	\end{center}
\end{figure}
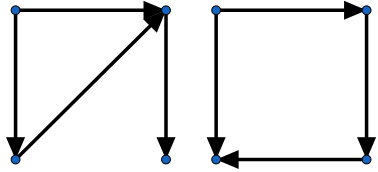
\subsection{Graph Similarity as Combinatorial Optimisation}
We map graph-similarity to a bounded NP-optimisation problem as follows.
\begin{itemize}
	\item The set of problem instances $I$ is the set of all un-weighted graphs.
	\item Given a pair of graphs $G_1, G_2$ the set of feasible solutions is the set of mappings between the vertices of $G_2$ to $G_1$. The number of candidate mappings grows $\mathcal{O}(V^2)$ hence growing non-polynomially. A solution can be validated by checking that each vertex is mapped and that the mapping is bijective.
	\item The objective function is the edge-overlap between graphs $G_1, G_2$ under a candidate mapping which is bound by $V^2$. 
\end{itemize}
\begin{algorithm}\label{alg:gSim}\
	\caption{Classical brute-force algorithm to find Maximal Edge Overlap}
	\begin{algorithmic}[1]
		\State $G_1 \gets <E_1>$
		\State $G_2 \gets <E_2>$
		\State Best$ \gets 0$
		\ForEach{Permutation $X$ of $1,2...V$}
			\State $C(x) \gets V^2$
			\For{$i \gets 0$ to $V$}
				\For{$j \gets 0$ to $V$}
					\If{$G_1[i,j] != G_2[X[i],X[j]]$}
						$C(x) \gets C(x) - 1$
					\EndIf	
				\EndFor
			\EndFor
			\If{$C(x) > $Best}
				\State Best $\gets C(x)$
			\EndIf
		\EndFor
		\State \Return Best
	\end{algorithmic}
\end{algorithm}
Many classically difficult problems can be simple to define but remain difficult due to the explosive growth of the solution space. In the case of graph-similarity the number of candidate solutions grow $\mathcal{O}(v!)$.\newline
This exact formulation of graph-similarity demands a brute-force algorithm to find an exact solution presented in Algorithm \ref{alg:gSim} as it is feasible for only a single candidate to be optimal (consider two isomorphic graphs for example).
\section{The Quantum Approximate Optimisation Algorithm}\label{sec:QAOA}
The Quantum Approximate Optimisation Algorithm (QAOA) is a hybrid quantum-classical algorithm designed to approximately solve NP-Complete combinatorial optimisation problems \autocite{farhi_quantum_2014}. The QAOA is unique among other notable gate-based quantum computing algorithms for two. The gate depth required for useful computation is low and the approximate nature of the algorithm make it suitable for near-term, noisy quantum computers. Furthermore, the general formulation of the algorithm inspires other similar algorithms which allow for more sophisticated problem encoding. The QAOA and derivative algorithms are likely candidates for near-term practical use. Finding the practical limits of this algorithm remains an open problem.
\subsection{The Quantum Adiabatic Algorithm (QAA)}
Farhi et al. \autocite{farhi_quantum_2001} note the possibility of exploiting the exponential number of items a qubit register can represent and formulate the quantum adiabatic algorithm (QAA). The QAA is founded in by the \textit{quantum adiabatic theorem} where a quantum systems evolves according to the Schr\"{o}idgner equation
\begin{equation}
i\frac{d}{dt}\ket{\psi(t)} = \hat{H}(t)\ket{\psi(t)}
\end{equation}
The QAA utilises a quantum register of $n$ qubits. Two system Hamiltonians are prepared $\hat{B}$ and $\hat{C}$. $\hat{B}$ often termed the driver Hamiltonian is an easily prepared maximal energy state. $\hat{C}$ often termed the problem Hamiltonian encodes the values of all possible solutions.  Importantly, $\hat{C}$ encodes the problem solution as the highest energy-state of our quantum system. The Hamiltonian governs the evolution path of our system according to
\begin{equation}
\hat{H}(t) = (1-\frac{t}{T})\hat{B} + \frac{t}{T}(\hat{C}))
\end{equation}
such that $\hat{H}(0) = \hat{B}$ and $\hat{H}(T) = \hat{C}$. Adiabatic evolution ensures that if $\hat{B}$ begins in a maximum-energy state, as $t \rightarrow T$ the system will remain in a maximal energy state. Measurement at time $T$ should yield a near-optimal solution with high-probability \autocite{farhi_quantum_2001}.
\subsection{Moving to the Quantum Approximate Optimisation Algorithm}
The Quantum Approximate Optimisation Algorithm (QAOA) \autocite{farhi_quantum_2014-1} builds on the foundation of the QAA by noting that adiabatic evolution is impossible to implement exactly. A Suzuki-Trotter decomposition of the evolution into discrete increments is much simpler however and can be implemented in a gate-based quantum computer \autocite{wecker_training_2016}.\newline
Farhi et al. define combinatorial optimisation problems with regards to \textit{maximum satisfiability}. The corresponding cost function is defined as
\begin{equation}\label{eq:C}
c(x) = \sum_{i=1}^{m}c_i(x)
\end{equation}
$c_i(x)$ check if the $i$-th clause is satisfied by the input bit-string. Since this problem is NP-Complete, any other NP-complete problem can be mapped to this formulation in polynomial time.
The solution can be encoded as a diagonal Hamiltonian $\hat{C}$ where the $i$-th eigenvalue contains the cost-function value of $i$ as a $n$-length bit-string. This diagonal operator is defined by the action of $\hat{C}$ on the computational basis states, just as any other quantum-computing gate. 
\begin{equation}\label{eq:UC}
\hat{U}_C(\gamma) = e^{-i\gamma\hat{C}}
\end{equation} 
where $\gamma$ is a real-valued parameter which is restricted to the interval $[0, 2\pi)$. Importantly, the QAOA operates on \emph{all} bit-strings of length $n$ without regard for feasibility. This is not an issue for problems like MAX-SAT where all bit-strings are valid but cause issue for problem with more nuanced encodings.\newline
An operator $\hat{B}$ defines how candidate bit-strings are considered. The canonical formulation is given by
\begin{equation}\label{eq:B}
\hat{B} = \sum_{i=1}^{n}\sigma_i^x
\end{equation} 
Where $\sigma_x^i$ is the Pauli-X operator, the quantum equivalent of the NOT gate. This Hamiltonian allows all candidate bit-strings to be considered without restriction. Similar to $\hat{U}_C$ we define $\hat{U}_B$ as 
\begin{equation}\label{eq:UB}
\hat{U}_B(\beta) = e^{-i\beta\hat{B}}
\end{equation}
The resulting quantum state for a given set of parameters is by
\begin{equation}\label{eq:state}
\ket{\vec{\gamma},\vec{\beta}} = \hat{U}_B(\beta_p)\hat{U}_C(\gamma_p)...\hat{U}_B(\beta_1)\hat{U}_C(\gamma_1)\ket{\psi}
\end{equation}
Where $\ket{s}$ is a trivial equal-super-position of $n$ qubits.
In practice repeated sampling of a single QAOA-iteration functions as the output of our system. This output is defined analytically as the expectation value of $\ket{\vec{\gamma},\vec{\beta}}$, $F_p(\vec{\gamma},\vec{\beta})$ and is defined by the equation
\begin{equation}
F_p(\vec{\gamma},\vec{\beta}) = \bra{\vec{\gamma},\vec{\beta}}\hat{C}\ket{\vec{\gamma},\vec{\beta}}
\end{equation} 
For reference the expectation value of a quantum state $\ket{\psi}$ is computed as a dot-product of $\ket{\psi}^\dagger$, $\hat{C}$ and $\ket{\psi}$ and is effectively a weighted sum of the probability to measure a given bit-string multiplied by its corresponding cost-function value. Typically, the amount of decomposition is small and is expressed as the number $p$ ($\approx2$). Finding a solution to the original combinatorial optimisation problem is now accomplished by a parameter search on the $2p$ transformation parameters, which can be performed classically. \newline
The QAOA is therefore a generalised frame-work application to a wide range of combinatorial optimisation problems and a strong candidate for near-term practical use. We present a high-level circuit representation of the QAOA in Figure 1.2.
\begin{center}
	\begin{figure}\label{fig:QAOAcirc}
		\includegraphics[width=\textwidth]{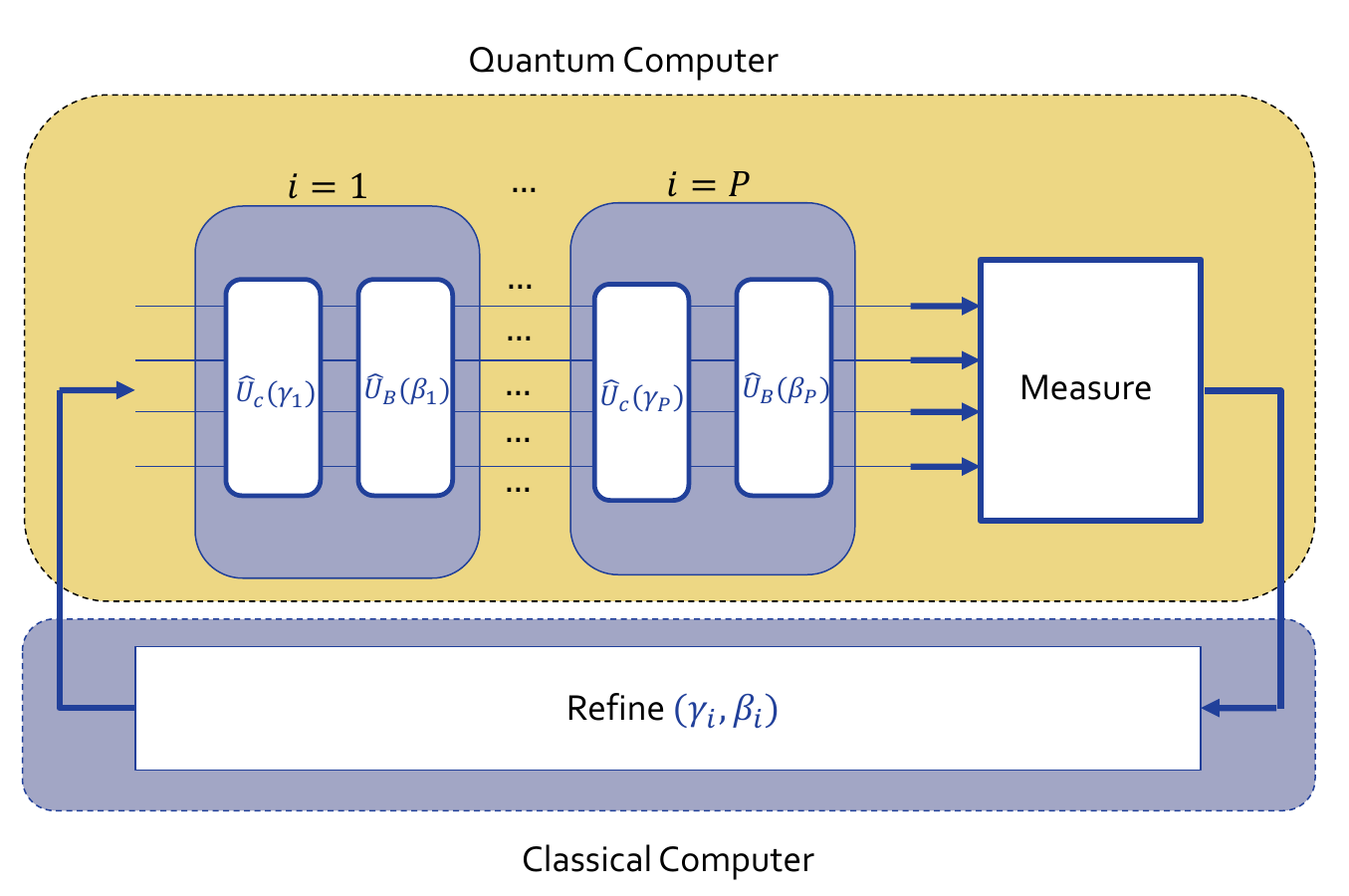}
		\caption{Circuit diagram of the QAOA framework}
	\end{figure}
\end{center}
\subsection{Simulating the QAOA}
Classical simulation of an arbitrary quantum process is a well-established difficulty \autocite{feynman_simulating_1982}. Additionally, the exponential and anti-intuitive behaviour of qubits make designing quantum algorithms a generally difficult task. The current demand for exacting bit-level design for quantum algorithms makes experimentation with general quantum algorithms such as the QAOA exceedingly difficult. Current quantum computing simulations either lose performance by offering generality or lose generality by offering extreme performance for a single task. This motivates the development of a simulation package targeted at the general QAOA framework intending to provide a compromise between the two extremes. By providing an exacting simulation of the QAOA we make the best use of \textit{high-performance computing}(HPC) tools allowing for powerful empirical validation. Experimental analysis is essential to understand the QAOA since closed-form correctness analysis is infeasible for all but the most trivial, well-conditioned cases \cite{hadfield_quantum_2018}.
\section{Unifying Hypothesis}
Given the long-standing combinatorial difficulty in comparing whole-graph structure, the advent of near-term quantum computing hardware and a ferocious appetite for exploring the use of such machines it seems appropriate to investigate quantum combinatorial optimisation with regards to graph similarity.\newline
We propose that for a fixed number of samples the QAOA will generate more correct solutions to na\"{i}ve random-sampling for determining the maximal edge-overlap between two directed, un-weighted graphs. We expect improved performance for undirected graphs and that the number of quantum evaluations required will scale favourably establishing the QAOA as a valid approach to graph similarity.

\chapter{Literature Review}
We present a discussion of literature surrounding graph similarity measures, a rapid introduction to gate-based quantum computing, a survey of current investigation into the QAOA and finally a discussion on computing the matrix exponential; a cornerstone of simulating the QAOA.
\section{Graph Similarity}
\subsection{Applications of Graph Similarity Measures}
The generality of graph-theory generates many important real-world contexts where a measure of similarity between graphs is valuable. Mapping an individual's social network has remained of interest since the original publication of the 'small-world' phenomenon which in itself provides a good model for understanding the relationship between information sources \autocite{james_six_2006}. Computing the structural neighbourhoods of vertices and edges provides methods for indexing the world wide web allowing a level of access to information unparalleled in all of human history \autocite{brin_anatomy_1998}. 
Strug \autocite{strug_automatic_2013} formulates a machine learning method to evaluate the quality of a design through graph similarity between components. In the field of data-mining classifying graphs based on similar features is applicable to many problem settings such as social network mining, drug design and anomaly detection in program-execution \autocite{han_topological_2015}. Facial recognition and object tracking is an obvious application for graph similarity if features can be mapped onto graph structures \autocite{wiskott_face_1997}. Heymans and Singh \autocite{heymans_deriving_2003} frame the determination of evolutionary pathways in terms of graph similarity computed between the metabolic pathways of varying organisms. The problem of chemical compound matching has been a common benchmark problem prompting strong commercial and academic backing for a number of years \autocite{han_topological_2015}, \autocite{raymond_rascal:_2002}, \autocite{hattori_development_2003}. Just as graphs are able to express a vast number of phenomenon the measures constructed to compare them are just as varied, finding structural similarities between graphs is a long-standing intensely valuable endeavour. 
\subsection{Classical Graph Similarity}
The precise methods used to approximate graph similarity are as varied as the measures defined. However there are two general types of similarity measures commonly employed; whole-graph similarity and vertex-wise similarity \autocite{kollias_fast_2014}.
\subsubsection{Vertex-Wise Similarity}\label{subsec:vertex_sim}
Measures of similarity formulated per vertex maintain a multitude of specific formulations. However, they all follow the same basic paradigm; vertices are considered similar if their neighbourhoods are similar. Such measures are far removed from the definition of graph isomorphism and seldom penalise differences between graphs. Lades et al. \autocite{lades_distortion_1993} presents an object recognition scheme extending classical artificial neural-networks to a more dynamic architecture testing their method by formulating facial-recognition as an elastic graph matching problem. Fortunately in the field of machine vision a degree of error is tolerated as the main design criterion for such systems is efficient calculation of average-case approximations \autocite{lades_distortion_1993}. Nevertheless, the search for more accurate and robust systems yields the extremely fruitful field of machine learning and machine vision we see today.\newline
Analysis of the network structure of the Internet provides a large example of graph similarity measures. Kleinberg \autocite{kleinberg_authoritative_1999} proposes the extraction of improving Internet search queries based on the structure of hyper-links between pages. The neighbourhood of vertices are examined to determine high-quality pages to return as results to user queries. The ranking of pages is defined with regards to 'hub' and 'authority' pages. A 'hub' is a page which references many high-quality authorities and an 'authority' pages is one which is referenced by many high-quality hubs \autocite{kleinberg_authoritative_1999}. Clearly, there exists a large overlap in links between any pair of hub and authority pages sharing a similar neighbourhood; in this sense ranking such pages is a graph similarity problem. Kleinberg \autocite{kleinberg_authoritative_1999} further formulates this computation as an eigenvalue decomposition on the quality values of candidate pages. This approach was later extended by Brin and Page \autocite{brin_anatomy_1998} in the design of the Google search engine. Restriction of scale to a local neighbourhood at any given vertex provides tractability. The success of the vertex-wise description of graph similarity is testament to the power of structural information present in graphs driving Google to forefront fields in computer science such as machine learning \autocite{silver_mastering_2016} and quantum computing \autocite{boixo_characterizing_2016}, \autocite{boixo_simulation_2017}, \autocite{kelly_preview_2018}. 
Zager and Verghese \autocite{zager_graph_2008} use vertex-wise similarity to compute whole graph matching. While results are promising and significantly better than random assignment the Hungarian algorithm employed to perform matching produces imperfect results. This preliminary work highlights the difficultly of the more general problem of graph matching.
Later, Kolias et al. \autocite{kollias_fast_2014} extend vertex-similarity calculations to a graph-global problem presenting scale-able parallel algorithms applicable to graphs two orders of magnitude larger than previously feasible on the order of millions of vertices.

\subsubsection{Whole Graph Similarity}
Whole graph similarity aims to compute how similar to given graphs are in their entirety as a relaxation of the graph-isomorphism problem. The maximum common sub-graph problem \autocite{garey_computers_1979} is a very general metric which captures a natural sense of global similarity. The goal is to find the largest collection of matching vertices and edges present in two graphs. Determining the maximum common sub-graph is intuitively applicable to the analysis of chemical compounds. Hattori et al \autocite{hattori_development_2003} show this method to be effective to not only determine compound similarity in of itself but to discover and classify systemic aspects of biology. Atoms make a natural analogy to vertices as bonds between them do as edges. The maximum common sub-graph is a particularly applicable measure to biological and chemical sciences as many larger compounds are built from well known constituent cliques. As such the techniques employed by Hattori et al. \autocite{hattori_development_2003} make heavy use of heuristics to overcome the computational complexity involved with determining common structure. The computation of the maximum common sub-graph is in general NP-hard \autocite{garey_computers_1979} but in the context of chemical compounds this is not entirely accurate \autocite{hattori_development_2003}. The natural limits on atomic bonding limit complexity somewhat; carbon atoms are only permitted to maintain a maximum of four bonds for instance \autocite{hattori_development_2003}. The use of heuristics significantly increases computation efficiency but degrades solution quality appropriately.\newline
Strug \autocite{strug_automatic_2013} constructs a similar measure of maximum sub-graph comparison but in a generalised context of computer aided design. The problem is framed with regards to hierarchical graphs creating a natural set of useful sub-graphs to be compared thus reducing the overall number of graph comparisons required. Strug \autocite{strug_automatic_2013} employs machine learning inspired methods of kernel matching to approximate solutions degrading solution quality in a similar trade-off for computational performance as other methods.\newline
Papadimitriou et al. \autocite{papadimitriou_web_2010} apply maximum sub-graph matching to identify graph-dissimilarity. By tracking snapshots of the Internet described as a graph one can identify anomalies through the differences between them. Three types of anomalies are searched for: missing connected sub-graphs; missing random vertices, and connectivity changes. Five methods to compute whole graph similarity are considered.\newline
Vertex ranking computes the correlation between two sets of vertices given a pre-computed quality score for each and is the least successful method evaluated \autocite{papadimitriou_web_2010}.\newline
Vertex similarity, is related to methods discussed previously in section \ref{subsec:vertex_sim}. This method performs admirably but is not superior to many other methods.\newline
Vertex and edge-overlap computes similarity between two general graphs. Simply stated, this method considers two graphs similar if they share a large number of edges and vertices. This method utilises edit distance as a measure of similarity capturing a natural intuition for similarity \autocite{papadimitriou_web_2010}. A similar edit-distance measure is employed by Zheng et al. \autocite{zheng_efficient_2015} for searching graph databases. Various filters are employed to reduce the computational complexity of chemical database searches by returning graphs with an edit distance below a carefully chosen bound. This method outperforms contemporary solutions by a significant margin \autocite{zheng_efficient_2015}. With regards to anomaly detection however, vertex overlap fails to detect large missing connected sub-graphs. \autocite{papadimitriou_web_2010}.\newline
Sequence similarity considers graphs to be similar if they share a large number of smaller sub-graphs. Papadimitriou et al. \autocite{papadimitriou_web_2010} utilise their own method of shingling which converts a graph to a linear sequence of tokens. This method is poorly suited to detecting anomalies between graphs but is better suited to structures which are inherently linear. The final method considered by Papadimitriou et al. \autocite{papadimitriou_web_2010} computes a signature of each graph and uses the hamming distance between the two as a measure of similarity. Signature based similarity shows the best performance in detecting anomalies.\newline
Yongkoo et al. \autocite{han_topological_2015} classify graphs by sub-graph matching based on selective applications of exact graph isomorphism tests. Computational efficiency is derived from the intelligent choice of features to be tested. Features are chosen by building a topology where frequent sub-graphs are given an identification tag which can then be queried quickly. Results show significant improvements in accuracy and speed in classifying anticancer behaviour over leading implementations \autocite{han_topological_2015}. However, solutions are still no more than $86\%$ accurate.\newline
Computing measures of graph-wide similarity remains to be a difficult problem approximated by many algorithmic methods. The widespread use of graph similarity measures in a broad-spectrum of problem contexts promote consistent attention from academic and industrial institutions. Despite decades of research no tractable exact algorithm has been found and all known solutions make significant sacrifice to solution quality or scope. This long accepted difficulty makes graph similarity a prime candidate for investigation by quantum computing. 

\section{Quantum Computing}
Quantum mechanics stands among the most notable scientific achievements of the 20\textsuperscript{th} century describing phenomena unlike anything previously encountered. Reminiscent of how electromagnetism spawned the realisation of the digital computer, practical applications of quantum physics promise exotic new technologies. Quantum computing is a major frontier in this regard. We present an introduction to fundamental concepts and historic developments in the field suitable for readers without a strong background in quantum physics.
\subsection{Bits vs. Qubits}
The fundamental difference between classical and quantum computing lies in the physical paradigm used to represent atomic data. In classical computing all data is represented in a register of bits each maintaining the value of zero or one. The exact construction of such a register varies, the fundamental principle however, remains constant. Feynman \autocite{feynman_simulating_1982} makes the elegant observation that the simulation of quantum phenomenon and quantum systems is extremely difficult to perform classically. The natural extension to this observation is whether we can use these inherently complicated quantum phenomenon to perform useful computation. Deutsch \autocite{deutsch_quantum_nodate} later defines the quantum Turing machine by refining the fundamental Church-Turing hypothesis from an abstract construction to the physically related Church-Turing principle. Re-framing the definition of computation with explicit regard to the laws of nature results in a whole new paradigm of computing machines which base their principles on quantum phenomenon. This discovery has driven a quest extending over three decades to construct such machines and to realise the upper echelons of computation permitted by the physical world.\newline
The fundamental building block of quantum computing is the logical abstraction of any two-state quantum system known as a qubit \autocite{nielsen_quantum_2000}. Mathematically we use \textit{dirac} \autocite{dirac_new_1939} notation to represent the state of a qubit; as an example analogous to classical bits a qubit can exist in the $\ket{0}$ or $\ket{1}$ state. One drastic difference is that a qubit can exist in any linear combination of these states called \textit{superposition}
\begin{equation}\label{eq:qubitSuperposition}
\ket{\psi} = \alpha\ket{0} + \beta\ket{1},
\end{equation}
where $\alpha$ and $\beta$ are any complex numbers such that $|\alpha|^2+|\beta|^2=1$. The states $\ket{0}$ and $\ket{1}$ specify a \textit{computational basis} but we could just as easily specify any other pair of basis states as long as they are orthonormal.\newline
To quickly grasp the potential for computational power qubits afford first consider the unit sphere. In classical computing, each bit is permitted to only occupy one of the poles, each corresponding to either a zero or one. In contrast a qubit can occupy any combination of the two states $\ket{0}$ and $\ket{1}$ and can thus occupy any point on surface of the sphere. This representation is known as the \textit{bloch sphere} and is a very useful tool to represent the state of a qubit \autocite{nielsen_quantum_2000}. One might believe that this allows a single qubit to store infinite information as there are infinitely many unique points on a sphere however this is not exactly the case. This superposition of basis states is only present during a quantum computation, when a measurement is made a qubit will collapse onto one of the two basis states. More specifically, a measurement will collapse a qubit into either basis state with probabilities $|\alpha|^2$ or $|\beta|^2$ respectively. This fact captures the core difficulty in the design of quantum over classical algorithms; we can only measure a qubit once. Classical computing is built upon setting and examining the state of bits explicitly and freely. The techniques demanded in quantum computing are more nuanced.\newline
While a measured qubit will only yield a single bit of information, nature is excellent at keeping track of all the quantum information stored by a qubit in superposition. The goal of quantum computing then is to extract as much of this information as possible. 
\subsection{Logic Gates and Quantum Circuits}
A qubit register is simply a collection of multiple qubits just as a classical register is simply a collection of bits. If we have $n$ classical bits then together there are $2^n$ possible values that register can represent but only one is represented at any given time. This is already powerful but the quantum mechanical nature of qubits allow for vastly more power; a register of $n$ qubits in superposition can represent $2^n$ states \emph{simultaneously}.\newline 
Analogous to classical circuits, a quantum circuit is a series of quantum logic gates operating on some initial quantum register state. Typically, gates are defined with regards to their actions on qubits which can be described in matrix form. Incredibly, the only restriction required for a quantum logic gate is that it is a unitary operator. 
\begin{definition}
	A unitary operator ($\hat{U}$) when multiplied by its conjugate transpose ($\hat{U}^\dagger$) results in the identity matrix ($\mathds{1}$) \autocite{nielsen_quantum_2000}.
\end{definition}
For example, the single-qubit NOT gate swaps the amplitudes of a qubit. One can describe the effects of this gate using the following matrix
\begin{equation}
X = \begin{bmatrix}0&1\\1&0\end{bmatrix}.
\end{equation}
We see the effect of this operation on the qubit $\ket{\psi} = \alpha\ket{0} + \beta\ket{1}$.
\begin{equation}
X\begin{bmatrix}\alpha\\\beta\end{bmatrix} = \begin{bmatrix}\beta\\\alpha\end{bmatrix}.
\end{equation}
There are a number of possible sets of quantum logic gates which define universal computation, we present one such set. 
\subsubsection{Phase Shift}
The phase shift gate realises an arbitrary rotation in the computational basis. This gate has no real classical analogy as it operates exclusively on superposition states. Importantly, this gate leaves the probability amplitudes of a qubit untouched. It is described by
\begin{equation}
R_\phi = \begin{bmatrix}1 & 0\\0&e^{i\phi}\end{bmatrix}.
\end{equation}
\subsubsection{Hadamard Gate}
The Hadamard gate acts on a single qubit to produce an equal-superposition of two basis states. One may intuitively think of the Hadamard gate as transforming a qubit 'halfway' between two basis states \autocite{nielsen_quantum_2000}. Successive applications of the Hadamard gate on each in a register of $n$-qubits $\ket{0..0}$ will result in an equal superposition between the entire computational basis represented by $n$-qubits. It is described by the matrix
\begin{equation}
H = \frac{1}{\sqrt{2}}\begin{bmatrix}1&1\\1&-1\end{bmatrix}.
\end{equation}
\subsubsection{Control Not (CNOT)}
The control-NOT (CNOT) gate acts on two qubits known as the control $\ket{x}$ and target $\ket{y}$ qubits. The CNOT gate performs a logical NOT on the target qubit if and only if the control qubit is in the state $\ket{1}$ but leaves the control qubit unchanged. One can see that this is a quantum analogue to the classical XOR-gate. One can express the effect of this gate on two states as $\ket{\psi,\phi} \rightarrow \ket{\phi, \phi \oplus \psi}$ where $\oplus$ is addition-modulo $2$ (the definition of the classical XOR gate). The CNOT gate is described by the following matrix
\begin{equation}
\text{CNOT} = \begin{bmatrix}1&0&0&0\\0&1&0&0\\0&0&0&1\\0&0&1&0\end{bmatrix}
\end{equation}
\subsubsection{Measurement}
Measurement is not strictly a quantum-logic-gate as it permanently alters the state of a qubit (it is not a unitary operation). However, it is included in quantum circuit diagrams as it is necessary for useful computation. One may choose to think of qubit-measurement as a gate with a single qubit input and single classical bit output. Furthermore, if qubits are entangled with each other, the measurement of a single qubit will reveal the state of other qubits simultaneously \autocite{nielsen_quantum_2000}.
\subsection{Notable Algorithms}
In the three decades following the original definition of a gate-based quantum computer a vast number of surprising discoveries have been made and the physical implementation of useful universal quantum computers impend on the near future. While we do not present a summary of the entire field here, we summarise a selection of the most well-known quantum algorithms.
\subsubsection{The Deutsch-Jozsa Algorithm}
Deutsch and Jozsa \autocite{deutsch_rapid_1992} define the first quantum algorithm to find a solution more efficiently than any classical computer. While not directly practical this algorithm provides inspiration for more sophisticated algorithms, serving an excellent introduction to quantum algorithms. Suppose we have a function $f$ which accepts a single $n$-bit number from the range $[0,2^n-1]$ as input and produces either zero or one as an output. Further, this function is guaranteed to either be \textit{constant} meaning it returns the same value for all inputs or \textit{balanced} where exactly half of all inputs produce zero and the other half produce one. Using classical computers a deterministic algorithm requires $\frac{2^n}{2}+1$ queries to reach an answer. The original Deutsch-Jozsa algorithm \autocite{deutsch_rapid_1992} requires only two function evaluations to compute an answer and is deterministic. Later, Cleve et al. \autocite{cleve_quantum_1998} improve the Deutsche-Jozsa algorithm to only require a single query yet remain deterministic. This is clearly a vast improvement over a deterministic classical computer and still a sizeable improvement over stochastic classical algorithms \autocite{nielsen_quantum_2000}. Nielsen and Chuang \autocite{nielsen_quantum_2000} provide an excellent summary of this algorithm and its physical implementation. %The general concept is summarised as follows. We start with an $n$-qubit register (initialised to $\ket{0...0}$ and a second register of a single qubit. We start by performing a Hadamard gate to all qubits in the first register and the single qubit answer register followed by a single function evaluation performed by the gate $U^f$. This unitary operator $U^f$ com At this stage, the query register 'holds' resultant outputs of $f$ for all possible inputs. Finally,
\subsubsection{Shor's Algorithm for Integer Factorisation}
We begin by defining the integer factorisation problem which is equivalent to the discrete-logarithm and order-finding problems.
\begin{definition}
	Given a positive composite integer $N$, what prime numbers when multiplied together produce $N$? \autocite{nielsen_quantum_2000}
\end{definition}
Integer factorisation underpins public-key encryption systems widely used today. The exponential growth in classical complexity involved with factorising large primes have provided security for decades \autocite{chailloux_efficient_2017}. Peter Shor \autocite{shor_algorithms_1994} proposes a quantum algorithm to solve this problem with exponential speed-up over classical computers. The asymptotic run-time of Shor's algorithm grows polynomially with the length of the integer to be factored. Computing the quantum Fourier transform, a quantum analogue to the well established discrete Fourier transform, is core to this algorithm and is inspired by the Detsch-Jozsa algorithm \autocite{nielsen_quantum_2000}.
%The core component of this algorithm is the calculation of the quantum Fourier transform, a quantum analogue to the well established discrete Fourier transform inspired by the Deutsch-Jozsa algorithm \autocite{nielsen_quantum_2000}. 
Shor's algorithm is not deterministic and may require a polynomial number of repeat computations to produce a correct solution with high-probability \autocite{shor_polynomial-time_1997}. Shor's algorithm has thus produced the burgeoning field of quantum cryptanalysis. Cleve et al. \autocite{cleve_quantum_1998} present a method to break the well-known RSA cryptography scheme explicitly.\newline
Quantum cryptography is becoming a more prominent field spawning a large volume of research and commercial development. In the near-term, quantum key distribution schemes face rapid progress and immanent implementation \autocite{chailloux_efficient_2017}. In addition to security concerns, quantum computers pose a threat to relatively novel concepts such as crypto-currencies prompting careful threat analysis \autocite{aggarwal_quantum_2017}. Shor's algorithm to this day remains arguably the most infamous quantum algorithm. 
\subsubsection{Grover's Search}
Unstructured search is a very general problem which is defined simply. Given a finite set of possibilities, find options which satisfy a particular condition. In most practical contexts there exists structure in the search space which is exploited to design efficient algorithms; consider a binary search on a sorted list for example. Grover \autocite{grover_fast_1996} presents a quantum algorithm for unstructured search with time complexity $\mathcal{O}(\sqrt{N})$. Bennett et al. \autocite{bennett_strengths_1997} later show the lower limit of time complexity for this task using a quantum computer is $\Omega(\sqrt{N})$.\newline
Grover's search algorithm starts by preparing a superposition of $n$-qubits representing $2^n = N$ possible items. In this state, some number $M$ of the $N$ possible items will correspond to satisfactory elements and the rest will not. The vector sum over all desirable elements will produce a basis vector $\ket{\alpha}$ and a sum over all other unsatisfactory elements produce an orthogonal basis vector $\ket{\beta}$. In this new basis, a state with a high amplitude in the $\ket{\alpha}$ axis corresponds to a high probability of measuring a satisfactory item. A subroutine termed a 'Grover iteration' is performs a rotation in the $\ket{\alpha}, \ket{\beta}$ basis in the $\ket{\alpha}$ direction and is applied $\mathcal{O}(\sqrt{N})$ times. Specifically the number of iterations is approximately $\pi\sqrt{N}/4$ when searching for a single item \autocite{grover_how_1998}. After this a measurement is made revealing with high-probability an item satisfying our query \autocite{nielsen_quantum_2000}. The exact number of iterations is dependent on each problem instance growing polynomially and optimally \autocite{grover_how_1998}.\newline
The optimal nature of Grover's algorithm is a surprising result motivating further research into quantum computing. Exploiting the exponential scale of quantum information remains a central goal as quantum algorithms are sought to solve previously intractable classical problems.
\subsection{Quantum Supremacy}
Quantum supremacy describes the potential ability for quantum computers to outperform classical computers for some problems. This problem fundamental to the field of quantum computing is difficult to demonstrate for a number of reasons. The performance of a quantum computer must be proven to be superior to any classical computer requiring rigorous proofs of lower-bound complexity for both quantum and classical formulations of a problem. Boixo et al. \autocite{boixo_characterizing_2016} suggest the construction of particular problems simulating quantum phenomenon to aid in this endeavour. Relaxing the definition of quantum supremacy permit the use of benchmarking and practical performance as measures of supremacy inspired by the evaluation of heuristic-based algorithms in classical computing. Formal supremacy is an important milestone in the field, however useful quantum computation is the true goal of the field.\newline
\subsection{Classical Quantum Simulation}
The obvious approach to establish supremacy is to simulate quantum circuits of increasing size. The point at which simulation becomes intractable reveals a lower limit on formal quantum supremacy. Such a bound is difficult to show analytically resulting in the some of the largest single-task computations in history.\newline
The recently developed high-performance distributed quantum simulator \textit{qHiPSTER} \autocite{smelyanskiy_qhipster:_2016} allows for simulation of quantum circuits up to 40-qubits in scale. Smelyanskiy et al. \autocite{smelyanskiy_qhipster:_2016} find memory to be the limiting factor of simulation postulating that simulations greater than 49 qubits are in-feasible until 2024.\newline
Boixo et al. \autocite{boixo_characterizing_2016} build on the work of Smelyanskiy \autocite{smelyanskiy_qhipster:_2016} by simulating 42-qubit circuits using 70 terabytes of memory. Further Boixo et al. \autocite{boixo_characterizing_2016} formalise the task to demonstrate supremacy based on building very dense, partially randomised circuits acting upon a grid of qubits. This introduces both size and depth as simulation bounds. Importantly, such a measure is shown to be efficiently measured on a hypothetical physical quantum processor. The scheme is based on multiple fast evaluations of a circuit revealing a single sampled output. Over a vast number of samples an accurate distribution is achieved.\newline
H\"{a}ner and Steiger \autocite{haner_0.5_2017} simulate a 45-qubit system. Deriving improvements in memory and communication overheads by kernel optimisation and a scheduling algorithm ordering the processing of sub-circuits. This simulation, the largest of its time, required 8,192 nodes and 0.5 petabytes of memory.\newline
Pednault et al. \autocite{pednault_breaking_2017} provide methods to simulate beyond the previously conceived 49-qubit limit using only three terabytes of memory simulating 56-qubit circuits. The scheme employed by Pednault et al. \autocite{pednault_breaking_2017} reformulates circuit simulation as tensor operations cutting down on memory requirements significantly and allows the use of more generalised tensor mathematics. Similar to the scheduling concept used by H\"{a}ner and Steiger \autocite{haner_0.5_2017}, the computation of gates which entangle qubits (introducing an exponential factor) is deferred.\newline
Boixo et al. \autocite{boixo_simulation_2017} further improve their original scheme \autocite{boixo_characterizing_2016} formulating circuit execution as an un-directed graph. A variable elimination scheme is developed reducing average memory requirements. This scheme is especially powerful for smaller circuits allowing workstation simulation of a larger scale than previously possible.\newline
Chen et al. \autocite{chen_64-qubit_2018} apply more aggressive gate partitioning producing exponentially more independent circuits to simulate, allowing better use of distributed resources. Further, Chen et al. \autocite{chen_64-qubit_2018} estimate the computational cost of simulating a 72-qubit circuit, deeming it feasible for a computer identical to that used by Pednault et al \autocite{pednault_breaking_2017}.\newline
Li et al. \autocite{li_quantum_2018} compute both sampling and full simulation tasks for circuits of 49-qubits at 39 and 55-depths respectively. A gate partitioning scheme in addition to dynamic programming methods are used to construct an efficient ordering of sub-tasks reducing memory overheads. 131,072 nodes and nearly one petabyte of memory are used.\newline
Chen et al. \autocite{chen_64-qubit_2018} extend the variable elimination work of Boixo et al. \autocite{boixo_simulation_2017} and apply it in a distributed manner. Circuits of varying sizes and depths are analysed factoring estimated real-world noise with the intent to derive a lower-bound on hardware accuracy. Again, 131,072 nodes and around one petabyte of memory are used.\newline
Markov et al. \autocite{markov_quantum_2018} refine the benchmarks defined by Boixo et al. \autocite{boixo_simulation_2017} further increasing classical simulation complexity. The use of public cloud resources allow Markov et al. \autocite{markov_quantum_2018} to associate a monetary cost with such simulations generating further motivation for implementing such a scheme in quantum hardware. 
\newline\newline
Quantum computing brings focus to two frontiers; the fundamentals of computing as we gain an understanding of qubits, and cutting-edge classical simulation. The few algorithms already discovered bring large implications to the world of computer science encouraging research into finding quantum advantage in increasingly challenging and diverse classical problems. We introduce one such algorithm in the following section designed to make use of both quantum and classical machines in an effort to realise practical quantum computation sooner. 

\section{The Quantum Approximate Optimisation Algorithm (QAOA)}
In section \ref{sec:QAOA} we introduce the algorithm itself, here we discuss prior investigation and experimentation.
\subsection{The QAOA and Quantum Supremacy}
The requirement that a quantum supreme algorithm must exhibit performance superior to any classical algorithm is difficult to formulate. The ultimate goal to implement a quantum supreme algorithm on physical hardware remains an open yet vital problem. Farhi et al. \autocite{farhi_quantum_2014} present an analysis of the QAOA applied to the E3LIN2 problem, a linear equation optimisation intending to demonstrate provable supremacy. Taking $p = 1$ Farhi et al. \autocite{farhi_quantum_2014} provide an analytic formulation to show their result. The QAOA produces answers satisfying $\frac{1}{2} + \Omega(D^{\frac{-3}{4}})$ of the required clauses.\newline
Spurred by this claim, Barak et al. \autocite{barak_beating_2015} present a superior classical algorithm for the same problem satisfying $\frac{1}{2} + \Omega(\frac{1}{\sqrt{D}})$ fraction of the required clauses.\newline
However, the formulation of the QAOA examined by Farhi et al. \autocite{farhi_quantum_2014} is a coarse approximation of the QAA using only a single trotterisation ($p$ = 1). Fahri et al. \autocite{farhi_quantum_2014} suggest a number of possible improvements requiring further analysis such as increasing $p$ and introducing variables for each clause to be optimised. Such improvements are difficult to formulate analytically and hence experimental motivation for such analysis is required to justify such work.\newline
For these reasons Farhi and Harrow \autocite{farhi_quantum_2016} propose the QAOA may still demonstrate quantum supremacy. The QAOA may demonstrate quantum supremacy in two ways. \newline
Farhi and Harrow \autocite{farhi_quantum_2016} argue the inherent quantum nature of the QAOA itself cannot be replicated classically. More specifically if there did exist such an algorithm, Farhi and Harrow \autocite{farhi_quantum_2016} propose the complexity hierarchy would collapse. Secondly, physical quantum computers will allow the QAOA to be run on problem instances prohibitively large for classical computation and hence may generate superior solutions in these instances. This provides evidence for the QAOA to be among the first algorithms implemented in quantum hardware despite classical competition to find superior algorithms.
\subsection{Optimisation of the Classical Component}
The hybrid nature of the QAOA naturally leads to two frontiers of development and research, the quantum and classical components. A large proportion of efforts understandably focus on the quantum component an understanding of the classical component is essential. Guerreschi and Smelyanskiy \autocite{guerreschi_practical_2017} investigate three classical optimisation methods for hybrid quantum algorithms with experimental analysis on the QAOA. Gradient-free and quasi-Newton methods are investigated in an experimental manner. The Nelder-Mead algorithm for gradient-free optimisation is an appropriate method for the general case of a small value of $p$ while the quasi-Newton method using finite derivative methods provides superior results with a matching increase in implementation complexity \autocite{guerreschi_practical_2017}. 
The function space the QAOA generates is typically very difficult to form a gradient in, hence the increase in computational complexity. The work of Guerreschi and Smelyanskiy \autocite{guerreschi_practical_2017} lays a solid foundation for experimental simulation of the QAOA with regards to complexity and performance.
\subsection{Extensions of the QAOA}
Augmentation and extension of the QAOA is possible in addition to direct optimisation.
Instead of solving optimisation problems directly Wecker et al. \autocite{wecker_training_2016} modify the QAOA to find a quantum state which seeks a maximal overlap between the objective function and the ground energy state of the given instance. This approach is applied to the MAX 2-SAT problem. This change in formulation should lead to more accurate results but introduces more complexity into the classical optimisation. The general approach Wecker et al. \autocite{wecker_training_2016} employ uses classical machine learning techniques to train the algorithm with known difficult instances of the MAX 2-SAT problem. The training yields an optimisation schedule describing how the algorithm explores the state space of possible parameter-values and is subsequently tested on a another set of problem instances. This machine learning approach solves instances of MAX 2-SAT and MAX 3-SAT faster than the well known annealing scheme CFLLS \autocite{crosson_different_2014}, demonstrating the largest improvement in the hardest instances. This experimental approach provides concrete data suggesting this modified QAOA is also suitable for near-term implementation. However, a generalisation of this approach and the specific learning methods employed are not discussed in detail. Additionally, performance comparison to the original formulation of the QAOA is not presented.\newline\newline
Rather than adjusting the main objective of the QAOA, Hadfield et al. \autocite{hadfield_quantum_2017} extend the QAOA to a more general framework termed the Quantum Alternating Operator Ansatz (Referred to as QAOA in the original paper out of respect but here as the QAOAn to avoid confusion). The main modification considers general parameterised families of unitaries over the specific family of fixed local Hamiltonians. Loosely speaking this allows the QAOAn to operate on registers describing a wider range of quantum systems which in turn allows the encoding of more varied problems. This is accomplished by varying the formulation of the mixing operator $\hat{U}_B$ to restrict considered bit-strings to valid solutions only. This decomposes the mixing operator into a number of operations which makes the algorithm more powerful but more difficult to implement. Hadfield \autocite{hadfield_quantum_2018} describes a large number of problem specific formulations of the QAOAn including very well known hard classical problems such as the travelling salesman and job scheduling to demonstrate the potential impact of this reformulation of the QAOA. Unlike Farhi and Harrow's  \autocite{farhi_quantum_2016} original approach to demonstrating quantum supremacy analytically, Hadfield et al. \autocite{hadfield_quantum_2017} propose an empirical approach similar to how heuristic algorithms are often analysed. Their reasoning cites the difficulty often found when proving the supremacy of heuristic algorithms directly versus the easier task of bench-marking an algorithm over a suitable well-known set of problem instances. This approach is similar to that of Wecker et al. \autocite{wecker_training_2016}. Disappointingly no experimental data is presented, however this work provides a solid foundation for future investigation and publications of the QAOAn due to the large number of problem specifications provided.\newline
Mash and Wang \autocite{marsh_quantum_2018} propose a more tightly describe alteration to the QAOA for NPO PB problems based on the observation that the mixing operator describes a continuous time quantum walk on the quantum register. By imposing restrictions on this operator bit-strings are partitioned into feasible and in-feasible sets allowing for greater performance. 

\section{Computing the Matrix Exponential}
Simulating the QAOA requires solving the time-dependent Schr\"{o}dinger equation
\begin{equation}
\ket{\Psi(t)} = e^{-iHt}\ket{\Psi(0)},
\label{eq:schrodinger}
\end{equation}
as a central component of the algorithm. Efficient and accurate numerical approximate of Equation \ref{eq:schrodinger} requires efficient and accurate computation of the matrix exponential for large, sparse and complex-valued matrices. We define the matrix exponential for completeness. 
\begin{definition}
	$exp(A) \equiv e^A = \sum_{n=0}^{\infty}\frac{A^n}{n!} = I + A + \frac{A^2}{2!} + \frac{A^3}{3!} + ...,$
	\label{def:matrixExpm}
\end{definition}
Matrix exponential computation is a highly investigated problem with over 35 years of investigation. Despite these efforts there is no single superior method, rather an array of methods each with their own intricacies, benefits and short-comings. Moler and Van Loan \autocite{moler_nineteen_1978} present a now canonical review of $19$ candidate methods. Their work remains so influential Moler and Van Loan \autocite{moler_nineteen_2003} present an updated version 35 years later. Careful algorithm selection and implementation is key to guarantee both performance and numerical accuracy. Such choices are highlighted by implementation for HPC users.\newline
We note that for the diagonal case, computing the matrix exponential involves exponentiation each element of the matrix \autocite{moler_nineteen_1978}. Through eigenvalue decomposition one can diagonalise most matrices reducing the exponential to this simpler case in addition to two matrix-matrix multiplications. However, such a method requires the use of sophisticated eigenvalue solvers, a major computational effort in of itself which may be slower than many other methods in the general case.\newline
Computing the Taylor series directly results in a slow-convergence and low-accuracy in the general case. Using a Pad\'{e} approximation provides better accuracy with less terms, however again, na\"{i}ve application of series expansion results in poor general-case performance. As such more sophisticated methods provide superior performance and are necessary for practical use.
\subsubsection{Scaling and Squaring}
The most popular method available for dense matrices is scaling and squaring. This method relies on a property unique to the matrix exponential
\begin{equation}
e^A = (e^{A/m})^m.
\label{eq:scalSquare}
\end{equation}
Selecting $m$ carefully as the smallest power of two such that $\norm{A}/m \leq 1$ allows for accurate and efficient use of Taylor or Pad\'{e} approximants. Scaling and Squaring is among the most widely used methods due to strong accuracy and elegance. Al-Mohy and Higham \autocite{al-mohy_new_2010} champion this error presenting highly in-depth error analysis and precise algorithms for computing optimal $m$ for IEEE precision arithmetic. Further Higham and Tisseur \autocite{higham_block_2000} present an algorithm for estimating the $1$-norm of arbitrary matrices, a key component of the aforementioned matrix exponential algorithm \autocite{al-mohy_new_2010}. Scaling and squaring is widely implemented in many commercial packages such as MATLAB, Scipy, Mathematica and Expokit \autocite{sidje_expokit:_1998}. Scaling and squaring is best suited for dense matrix exponentiation and is thus poorly suited for distributed implementation due to the use of matrix-matrix products.
\subsection{Computing $e^A \cdot \mathbf {v}$}
In many cases including our own, the computation of the action is the matrix exponential on a vector is our task. This slightly different problem allows for alternative methods to be used with potentially less computational overhead. 
\subsubsection{Scaling and Squaring}
Higham and Al-Mohy \autocite{al-mohy_computing_2011} present a method based on their algorithm \autocite{al-mohy_new_2010} for computing the action of the matrix exponential. Now their algorithm determines through one-norm estimation the optimal scaling value to minimise the number matrix-vector multiplications required. Aside from the estimation of the one-norm this algorithm is currently untested in a distributed memory implementation.
\subsubsection{Krylov Subspace}
The most popular method for large, sparse matrix exponentiation is the Krylov subspace method \autocite{moler_nineteen_2003}. This method approximates the matrix exponential onto a smaller Kylov subspace which then allow for dense matrix methods to be applies efficiently. Re-using the constructed subspace allows successive value of $t$ to be computed at low-cost and as such is considered the canonical sparse-matrix method. Mathamatica's \verb|MatrixExp[A,v]|, Expokit \autocite{sidje_expokit:_1998} and SlepC/PetSc \autocite{balay_petsc_2018}, \autocite{hernandez_slepc:_2005} implement the Krylov subspace method. Furthermore, SlepC/PetSc offer the only commercially available distributed memory implementation of the matrix exponential.
\subsubsection{Chebyshev Approximation}
The Chebyshev approximation method spawns from quantum chemistry where a series approximation of the matrix exponential is computed by the Chebyshev polynomial forming each step \autocite{fang_one_1996, wang_time-dependent_1998, wang_quantum_1999, ndong_chebychev_2010}. Post-multiplying the Chebyshev series with our vector $v$ allows for direct computation of $e^{tA}\cdot\mathbf{v}$ without ever computing a full exponential matrix. Bessel J zero functions form the coefficients of the expansion which allow for fast and accurate convergence. Chebyshev approximation requires either eigenvalue scaling or use of the dense scaling and squaring method. However estimates of the eigenvalues do not effect accuracy greatly, but effects the number of iterations required for convergence \autocite{izaac_pyctqw:_2015}. The Chebyshev method is appealing for HPC applications has only matrix-vector or vector-vector operations are required when eigenvalue scaling is used. This allows for trivial memory parallelisation with minimal communication making this method a strong candidate growing support in HPC applications. Furthermore Auckenthaler et al. demonstrate that the Chebyshev method is superior to the scaling and squaring method \autocite{auckenthaler_matrix_2010} while Bergmaschi et al. \autocite{bergamaschi_efficient_2000} suggest the Chebyshev method is superior to the Krylov subspace method. Despite practical performance, very few packages implement this method: Expokit \autocite{sidje_expokit:_1998} implements Chebyshev approximation for dense matrices, and pyCTQW \autocite{izaac_pyctqw:_2015} a Python package built upon PetSc/SlepC to simulate continuous time quantum walks.\newline
Computing the action of the matrix exponential is critical to efficient and accurate simulation of the QAOA at desktop to cluster-scale. 
\section{Quantum Computing Applied to Graph Similarity Problems}
We are not the first to consider applying quantum computing to classically difficult graph theoretic problems. Lucas \autocite{lucas_ising_2014} provides a vast number of mappings for classical NP-Complete problems to the Ising model of computing. The Ising model of computation can in turn be mapped onto a quantum annealer through the quantum adiabatic algorithm (QAA)\autocite{farhi_quantum_2001}. Hen and Young \autocite{hen_solving_2012} map the graph isomorphism problem to a quantum annealer with experimental results. Hen and Young analyse experimental implementation supporting the conjecture that quantum annealers can discriminate between non-isomorphic graphs \autocite{hen_solving_2012}. Furthermore they suggest that hardware and simulation improvements will better validate their claims. 
\newline\newline
Graph similarity is an openly difficult problem to compute classically despite the vast practical use it demonstrates. As the field of quantum computing matures historically intractable problems are explored with often surprising results expanding the scope of what feasible computing permits. We also provide a brief introduction to quantum computing assuming no prior knowledge of quantum physics in addition to a few historic quantum algorithms. We present the Quantum Approximate Optimisation Algorithm as a general method to approach hard combinatorial optimisation problems; optimisation and extensions to the algorithm are an open area of research. Seeking the limits of computation will always be integral to the field of computer science, an endeavour extending into the realm of quantum computing. 
\chapter{Methods}
\section{The Quantum Approximate Optimisation Algorithm (QAOA)}
The Quantum Approximate Optimisation Algorithm (QAOA) stands unique among many other quantum algorithms as it is in essence, a Monte-Carlo Algorithm. The solution quality various for a fixed amount of execution but has in practice shown excellent performance inspiring academic and industrial investigation \cite{otterbach_unsupervised_2017}. To encode a problem into the QAOA we require the following.
\begin{itemize}
	\item A problem Hamiltonian $\hat{C}$ which implements the cost function of our candidate problem
	\item A mixing Hamiltonian $\hat{B}$ which defines which bit-strings are permitted for evaluation by the algorithm.
	\item A suitable amount of decomposition ($p$ value)
	\item An initial state generation scheme
\end{itemize}
A QAOA iteration is defined as $p$ applications of successive $\hat{U}_C$ and $\hat{U}_B$ unitary operators. At each application of $\hat{U}_C, \hat{U}_B$ a corresponding corrective parameter $\gamma_i, \beta_i$ is applied to approximate an annealing scheme while dropping the adiabatic requirement of the QAA. Repeated sampling generates an approximate expectation value which is fed into a parameter optimisation scheme to select new $(\vec{\gamma}, \vec{\beta})$. This process repeats until a termination criteria is met or the quantum compute time is exhausted.\newline
Hadfield \cite{hadfield_quantum_2018} and Marsh and Wang \cite{marsh_quantum_2018} introduce restrictions upon the mixing operator restricting the algorithm to feasible solutions at a cost of higher gate depth. Marsh and Wang \autocite{marsh_quantum_2018} reverse the order of applying $\hat{U}_C$ and $\hat{U}_B$ in order to account for the non-trivial maximal energy state introduced by applying mixing restrictions. We provide a novel mapping of permutation based problems such as graph similarity via edge-overlap to the QAOA.
\subsection{General Problem Encoding}
Problem operators are encoded into annealing schemes such as the QAA through the Ising spin-glass model of computation \autocite{santoro_theory_2002} which allows the definition of problem Hamiltonians of the following form.
\begin{equation}\label{eq:spin}
\hat{H} = \mu\sum_{i}h_i\sigma_i + \sum_{i,j}\sigma_i\sigma_j.
\end{equation}
Problems are encoded using pseudo-boolean functions \autocite{boros_pseudo-boolean_2002} and permit a wide variety of difficult problem definition \autocite{lucas_ising_2014}. The first term defines the logic for setting a particular spin $x : \{-1, 1\}$. In the gate-based QAOA each 'spin' corresponds to a qubit encoding a binary variable $x : \{0, 1\}$.
\section{Graph Similarity via QAOA}
\subsection{A Canonical Mapping}
The standard method to map permutation based problem use unary encodings of $n^2$ qubits. To illustrate, consider for an arbitrary pair of graphs each consisting of $V$ vertices a $V^2$ string of qubits
\begin{equation}\label{eq:unaryPerm}
[x_{1,1}x_{1,2}\dots x_{1,v}] [x_{2,1}x_{2,2}\dots x_{2,v}]\dots [x_{v,1}x_{v,2}\dots x_{v,v}],
\end{equation}  
where a binary variable $x_{i,j}$ represents vertex $j$ in $G_2$ mapping to vertex $i$ in $G_1$. Under such a scheme the vast majority of the $2^{V^2}$ bit-strings are not feasible since $n! << 2^{n^2}$. Such an encoding quickly becomes intractable for both simulation and physical quantum hardware where the current state of the art is around $72$ qubits \autocite{markov_quantum_2018} permitting graphs of less than eight vertices. Hadfield \autocite{hadfield_quantum_2018} provides a QAOA mapping for the travelling salesman and other permutation problems but requires mixing constraints to enforce legal candidate solutions. A unary encoding of graph similarity through edge overlap suffers from the same issue.
\subsection{Defining $\hat{C}$}
We propse a novel compact encoding requiring $\mathcal{O}(\lceil log_2(V!)\rceil)$ qubits prepared in $\mathcal{O}(V^3)$ operations. We define our problem Hamiltonian as
\begin{equation}\label{eq:EOHam}
\hat{C} = A\sum_{\sigma \in V!}\sum_{i=1}^{V}\sum_{j=1}^{V}\sum_{u=1}^{V}\sum_{v=1}^{V}(d_1(i,j)d_2(u,v))(x_{i\sigma(u)}x_{j\sigma(v)}).
\end{equation}
This provides us with the edge overlap for all $V!$ permuted labels of $u,v \in G_2$. Our problem operator $\hat{C}\ket{\psi} = C(x)\ket{\psi}$ that is the application of our cost function to all bit-strings $x$. Our diagonal problem operator $\hat{U}_C$ becomes
\begin{equation}\label{eq:GsimUC}
\hat{U}_C(\gamma)\ket{\psi} = e^{-i\gamma c(x)}\ket{\psi}.
\end{equation}
Welch et al. \autocite{welch_efficient_2014} builds on the work of Childs \autocite{childs_quantum_2004} providing a method to implement $e^{-i\gamma\hat{C}}$ efficiently without the use of additional ancillary qubits where each element of the diagonal $\hat{C}$ is itself efficiently computable. We provide a graphical representation of $\hat{C}$ in Figure 3.1
\begin{center}
	\begin{figure}\label{fig:UC}
		\centering
		\includegraphics[width=0.5\textwidth]{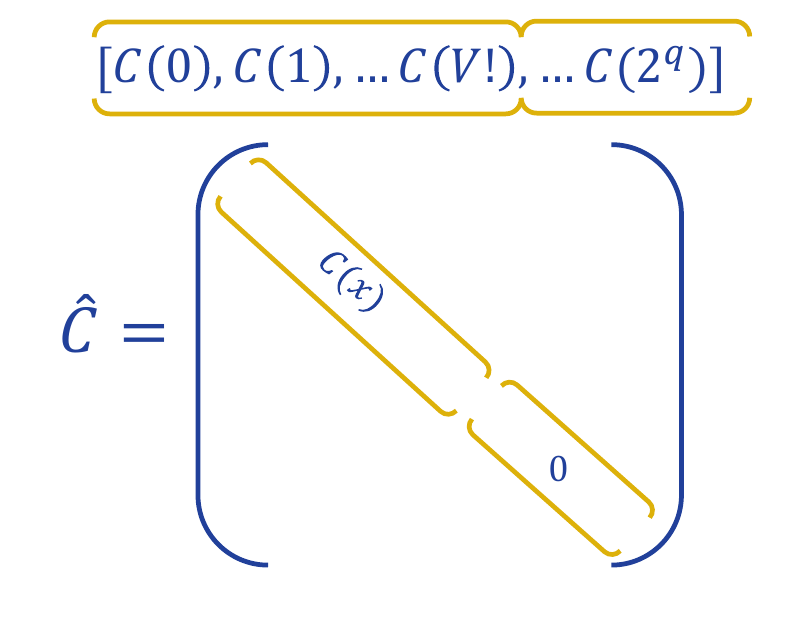}
		\caption{$\hat{C}$}
	\end{figure}
\end{center}
\subsection{Defining $\hat{B}$}
 We define a canonical mixing operator \autocite{farhi_quantum_2014-1}
\begin{equation}\label{eq:Mixing}
\hat{B} = \sum_{i=1}^{n}\sigma_i^x,
\end{equation}
where $\sigma^x$ is the Pauli-x matrix, the quantum equivalent of the NOT gate. Our mixing operator $\hat{U}_B$ becomes
\begin{equation}\label{eq:GsimUB}
\hat{U}_B(\beta)\ket{\psi} = e^{-i\beta\sum_{i=1}^{n}\sigma_i^x}\ket{\psi} = e^{-i\beta\hat{B}}\ket{\psi}.
\end{equation}
We provide a graphical representation of $\hat{B}$ in Figure 3.2.
\begin{center}
	\begin{figure}\label{fig:UB}
		\includegraphics[width=\textwidth]{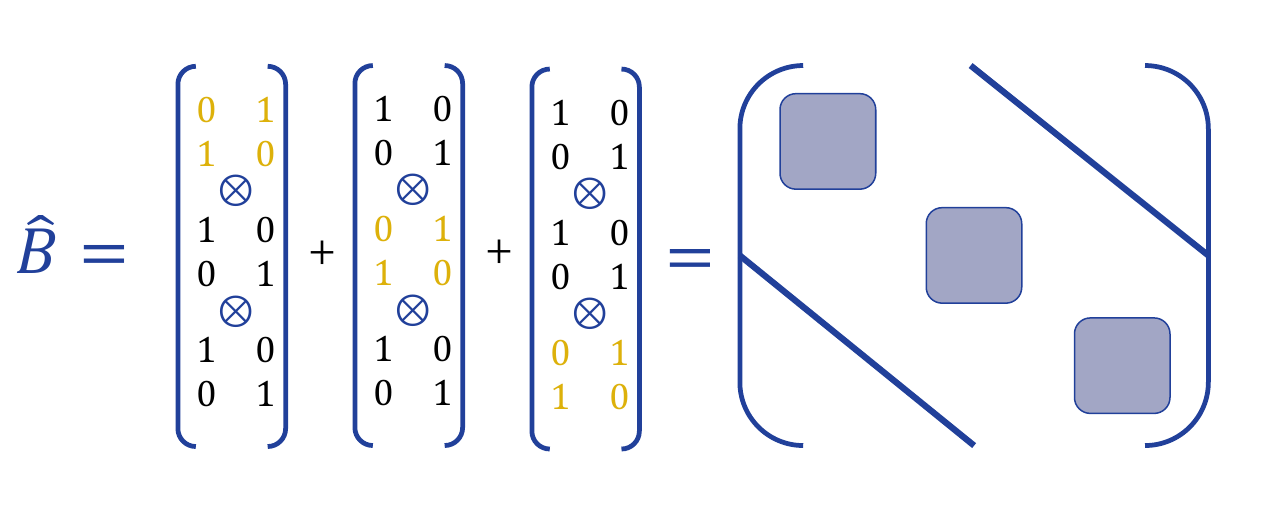}
		\caption{$\hat{B}$}
	\end{figure}
\end{center}
\subsection{Permutation Mapping}
Abrams and Lloyd \autocite{abrams_simulation_1997} provide a method to encode a superposition of $n$ elements using $\mathcal{O}(n^3)$ operations. Chiew et al. \autocite{chiew_graph_2018} provide a full circuit encoding edge overlap values using $\mathcal{O}(n^2log^2(n))$ operations and $\mathcal{O}(nlog(n))$ qubits allowing efficient bit-string mapping over the range $[0,n!]$ to edge-overlap values. To index all such values in $\hat{U}_C$ we require $\lceil log_2(V!)\rceil$ qubits (termed $q$) since for all $n > 2, n! > 2^n$. We trivially map the remaining 'tail' values not included in the range $2^q$ to zero.\newline
Typically the initial state of a QAOA iteration is the poorest possible solution. In our case this difficult to determine thus we use a superposition of all $q$ qubits.
This compact representation suffers from these 'tail' values. We plot the first $2000$ terms in Figure \ref{fig:Tail}.
\begin{figure}[htbp]
	\centering	
	\includegraphics[width=0.5\textwidth]{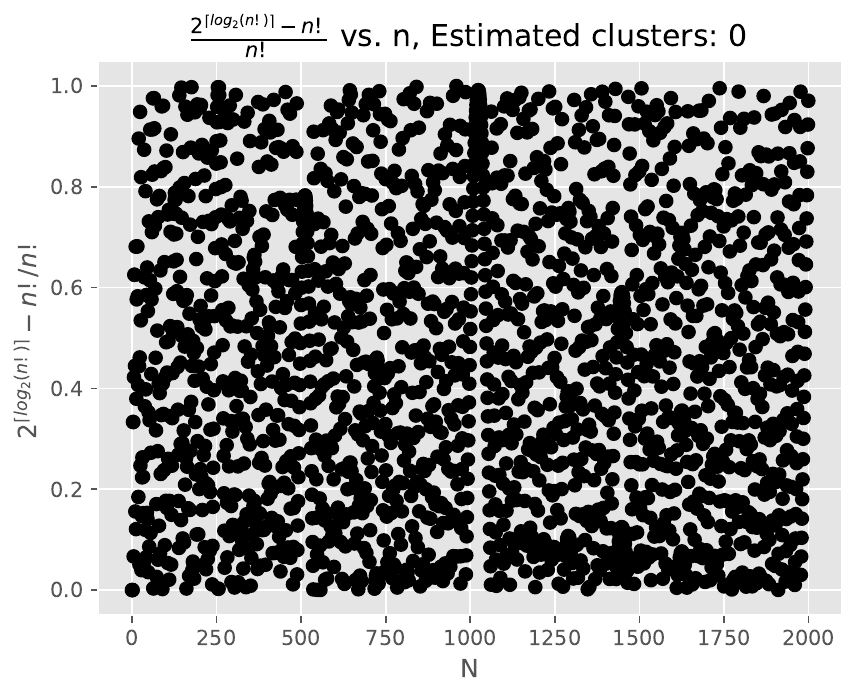}
	\caption{The proportion of infeasible to feasible bit-strings}
	\label{fig:Tail}
\end{figure}
These degenerate values are frustrating from an optimisation standpoint since the QAOA will always start from an equal probability of measuring any bit-string. We cannot apply per-qubit mixing restrictions. The proportion of the tail reveals a uniform distribution (Kolmogorov-Smirnov test \cite{jones_scipy:_2001} for uniform distribution, p-value $1.46e-08$). In Figure \ref{fig:Tail}, a DB-clustering \cite{jones_scipy:_2001} of points confirm this observation further, labelling all points as noise. For interest we present a table of initial data points and present a series expansion of this phenomena in Appendix \ref{ap:Tail}. 
\subsection{Test-Case Generation}\label{sec:TestCase}
Graphs are generated using the standard Erd\"{o}s-R\'{e}nyi method; we randomly assign each edge with $50\%$ probability. Instead of testing two random graphs, we deform an initial graph in order to create difficult test-cases where graphs are quite similar. This allows us to test for overall correctness for each individual trial and over multiple trials. We consider the following deformations
\begin{itemize}
	\item Isomorphism: The original graph is compared to itself. The first element of the problem Hamiltonian is guaranteed to be zero.
	\item Vertical Flipping: The original graph is mirrored in the vertical axis. This generates very dissimilar graphs with minimal effort shown in Figure \ref{fig:flip}
	\item Edge Addition: $v$ non-existent edges are added at random
	\item Edge Removal: $v$ existent edges are removed at random
	\item Edge Addition and Removal: $v$ edges are added and removed randomly
\end{itemize}
\begin{figure}[htbp]
	\centering	
	\includegraphics[width=0.5\textwidth]{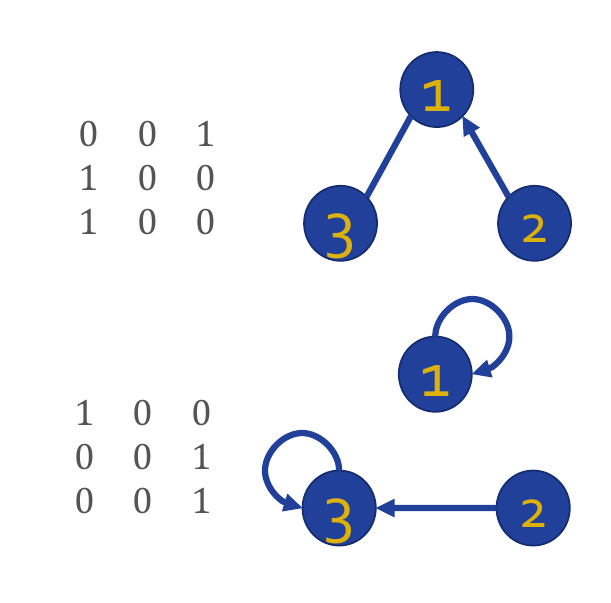}
	\caption{Adjacency Flip}
	\label{fig:flip}
\end{figure}
\section{Simulation Design}
Qolab (Quantum Optimisation Laboratory) is a flexible simulation package for the QAOA and related algorithms from desktop to cluster scale implemented in C. Our package allows for a single interface to both desktop and cluster-based code at the highest level of abstraction possible. The user is required to implement only the cost function used to define $\hat{U}_C$ and any walk masks applied to $\hat{U}_B$ if using the modified QAOA described by Marsh and Wang \autocite{marsh_quantum_2018}. We present an extensive description in Appendix \ref{ap:Qolab}. Exacting implementation targets maximal single node and desktop performance. Multiple processors are utilised by distributing the state-space. \newline
Matrix operations are handled using Intel's Math Kernel Library \autocite{noauthor_intelr_nodate} and we store matrices in compressed column form. \autocite{duff_sparse_1989}.
%Prior code attempts?
\subsection{Simulation Components}
Direct state-vector simluation avoids a large amount of overhead present in contemporary gate-based simulations such as IBM's Qiskit Alpha \cite{ibm_qiskit_nodate}. We capture the high-level structure of our QAOA simulation in Algorithm \ref{alg:Qolab} and a similar description of a walk-restrictive QAOA in Appendix \ref{ap:Pseudo}. The following sections describe the implementation and computational complexity of each component.
\begin{algorithm}
	\caption{Qolab Overview}\label{alg:Qolab}
	\begin{algorithmic}[1]
		\Procedure{QAOA\_Core}{numQubits, P, optimisationMethod, C(x)}
		\State $\hat{U}_C \gets $\textbf{genUC}(C(x))
		\State $\hat{U}_B \gets $\textbf{genUB}(numQubits)
		\State $\vec{\gamma}, \vec{\beta} \gets $\textbf{initialParameters()}
		\While{terminateTest()}
			\State $\ket{\psi} \gets $\textbf{initialState}
			\For{$i = 0$ to $p$}
				\State $\ket{\psi} \gets \hat{U}_C(\gamma_i)\ket{\psi}$
				\State $\ket{\psi} \gets \hat{U}_B(\beta_i)\ket{\psi}$
			\EndFor
			\State $F_p \gets \textbf{Measure}(S)$
			\State $\vec{\gamma}, \vec{\beta} \gets $ \textbf{updateParameters($F_{P}$)}
		\EndWhile
		\EndProcedure
		\State \textbf{Report}()
	\end{algorithmic}
\end{algorithm}
For clarity we make the following definitions:
\begin{itemize}
	\item $V$ - The number of vertices in candidate graphs
	\item $q$ - A number of qubits. For graph similarity this is $2^{\lceil log_2(V!)\rceil}$
	\item $n$ - A number of classical items
	\item $p$ - The amount of trotterisation applied to the QAOA
	\item $P$ - The number of processors present in our cluster
	\item $n_P$ - The size of the state-vector maintained by each processor
	\item $\ket{\psi}$ - The state-vector used by the QAOA
\end{itemize}
\section{Generating $\hat{U}_C$}
\subsection{Permutation Generation}
The power of the QAOA derives performance from the ability to process an exponential number of inputs to our cost function simultaneously. This requires the generation of all $q!$ bit-strings and application to the provided cost function $C(x)$. The time complexity is $\mathcal{O}(q!)$. We are able to generate the $k$-th permutation through a Lehmer code based algorithm. The time complexity to generate all permutations is $\mathcal{O}(log_2(n!))$. Pseudocode is presented in Appendix \ref{ap:Pseudo}. Heap's algorithm has ldong been considered the fastest method to generate all permutations of $n$ elements\autocite{sedgewick_permutation_1977}. Our sub-optimal k-th based permutation scheme allows for independent generation of permutations across distributed processes achieving optimal $\mathcal{O}(n!/P)$ work per process. Further this scheme matches the order generated by our theoretical QAOA encoding.\newline
Nevertheless, since we require a $2^{\lceil log_2(v!)\rceil}$ state-space generations of the $q!$ permutations are reduced in the asymptotic complexity.
\subsection{Generating $\hat{C}$}
The cost function operator $\hat{U}_C$ is generated by evaluating the provided cost function for all permutations of $q$ bits resulting in a $2^q \times 2^q$ diagonal matrix called $\hat{C}$. The time complexity to build $\hat{C}$ is $\mathcal{O}(q! \times Poly(q))$ where $Poly(q)$ is the cost-function itself. For edge-overlap this is $\mathcal{O}(V^2)$. There is little need to parallelise the cost function as it is computationally dominated by the number of bit-string considered.
\section{Generating $\hat{U}_B$}
$\hat{U}_B$ is less trivial to simulate. The definition of $\hat{B}$ given by Equation \ref{eq:B} can be re-phrased with respect to matrix elements directly. The operator defines valid transitions from bit-string to bit-string. 

Without mixing restrictions the original definition connects all bit-strings which differ by a single element. A na\"{i}ve implementation computes the original $\hat{B}$ matrix directly by performing a sequence of Kronecker products. However under this new observation the generation method becomes trivially distributable. We present pseudocode in Algorithm \ref{alg:UB} which includes the application of walk masks described by Marsh and Wang \autocite{marsh_quantum_2018}. When no masks are applied the generated $2^q \times 2^q$ matrix matches that defined in Equation \ref{eq:B} which represents a maximally connected hypercube between candidate bit-strings.  
\begin{algorithm}
	\caption{$\hat{B}$ Generation}\label{alg:UB}
	\begin{algorithmic}[1]
		\Procedure{GenerateB}{numQubits, mask}
			\State UB $\gets \emptyset$
			\State nnz = 0
			\For{$i \gets 0$ to $2^q$}
				\State colB[$i$] $\gets$ nnz
				\For{$j \gets 0$ to $2^q$}
					\State row $\gets i \lor (1 << j)$
					\If{mask(row)}
						\State values[nnz] $\gets$ $1$
						\State colE[$i$] $\gets$ nnz
						\State rowInd[nnz] $\gets$ col
						\State nnz $\gets$ nnz + $1$
					\EndIf
				\EndFor
			\EndFor
		\EndProcedure
		\State \textbf{Report}()
	\end{algorithmic}
\end{algorithm}
\section{Distribution Scheme}
The distribution of our simulation is based on the state-vector itself. We initially consider a scheme where each process is responsible for a unique sub-set of the $2^q$ elements defined as
\begin{equation}\label{eq:decomp}
n_P = \lfloor\frac{2^q}{P}\rfloor.
\end{equation}
This decomposes seamlessly where $log_2(P) \in \mathcal{Z}$ since the state-space grows exponentially. When this is not true we append the remaining items to the final process. The number of appended items is defined as
\begin{equation}\label{eq:decomp_remain}
n_{Pend} = 2^q mod P,
\end{equation}
which by definition grows $\mathcal{O}(P)$. Since $P \ll 2^q$ we deem this acceptable. This decomposition has the benefit of optimal load balancing between all processes in the general case. This simple method works well for distributing the state-vector and realising $\hat{U}_C$ however $\hat{U}_B$ requires more nuance.
\section{Function Evaluation}
Function evaluation method realises a single QAOA iteration defined in Equation \ref{eq:state}. The generation of an initial state is in our case an equal superposition of $q$ qubits. This is represented as a $2^q$ vector of $\frac{1}{\sqrt{2^q}}$ indicating an equal chance of measuring any candidate bit-string. The state-vector $\ket{\psi}$ is distributed equally across all processes according to Equation \ref{eq:decomp}. We describe each part of the function evaluation separately.
\subsection{Setup}
The root node holds the set of parameters ($\vec{\gamma}, \vec{\beta}$) broadcast to all processes. The communication overhead grows $\mathcal{O}(p)$.
\subsection{Applying $\hat{U}_C$}
We need to realise the action 
\begin{equation}\label{eq:UCv}
e^{-i\gamma\hat{C}}\ket{\psi}.
\end{equation}
Since $\hat{C}$ is a diagonal matrix, we can compute $e^{-i\gamma\hat{C}}$ by exponentiating each element of $\hat{C}$ and apply a point-wise multiplication with $\ket{\psi}$ or dot-product with $\ket{\psi}^T$. This requires $\mathcal{O}(2^q)$ operations but is embarrassingly parallel. We make use of vectorisation and parallelisation afforded by multi-core processors. 
\subsection{Applying $\hat{U}_B$}
We need to realise the action
\begin{equation}\label{eq:UBv}
e^{-i\beta\hat{B}}\ket{\psi}.
\end{equation}
The non-trivial structure of the $\hat{B}$ matrix requires a sophisticated method to compute the action of the matrix exponential; a difficult problem withstanding four decades of continual investigation. We implement the Chebyshev expansion method similar to pyCTQW \autocite{izaac_pyctqw:_2015} and depicted in Equation \ref{eq:Cheby}. This method requires only matrix-vector operations to realise Equation \ref{eq:UBv} without storing the final exponentiated matrix at any point. In general,
\begin{equation}\label{eq:Cheby}
e^{tA} = e^{(\lambda_{max}+\lambda_{min})t/2}[J_0(\alpha)\phi_0(\tilde{A}) + 2\sum_{n=1}^{\inf}i^nJ_n(\alpha)\phi_n(\tilde{A})],
\end{equation}
where $\lambda_{max}, \lambda_{min} \in \mathbb{C}$ are eigenvalues of $A$ with maximal and minimal real parts. $\alpha = i(\lambda_{min} - \lambda_{max})t/2$ and $\phi(\tilde{A})$ are the Chebyshev polynomials which computed recursively as
\begin{eqnarray}
\phi_0(\tilde{A}) = I,\\
\phi_1(\tilde{A}) = \tilde{A},\\
\phi_n(\tilde{A}) = 2\tilde{A}\phi_{n-1}(\tilde{A}) - \phi_{n-2}(\tilde{A}).
\end{eqnarray}
Similar to other approximation methods normalisation of $\lambda \in [-1, 1]$ encourages minimal convergence time and therefore we scale our matrix
\begin{eqnarray}
\tilde{A} = \frac{2A - (\lambda_{max} + \lambda_{min})I}{\lambda_{max} - \lambda_{min}}.
\end{eqnarray}
The use of Bessel function zeros as coefficients means that $J_n(\alpha) \approx 0$ when $n > |\alpha|$ and therefore convergence occurs after $|\alpha| \propto t$ terms. Similar to Izaac and Wang \autocite{izaac_pyctqw:_2015} we terminate after the condition
\begin{equation}
|2J_n(\alpha) \leq \epsilon|,
\end{equation}
where $\epsilon$ is chosen to be $10^{-18}$.\newline
However, without knowledge of the maximal and minimal eigenvalues we would be required to solve for these values, a time-consuming and laborious calculation for large matrices. Since we are computing the matrix exponential for only a particular type of matrix (non-negative, Hermitian and symmetric) we find that
\begin{equation}\label{eq:MinMaxEigen}
\lambda_{min, max} = \pm q.
\end{equation}
A derivation of Equation \ref{eq:MinMaxEigen} is found in Appendix \ref{ap:Proofs}. Furthermore, the only downside to over-estimating the range of these eigenvalues is computation time, not accuracy. In the case of a restricted $\hat{B}$ matrix the symmetric, Hermitian and non-negative properties hold and can only have fewer non-zero elements than a fully connected, canonical $\hat{B}$ hence our method is appropriate for simulation of the QAOA.\newline
When considering the decomposition of $\hat{B}$ among multiple processes we consider three options. In all cases the total work required is $O(n^2/p)$ since all values must be considered in the multiplication.
\subsubsection{Column Distribution}
Each process builds and operates upon a section of columns of our overall matrix governed by Equation \ref{eq:decomp}. This produces optimal load balancing between all processors due to the symmetry of $\hat{B}$. A matrix vector multiplication sees each process operate upon a subset of the state-vector but produces a full $2^q$ length vector. These need to be combined across all processes resulting in $\mathcal{O}(n^2/p + n + p)$ work. The difficulty here is that after each multiplication each process contains a full state-vector with $1/p$ of the total solution which must be reduced and scattered to all processes in order to continue work. If we consider $\alpha$ as the time for a single scalar operation, $\lambda$ as the communication latency and each complex value is $32$ bytes long and $\beta$ as the buffer length of a message the total execution time is given by 
\begin{equation}\label{eq:ColScale}
\Theta(\alpha n \lceil\frac{n}{p}\rceil + (p - 1)(\lambda + \frac{32}{p\beta})).
\end{equation}
Moreover the scalability of this scheme is $n^2p$. Since $n = 2^q$ this scheme scales poorly.
\subsubsection{Row Distribution}
Similarly, we could decompose across rows which again does not alter the distribution scheme of $\hat{C}$. In this scheme, after a matrix-vector multiplication each process will hold a separate section of the resultant vector which again needs redistributing to all processes through an all-gather operation. The total execution time is given by
\begin{equation}\label{eq:RowScale}
\Theta(\alpha n \lceil\frac{n}{p}\rceil + \lambda\lceil log p\rceil + \frac{32n}{\beta}).
\end{equation}
We see a similar scalability function of $n^2p$ which is again very infeasible but slightly better. 
\subsubsection{Checker-board Distribution}
We could alternatively decompose $\hat{B}$ across both rows and columns forming a grid of processors. The input vector is distributed across the first row of processes and each subsection is copied down each column. After a vector multiplication operation the result is then reduced across each row. A process in each row will contain a separate section of the final state vector prepared for a $\hat{U}_C$ operation. This scheme exhibits a total execution time of 
\begin{equation}\label{eq:CheckerScale}
\Theta(\frac{\alpha n^2}{p^2} + \lambda\frac{32nlog p^2}{\sqrt{p^2}\beta}),
\end{equation}
if we consider the squared number of processes required to implement this method. More machines are required however the scalability is much better and is given by $n^2log^2p^2$. 
\section{Measurement}
After our final QAOA state is prepared $\ket{\psi}$ results must be communicated back to the root process. The na\"{i}ve method would involve each processor sending its final discrete component of the state-vector to the root process. This would require $\mathcal{O}(2^q)$ communication at the root process. We formulate three schemes for final state measurement.\newline
The expectation value can be computed in an embarrassingly parallel fashion on each process and reduced at the root process. This scheme requires $\mathcal{O}(P)$ communication of a single value. \newline
Aggregating over the state-space and cost-function allows us to reduce the entire state-vector which grows $\mathcal{O}(2^q)$ to a smaller distribution of unique cost-function values which is problem-dependent in size. By definition, the QAOA demands a polynomial cost function however and for graph simialrity this is $\mathcal{O}(v^2)$ which is significantly more feasible.
\section{Optimisation}
The well-established nlopt non-linear optimisation library \autocite{johnson_nlopt_2011} to implement local and global derivative-free optimisation schemes. This library implements useful features such as run-time constraints. Notably IBM's Qiskit \autocite{ibm_qiskit_nodate} utilises this package and as such Qolab matches the state of the art in functionality. Johnson \autocite{johnson_nlopt_2011} provides an explanation of each available method. Initialising multiple processors to complete independent optimisation on the same problem is a trivial scheme providing no increase in problem size but increases for evaluation.
\chapter{Results}
We simultaneously investigate our mapping of graph similarity to the QAOA providing performance analysis of our Qolab package. Source code is available in Appendix \ref{ap:Qolab}, the full set of original data-files, aggregate data and code to generate all plots is available in Appendix \ref{ap:Data}. The exponential complexity of quantum state-vector simulation places a unique strain on performance analysis since we can by definition do no better than $\mathcal{O}(2^q)$. We provide correctness investigation of the QAOA up to $19$ qubits and performance analysis for $22$ qubits. We investigate eight different classical optimisation schemes.
\section{Definitions}
We consider a number of performance metrics defined below. We cannot present a standard approximation factor for our problem since the optimal value of our cost function is zero. We provide alternate correctness measures which illuminate the same trends.
\begin{itemize}
	\item Number of Evaluations - We consider an evaluation the generation of a single $\ket{\vec{\gamma}, \vec{\beta}}$ state. This measure provides a rough estimate of how many logical (aptly sampled) quantum states are required. Lower is better.
	\item Sample Error - We use the expectation value of our system until the last iteration which samples the resulting state space $V^2$ times. The best value observed is compared to the known optimal as a standard approximation ratio. Higher is better.
	\item Expectation Error - The difference between the optimal solution and our expectation value normalised over the minimal possible value to account for changing graph sizes. Lower is better. A value of zero indicates certain measurement of the optimal solution 
	\item Classical Comparison - We compare the final expectation value of our state with the corresponding expectation value of random solution selection. A negative value indicates better likelihood of a better solution from the QAOA. Higher is better
	\item Expectation Improvement - The net improvement in expectation value from the state generated by initial parameters. Higher is better.
\end{itemize}
\subsection{Methodology}
We present results from three datasets:
\begin{itemize}
	\item An initial set using an old cost function containing $16, 190$ trials. This alternate cost function sets the optimal value to zero, sub-optimal solutions to negative values and pads infeasible solutions with a maximal penalty of $-V^2$.
	\item A set of directed graph results containing $10, 800$ trials.
	\item A set of undirected graph results containing $2500$ trials.
\end{itemize}
In all trials we consider all deformation methods described in Section \ref{sec:TestCase} equally and are aggregated together to provide an overall impression of QAOA performance.\newline
The classical parameter optimisation component of the QAOA is critical to correctness and efficiency. Guerreschi and Smelyanskiy \autocite{guerreschi_practical_2017} provide the most in-depth analysis to date, we investigate eight different optimisation schemes. We provide extensive plotting of our results in Appendix \ref{ap:Data}. We explicitly discuss the Nelder-Mead \cite{nelder_simplex_1965}, Subplex \cite{rowan_functional_1990}, BOBYQA \cite{powell_direct_1998}, Multi-Level Single-Linkage (MLSL) \cite{jones_lipschitzian_1993} and Dividing Rectangles (DIRECT) \cite{rinnooy_kan_stochastic_1987} methods. The Subplex algorithm is a variation on the legendary Nelder-Mead simplex algorithm \cite{nelder_simplex_1965} designed to target noisy function spaces. Simplex-based algorithms maintain a simplex of $n + 1$ points to optimise $n$ parameters. The BOBYQA algorithm estimates trust regions by forming quadratic models of the cost function. MLSL maintains a variety of local optimisations starting from a series of random points using heuristics to avoid repeated searching of local optima. The DIRECT algorithm is a deterministic global search algorithm based on dividing the search space into increasingly smaller hyper-rectangles.
\section{Alternate Cost Function}
We see the effect of our 'tail' values most prominently when considering a slightly different cost function. Mapping optimal solutions to zero and penalising missing edges to negative values and mapping non-solution bit-strings to a minimal value. However, since these values now contribute to our expectation value, the overall performance is diminished depicted in Figure \ref{fig:costFunctions}. Our final cost function performs significantly better where error grows significantly slower for all amounts of decomposition. This trend holds across all optimisation methods tested.
\begin{figure}
	\begin{tabular}{cc}
		\includegraphics[width=0.5\textwidth]{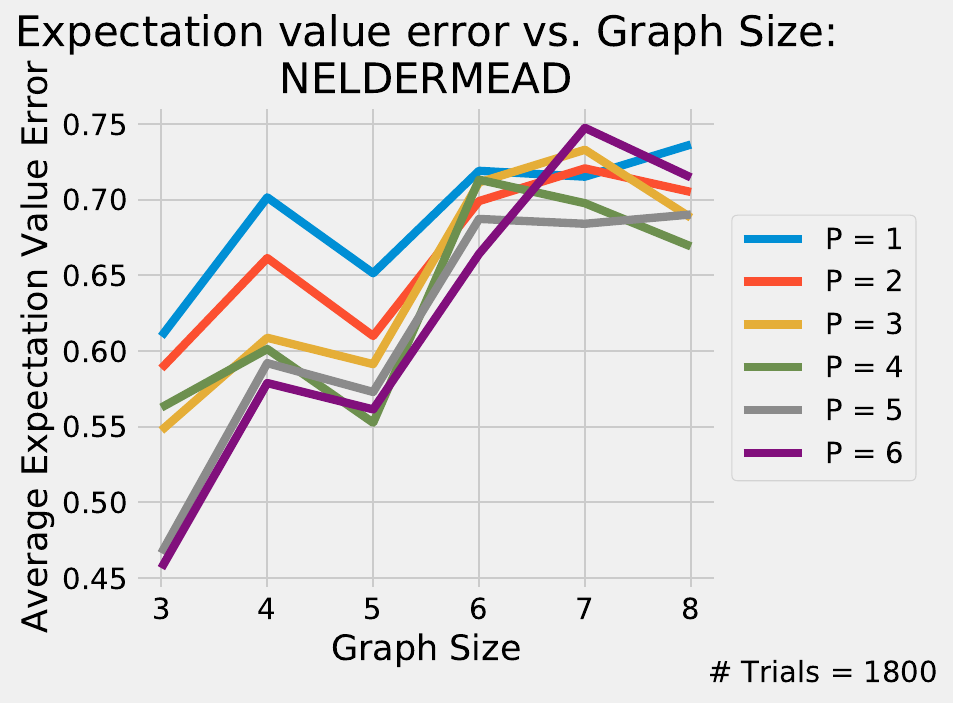} &   \includegraphics[width=0.5\textwidth]{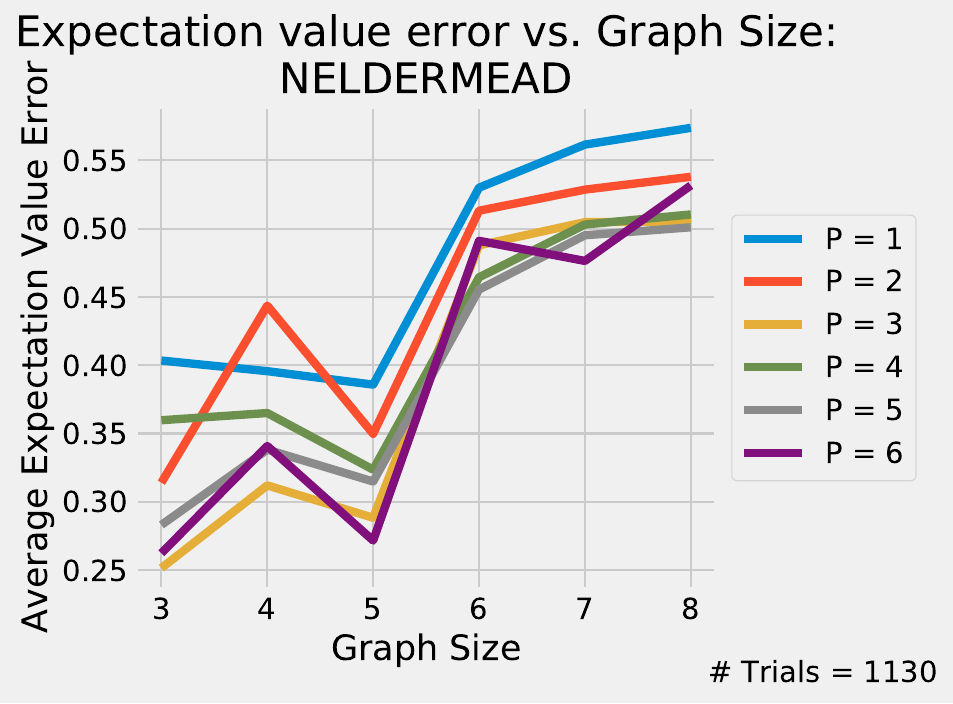} \\
		(a) Alternate Cost Function & (b) Final Cost Function \\[6pt]
	\end{tabular}
	\label{fig:costFunctions}
	\caption{Expectation error for differing cost functions}
\end{figure}
\section{Increased Decomposition}
Figure \ref{fig:finalNELD} depicts performance metrics for the Nelder-Mead algorithm. We see excellent performance for this method matching current literature \cite{guerreschi_practical_2017}. In Figure \ref{fig:finalBOB} we see the effect increased decomposition makes despite inferior end-results. We see that increasing QAOA decomposition results in an almost monotonic improvement in both sampled and expected solution quality at the cost of nearly double the number of function evaluations required for termination. This generalises to most optimisation methods tested. This is intuitive since we can always zero out parameters to provide the performance of a coarser QAOA scheme. The QAOA suffers from dimensionality issues since each increase in $p$ exponentially increases the number of possible values requiring more optimisation iterations to make use of these additional parameters. This is most clearly seen in Figure 4.4(c).
\begin{center}
	\begin{figure}
		\centering
		\begin{tabular}{cc}
			\includegraphics[width=0.5\textwidth]{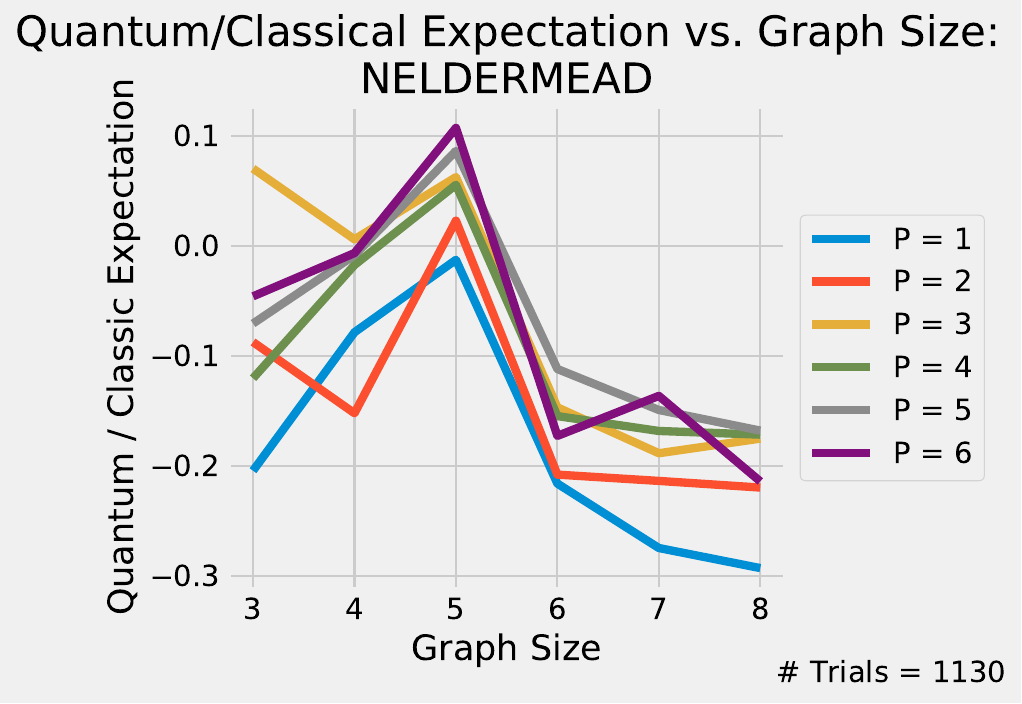} &   \includegraphics[width=0.5\textwidth]{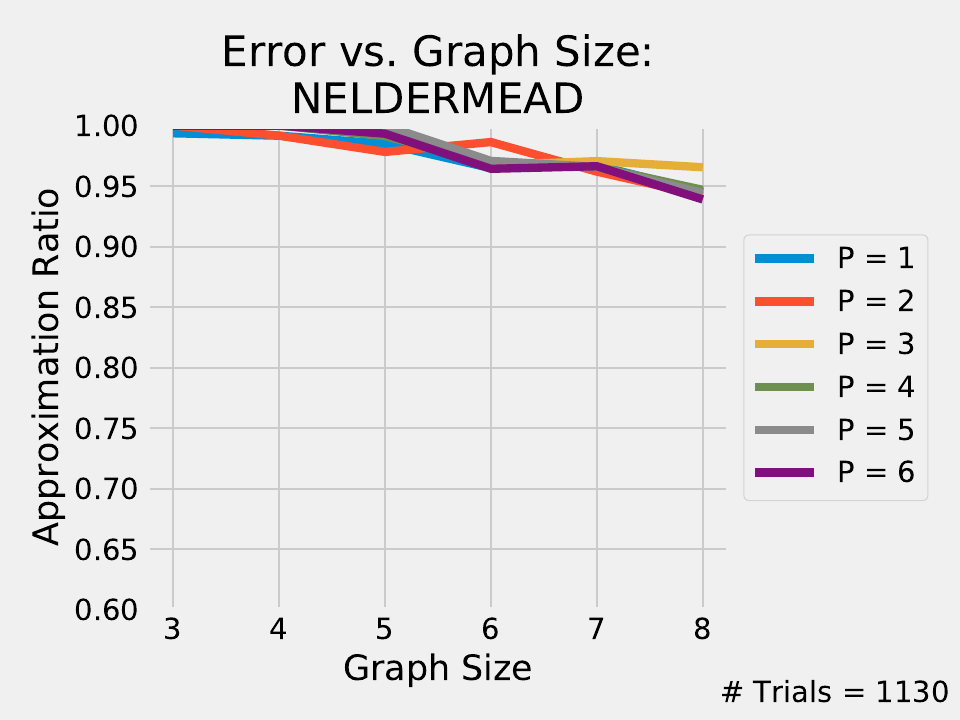} \\
			(a) Expectation value comparison & (b) Solution error \\[6pt]
			\includegraphics[width=0.5\textwidth]{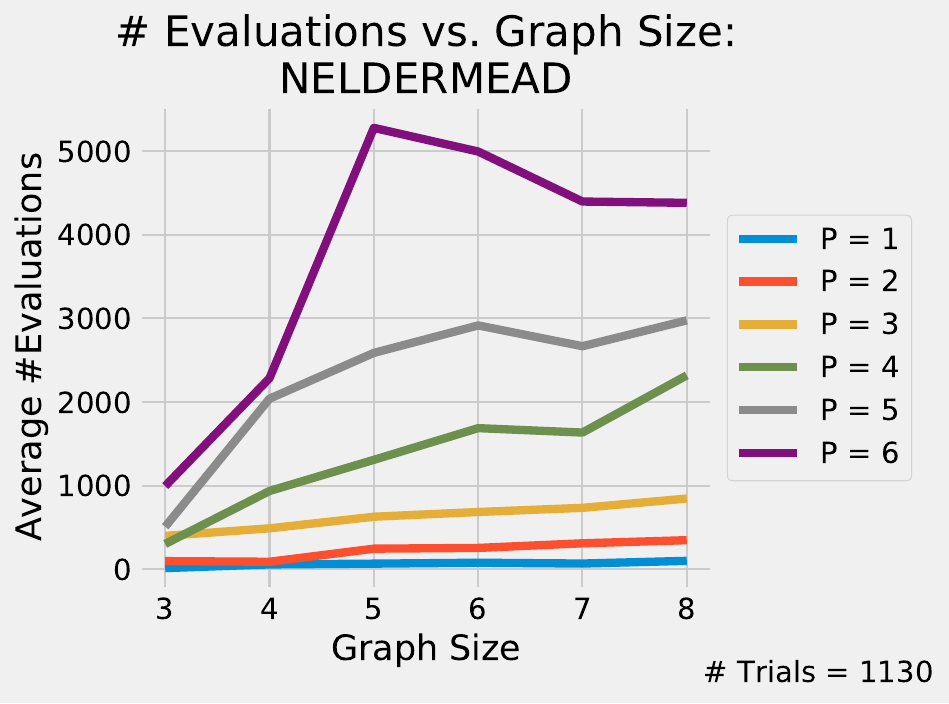} &   \includegraphics[width=0.5\textwidth]{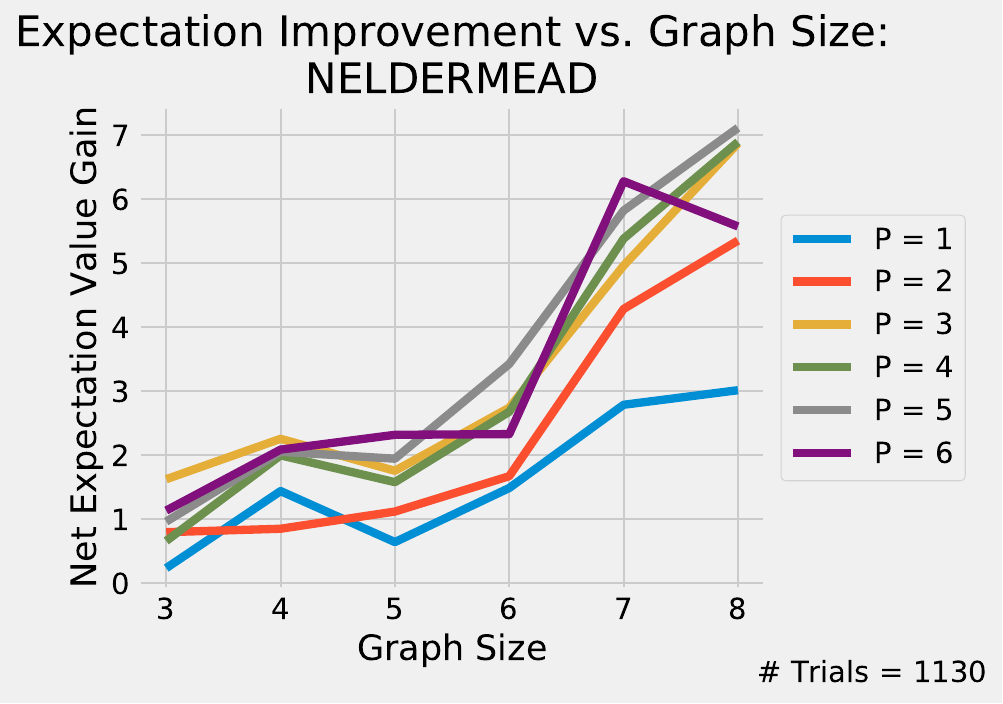} \\
			(c) Function evaluations required & (d) Improvement\\[6pt]
		\end{tabular}
		\label{fig:finalNELD}
		\caption{Final performance for the Nelder-Mead algorithm (directed graphs)}
	\end{figure}
\end{center}
\begin{center}
	\begin{figure}
		\centering
		\begin{tabular}{cc}
			\includegraphics[width=0.5\textwidth]{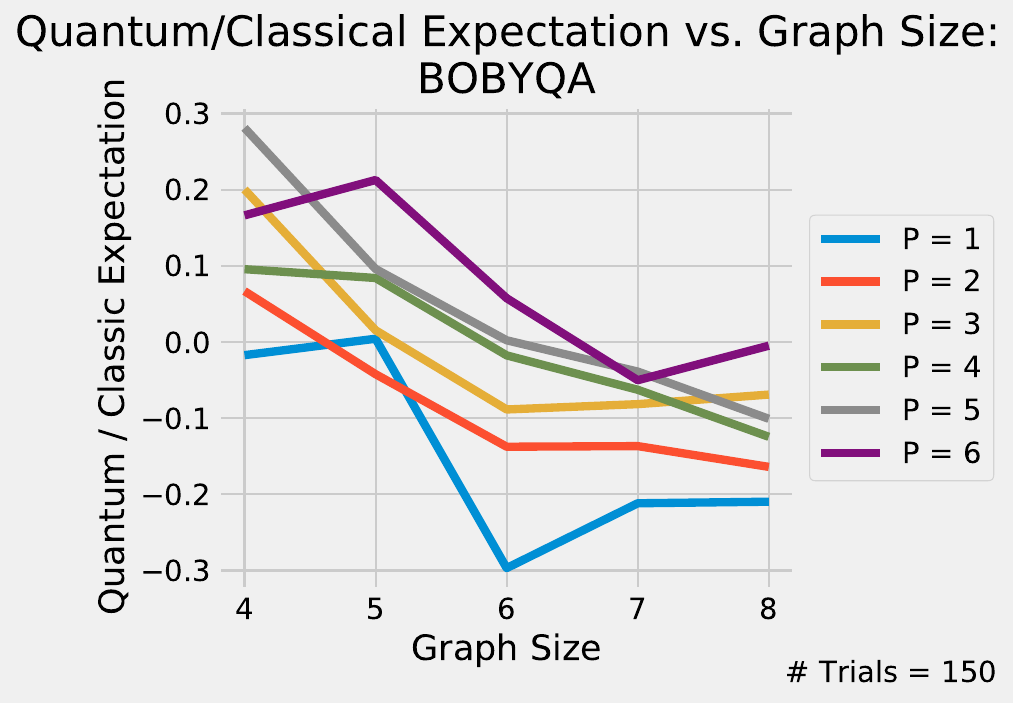} &   \includegraphics[width=0.5\textwidth]{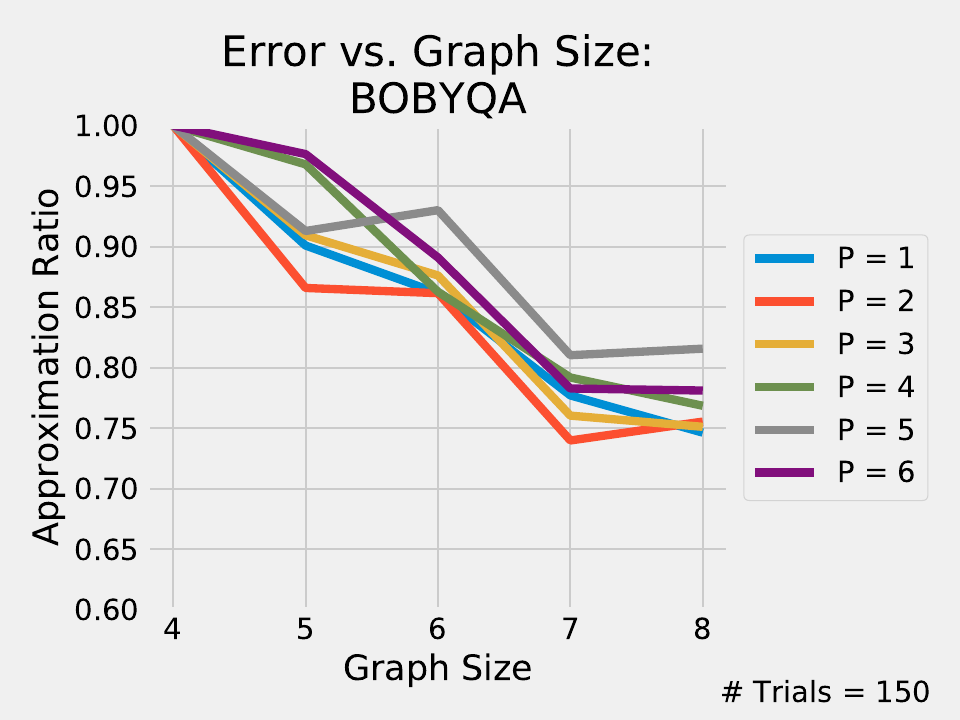} \\
			(a) Expectation value comparison & (b) Solution error \\[6pt]
			\includegraphics[width=0.5\textwidth]{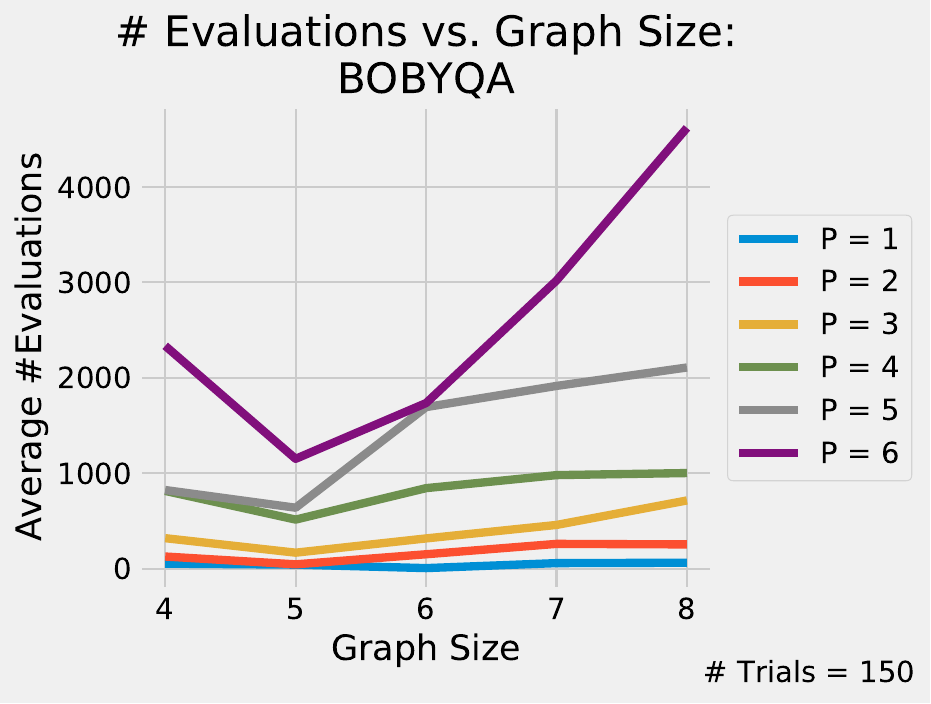} &   \includegraphics[width=0.5\textwidth]{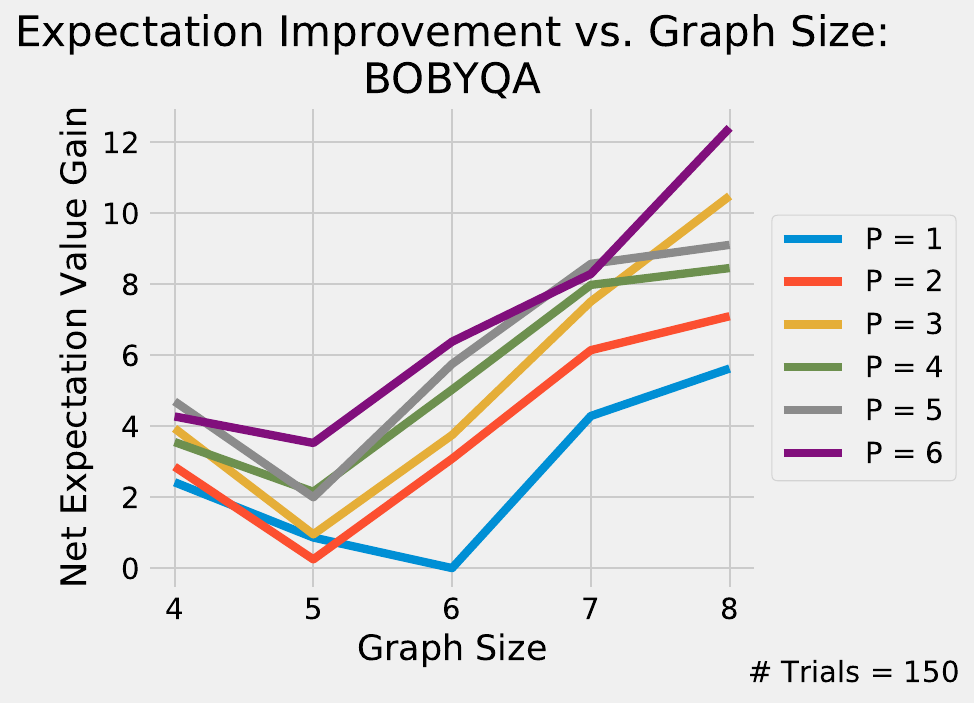} \\
			(c) Function evaluations required & (d) Improvement\\[6pt]
		\end{tabular}
		\label{fig:finalBOB}
		\caption{Final performance for the BOBYQA algorithm (directed graphs)}
	\end{figure}
\end{center}
\newpage
\section{Optimisation Methods}
\subsection{Correctness}
The global nature of both the MLSL and DIRECT algorithms results in generally superior sampled and expected solutions over local methods. The MLSL algorithm achieves this by exhausting the maximal number of function evaluations defined as
\begin{equation}\label{eq:Scaling}
MAX = S \times P \times Graph Size,
\end{equation}
where $S$ is a scaling parameter (we nominally choose $200$, the standard value used by Scipy. \cite{jones_scipy:_2001}).
The DIRECT algorithm terminates significantly earlier than other methods with a significant degradation in performance. For all algorithms tested the effect of infeasible solutions results in a lower expectation value versus classical sampling in the larger test cases indicating either more optimisation time is required or a more nuanced problem encoding. The cost function for graph similarity can span a range of $V^2$ values and produces a cost-function landscape containing many local optima. The derivative and non-linear nature of the optimisation schemes tested results in the scheme terminating at said optima.

\begin{center}
	\begin{figure}
		\centering
		\begin{tabular}{cc}
			\includegraphics[width=0.5\textwidth]{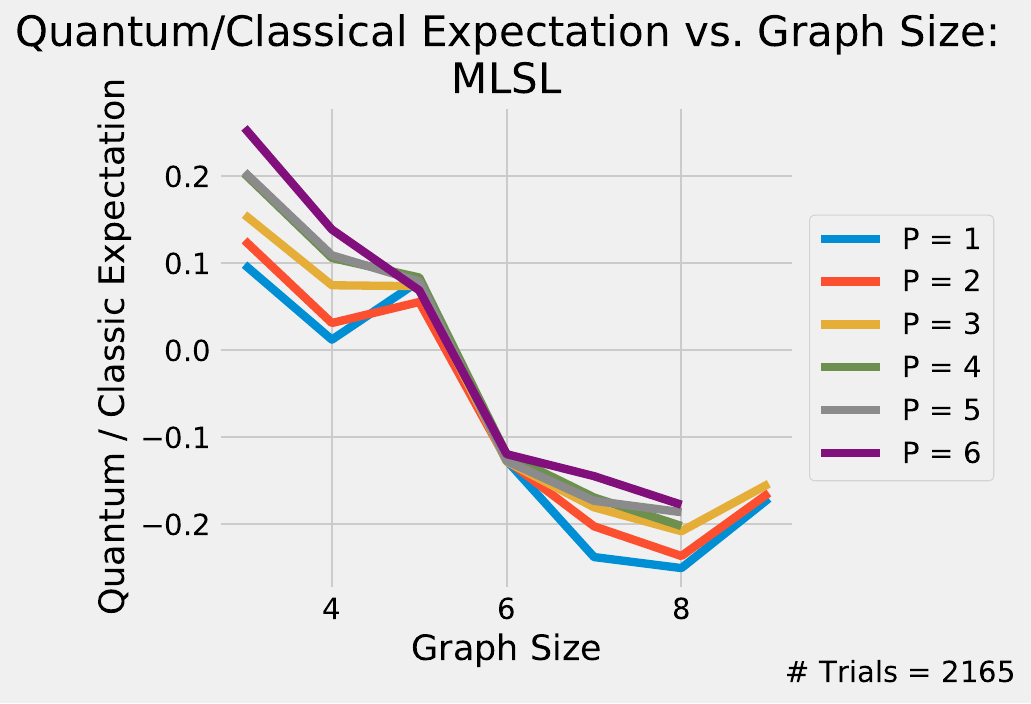} &   \includegraphics[width=0.5\textwidth]{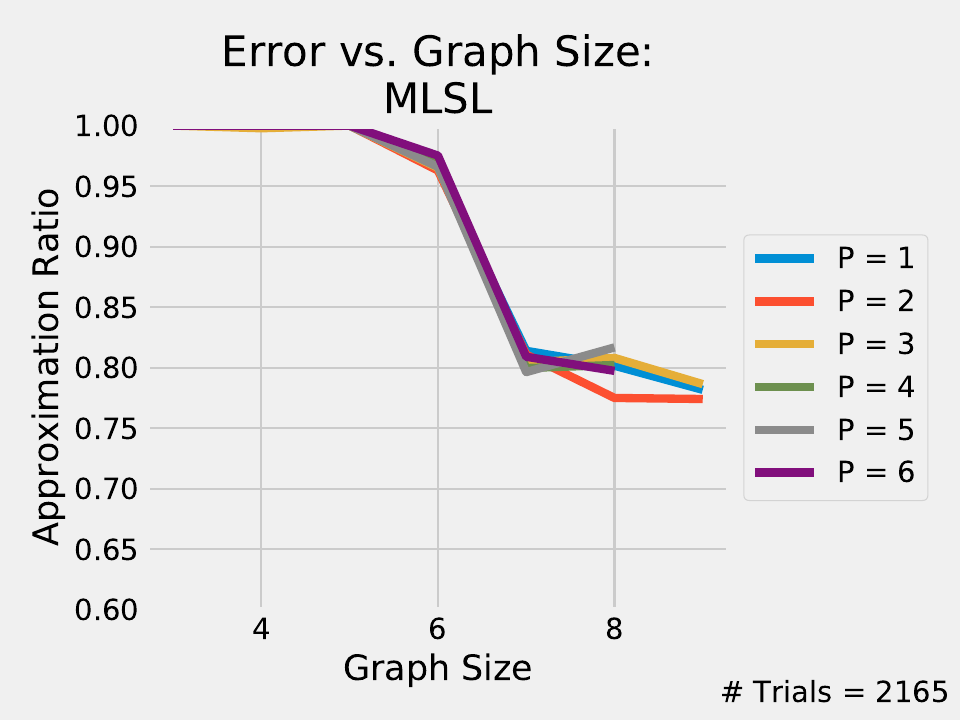} \\
			(a) Expectation value comparison & (b) Solution error \\[6pt]
			\includegraphics[width=0.5\textwidth]{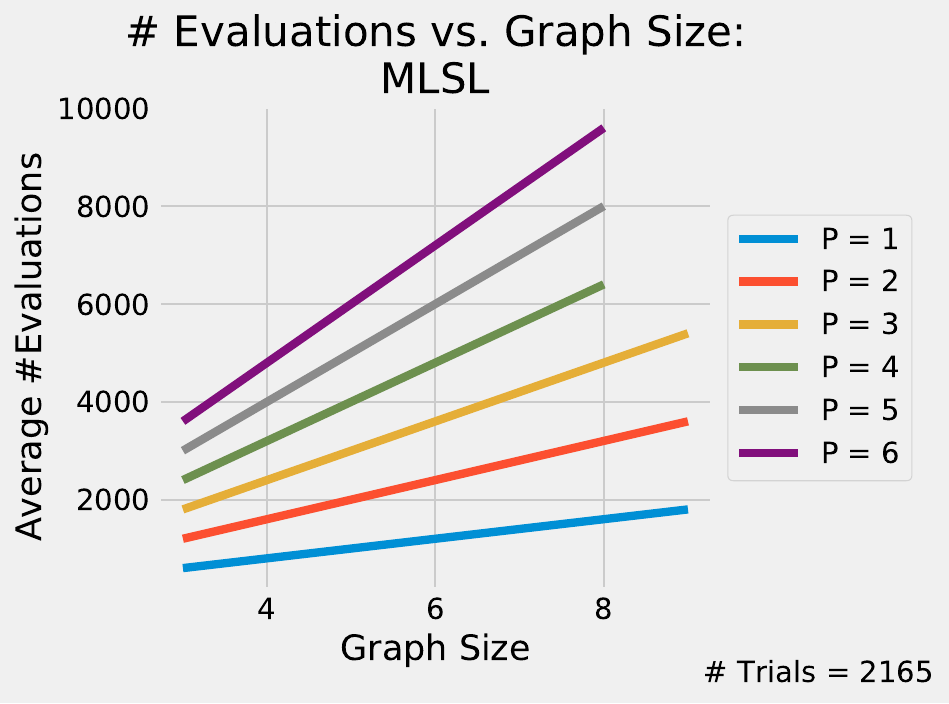} &   \includegraphics[width=0.5\textwidth]{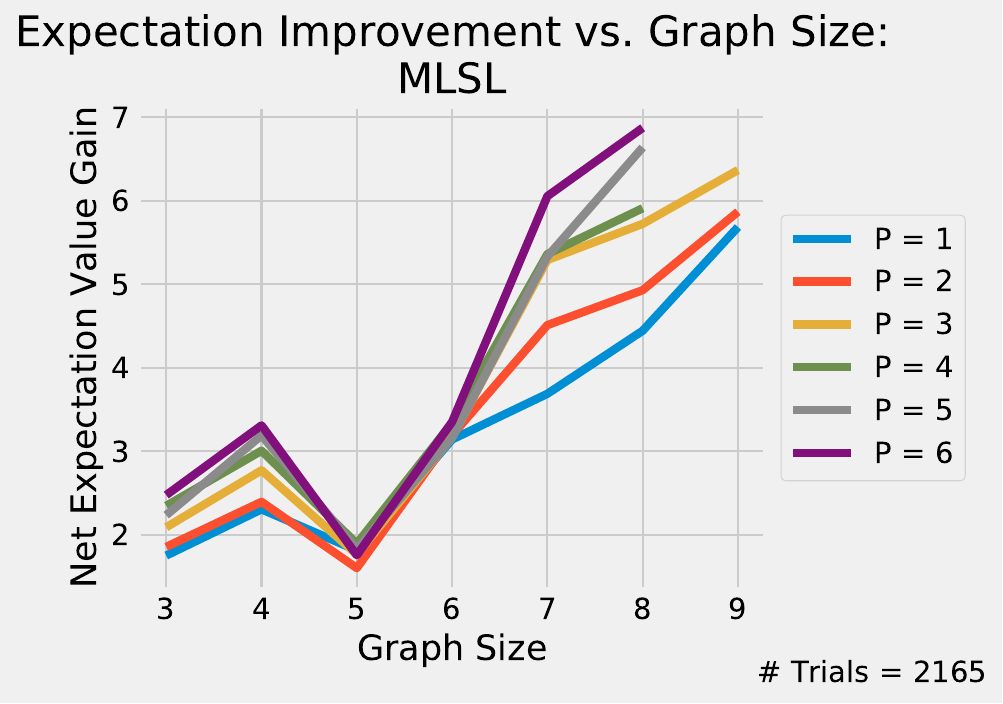} \\
			(c) Function evaluations required & (d) Improvement\\[6pt]
		\end{tabular}
		\label{fig:finalMLSL}
		\caption{Final performance for the MLSL algorithm (directed graphs)}
	\end{figure}
\end{center}
\begin{center}
	\begin{figure}
		\centering
		\begin{tabular}{cc}
			\includegraphics[width=0.5\textwidth]{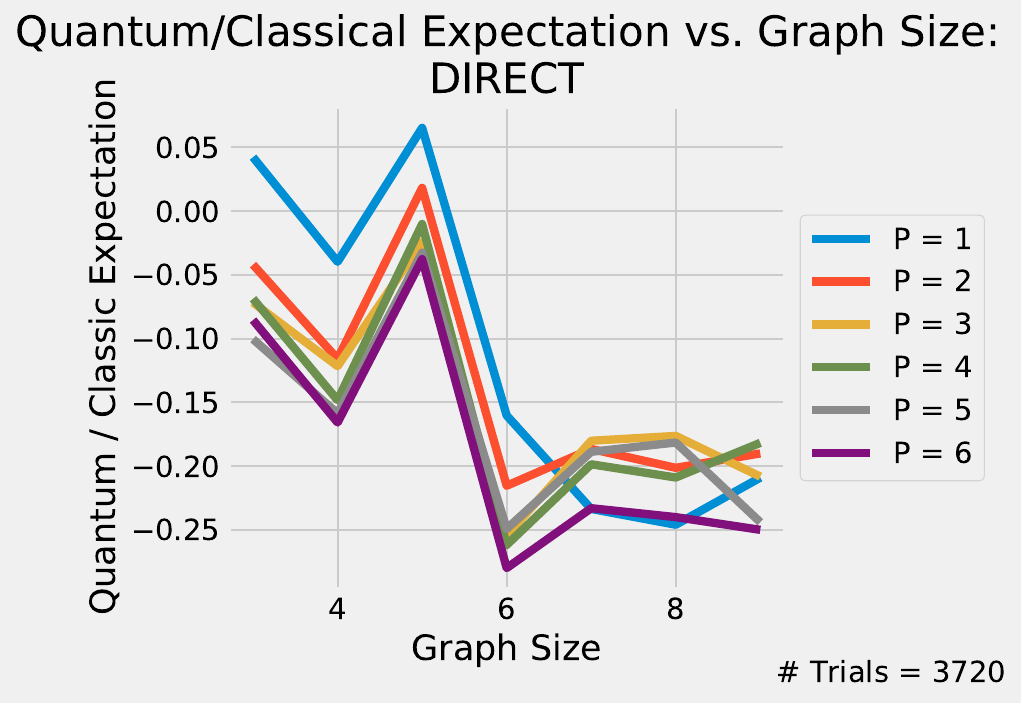} &   \includegraphics[width=0.5\textwidth]{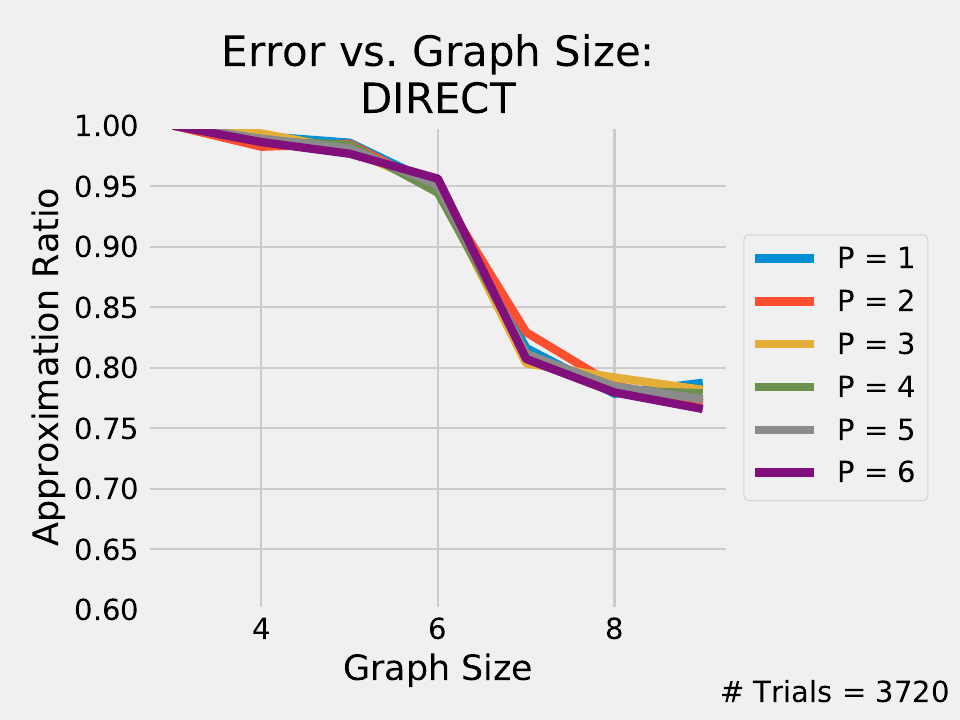} \\
			(a) Expectation value comparison & (b) Solution error \\[6pt]
			\includegraphics[width=0.5\textwidth]{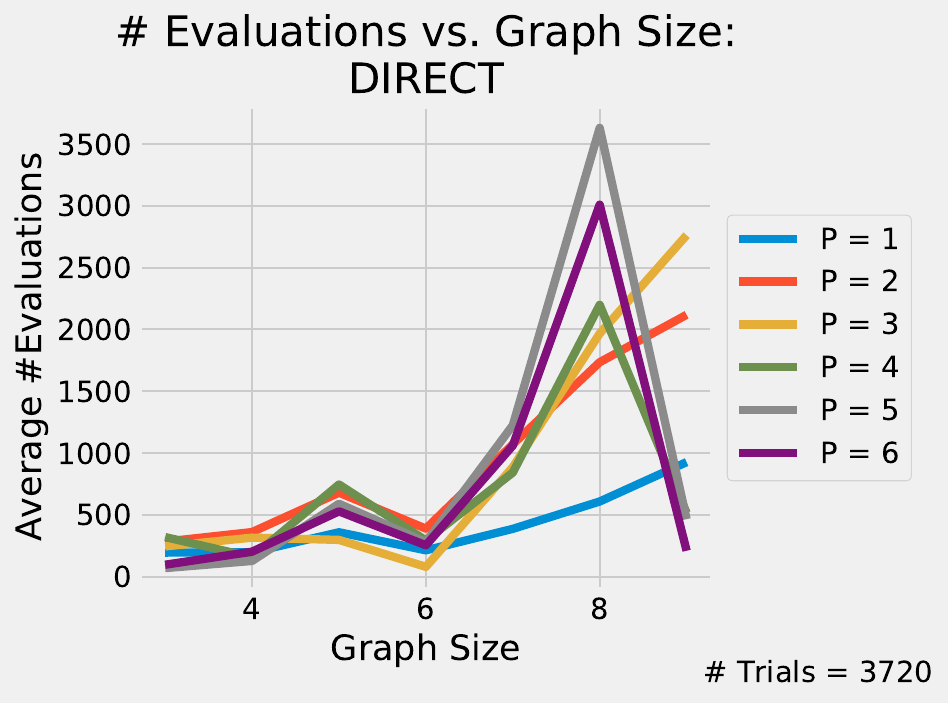} &   \includegraphics[width=0.5\textwidth]{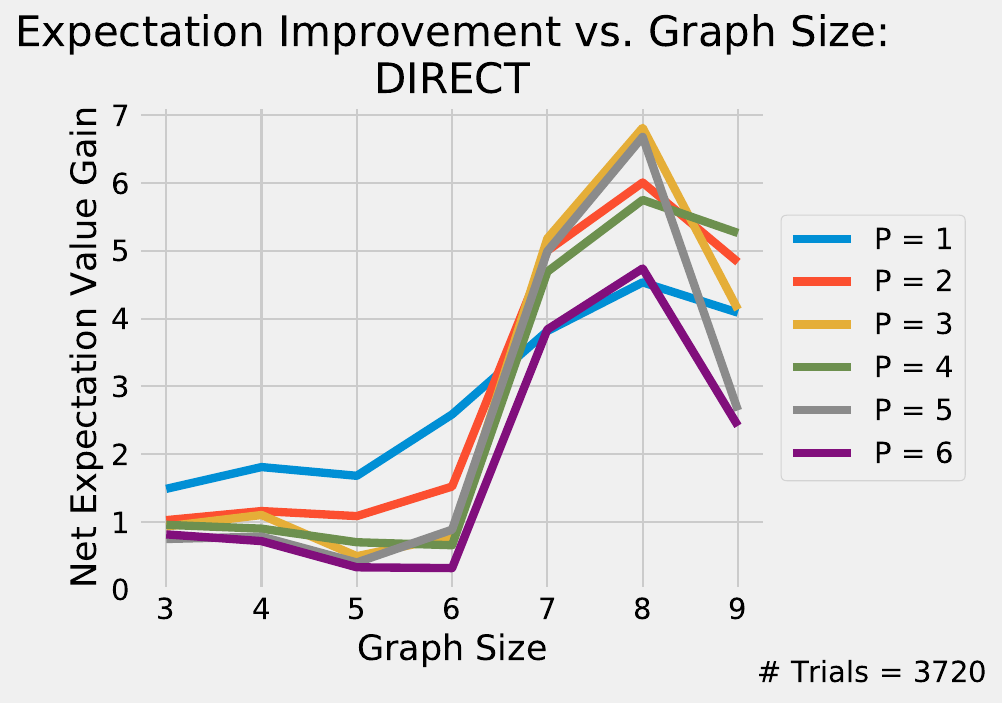} \\
			(c) Function evaluations required & (d) Improvement\\[6pt]
		\end{tabular}
		\label{fig:finalDIRECT}
		\caption{Final performance for the DIRECT algorithm (directed graphs)}
	\end{figure}
\end{center}
\subsection{Efficiency}
There is no clearly superior optimisation method for our mapping of graph similarity to the QAOA. Generally, global methods provide more correct solutions at a cost of vastly more function evaluations whereas local algorithms typically terminate with fewer iterations but produce poorer results as one would expect. 
\section{Directed vs. Undirected Graphs}
\begin{center}
	\begin{figure}
		\centering
		\begin{tabular}{cc}
			\includegraphics[width=0.5\textwidth]{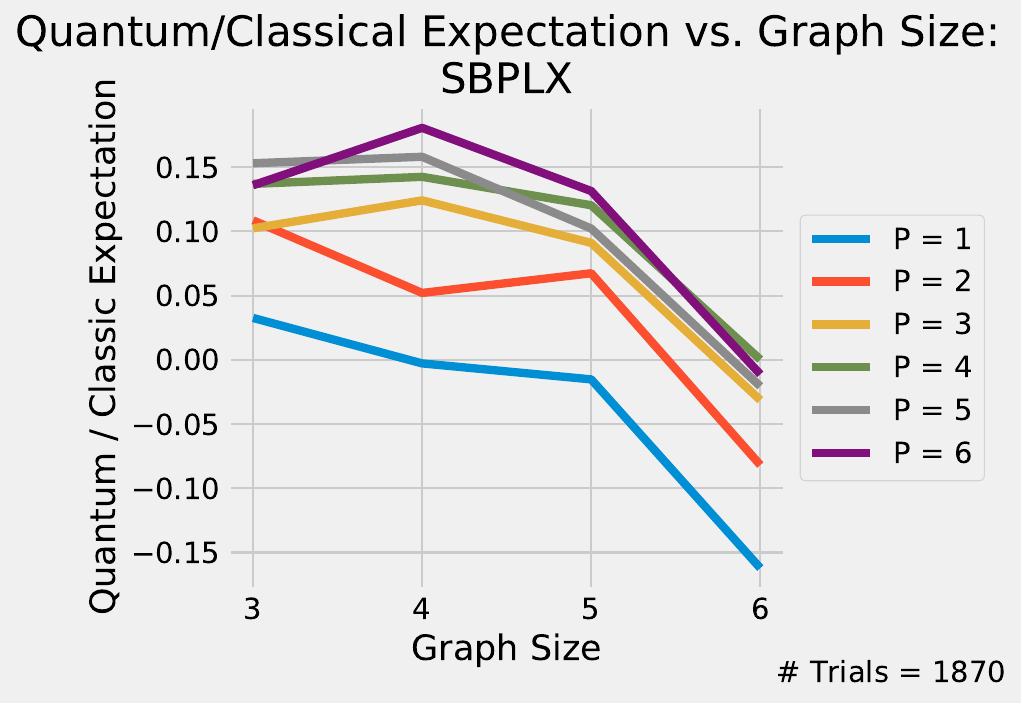} &   \includegraphics[width=0.5\textwidth]{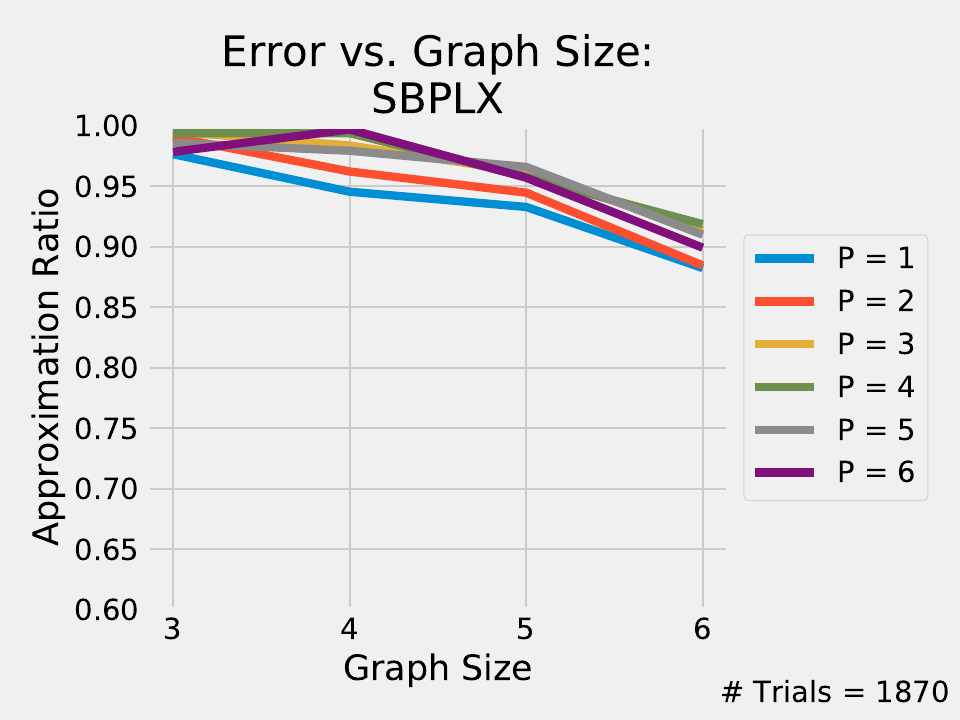} \\
			(a) Expectation value comparison & (b) Solution error \\[6pt]
			\includegraphics[width=0.5\textwidth]{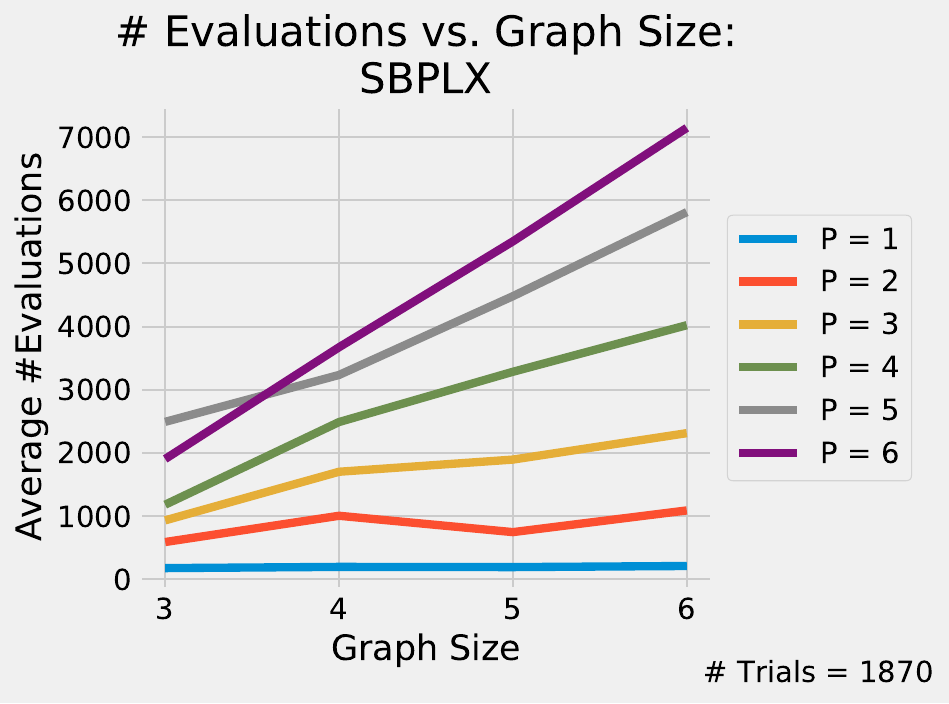} &   \includegraphics[width=0.5\textwidth]{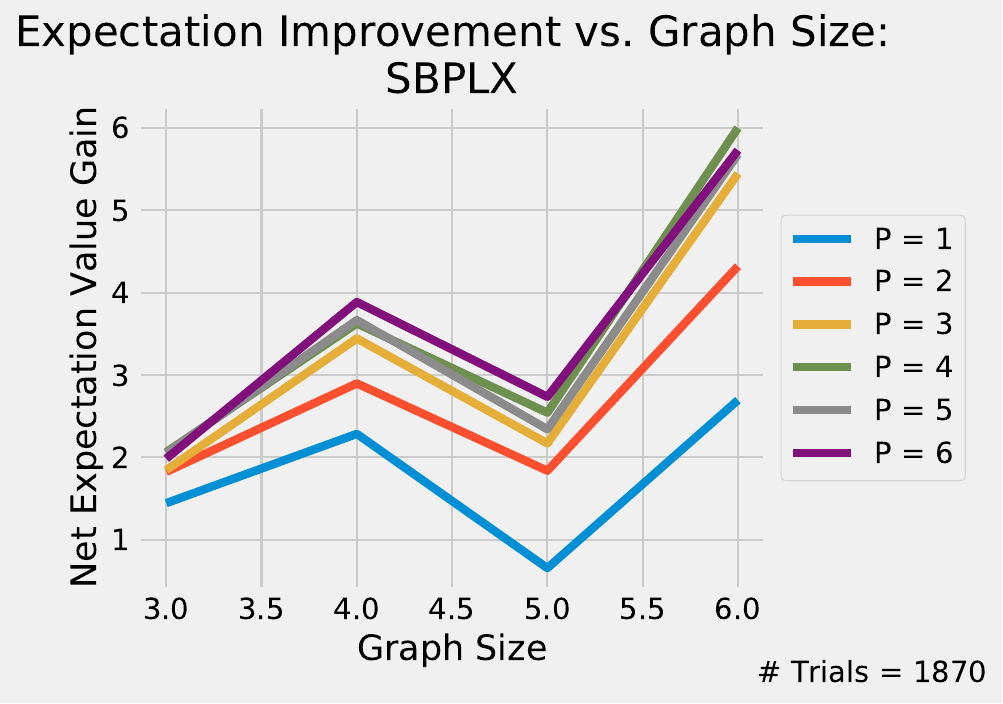} \\
			(c) Function evaluations required & (d) Improvement\\[6pt]
		\end{tabular}
		\label{fig:undirectSBPLX}
		\caption{Final performance for the Subplex algorithm (undirected graphs)}
	\end{figure}
\end{center}
We compare the performance of the Subplex algorithm between directed and undirected graphs in Figure \ref{fig:undirectSBPLX}. We see superior solution quality in the undirected case. This is expected since the mappings of node labellings in the undirected case generates a cost function landscape spanning fewer unique values resulting in a 'smoother' cost-function landscape. 
\section{Simulation Performance}
\begin{figure}\label{fig:Threading}
	\begin{tabular}{cc}
		\includegraphics[width=0.5\textwidth]{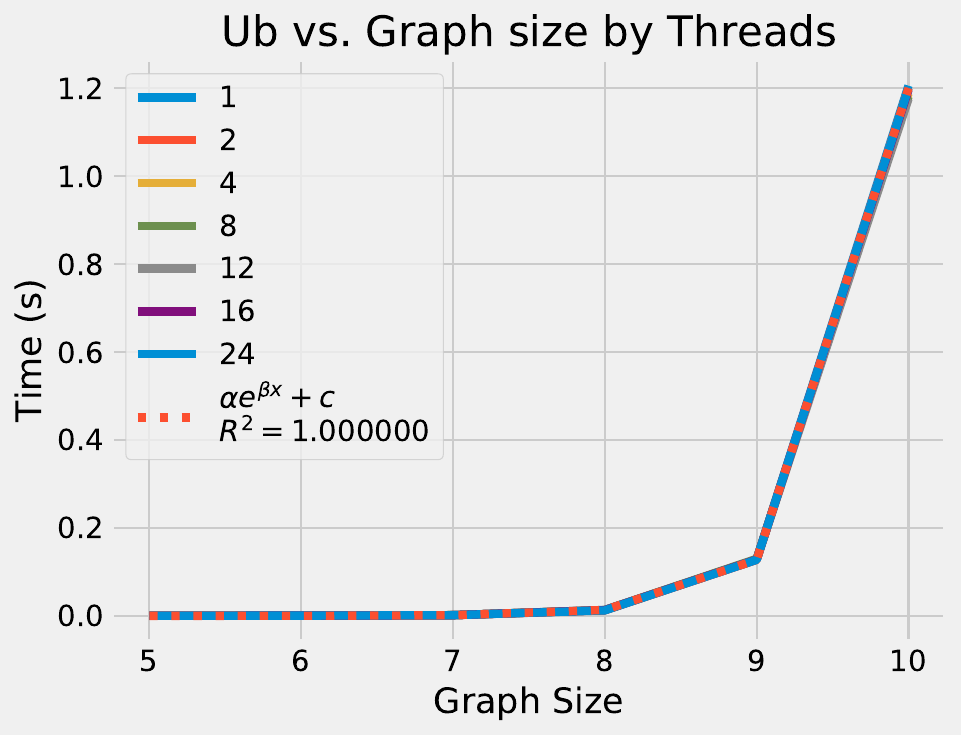} &   \includegraphics[width=0.5\textwidth]{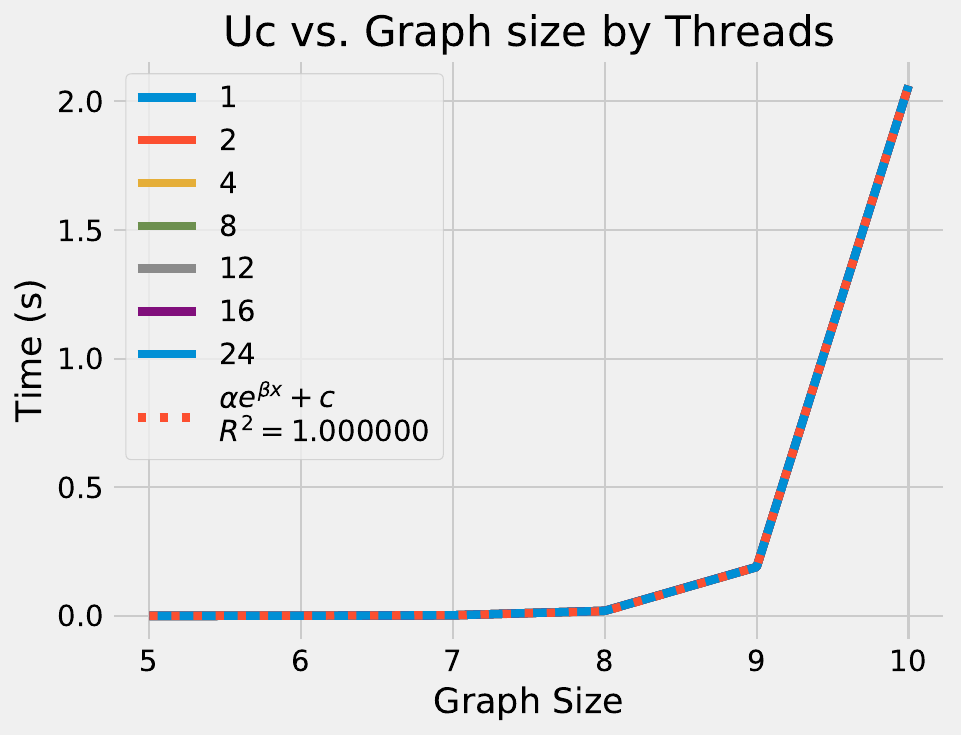} \\
		(a) $\hat{B}$ generation & (b) $\hat{C}$ generation \\[6pt]
		\includegraphics[width=0.5\textwidth]{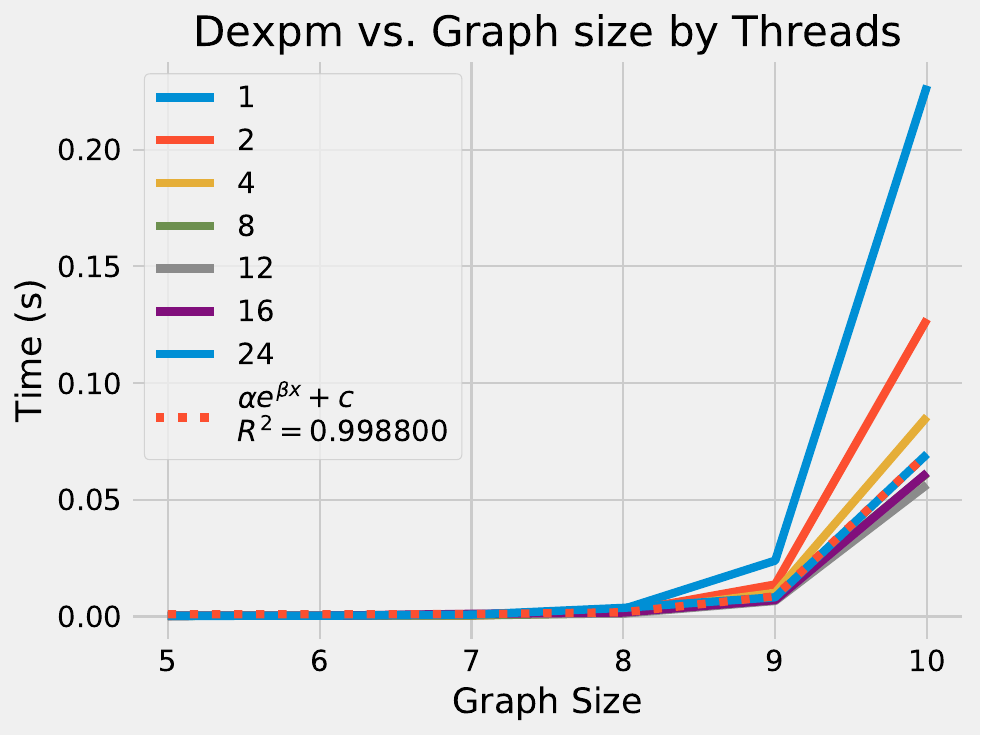} &   \includegraphics[width=0.5\textwidth]{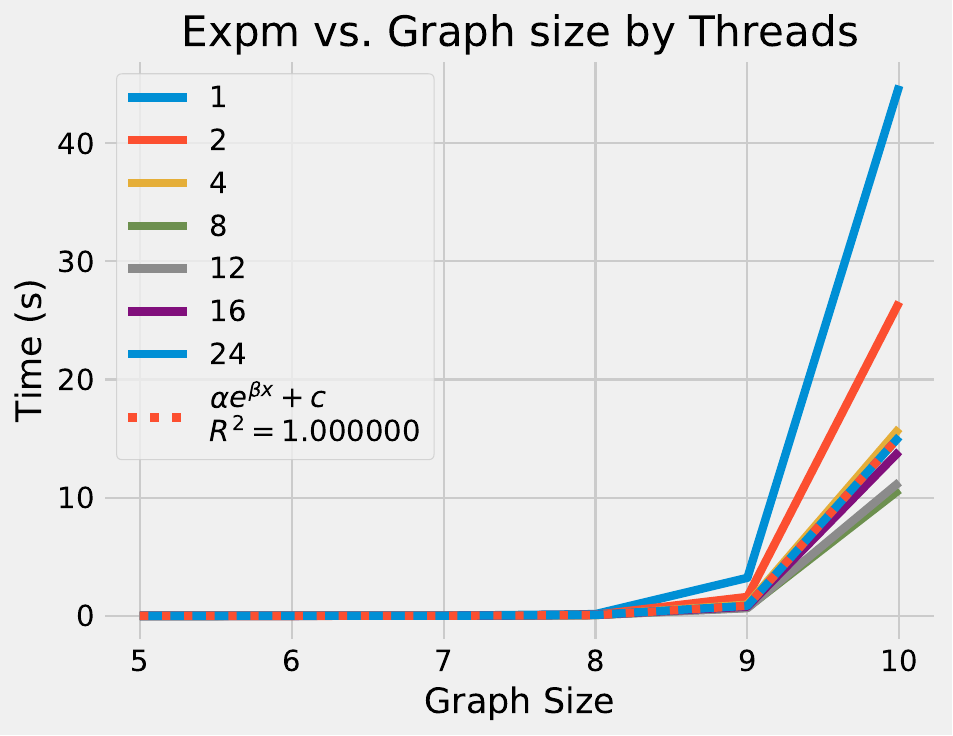} \\
		(c) Diagonal $\hat{C}$ Matrix Exponential & (d) $\hat{B}$ Matrix Exponential \\[6pt]
	\end{tabular}
	\caption{Threading performance}
\end{figure}
\begin{figure}\label{fig:Speedup}
	\centering
	\includegraphics[width=0.5\textwidth]{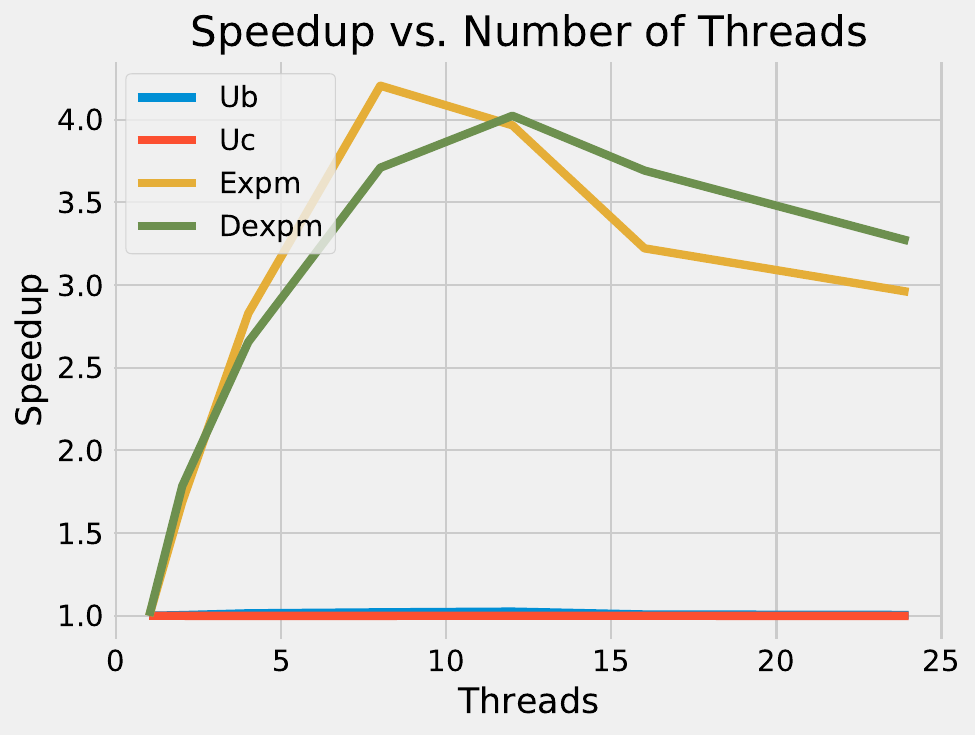}
	\caption{Relative speed-up for critical simulation tasks}
\end{figure}
\begin{figure}\label{fig:timing}
	\centering
	\includegraphics[width=0.5\textwidth]{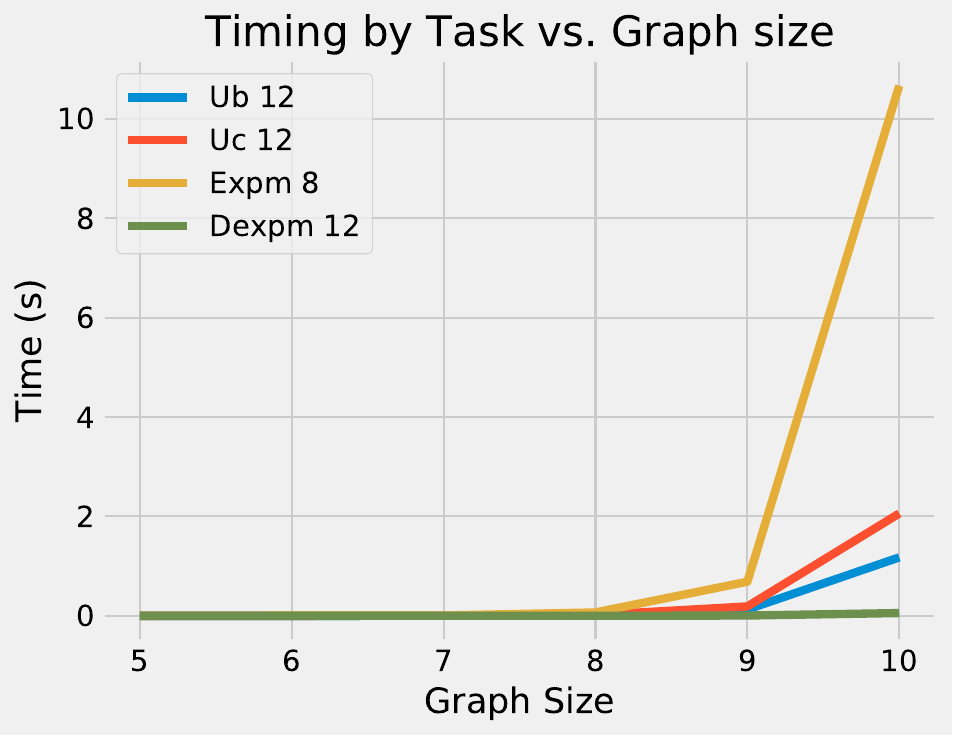}
	\caption{Timing for simulation tasks with optimal threading}
\end{figure}
We plot timing data for four major components of the simulation in Figure \ref{fig:Threading} and relative speed-up for each task in Figure \ref{fig:Speedup}. We compare the simulation time of each task with optimal threading in Figure \ref{fig:timing}, we see that the full matrix exponential dominates the computational workload.
Although the work required by all tasks scales exponentially with graph-size the most intensive task is the computation of $\hat{U}_B$. This justifies our effort to make this operation efficient and lack of threading for the simpler tasks. Generating $\hat{C}$ and $\hat{B}$ occurs once in a trial whereas the $\hat{U}_C$ and $\hat{U}_B$ operations are performed thousands of times.
\chapter{Conclusion}
We explore the Quantum Approximate Optimisation Algorithm (QAOA) \cite{farhi_quantum_2014-1}, an efficient combinatorial optimisation algorithm designed to approach NP-Complete problems. The low gate-depth, hybrid nature and generality makes the QAOA a strong candidate for near term implementation and practical use. We explore the algorithm in a novel manner independent of physical implementation through bespoke high-performance simulation testing up to $22$ qubits. Our Quantum Approximate Optimisation package provides class-leading flexibility allowing simulation of both the QAOA and derivative algorithms \cite{marsh_quantum_2018} while providing efficient scaling from desktop to cluster scale. We investigate the use of eight classical optimisation schemes seeking insight into any differences in performance offered by different optimisation paradigms finding a general trade-off between correctness and efficiency when considering global versus local methods.\newline
We encode the problem of graph similarity to the QAOA by considering the edge-overlap between two unlabelled vertices for both directed and undirected graphs; a similarity measure not yet considered classically. We provide a novel encoding scheme saving $\mathcal{O}(V)$ qubits at the cost of considering infeasible solutions. The run-time cost of this space saving extends through both the quantum and classical portions of the algorithm; the former due to the increased gate-depth and circuit complexity required to encode the problem and the latter due to the non-solutions. The QAOA is able to optimise out of an initial local minima including the infeasible solutions but requires significantly more optimisation to find good solutions. We investigate an initial starting state of superposition between all qubits testing the resilience of the QAOA against a significant number of infeasible and undesirable solutions.\newline
Run-time restrictions placed on high-performance compute resources make evaluating larger problem instances problematic for continuous simulation of the QAOA; implementation of a check-point system or multi-start paradigm would allow for exploration of larger problem instances. Investigation of a wider variety of hard combinatorial optimisation problems may reveal additional avenues to exploit the sub-structure of certain problems and further discriminate between classical optimisation schemes. Moreover the formulation of a parameter optimisation schemes targeted directly for the QAOA may lead to significant performance benefits. Finally, tighter integration into pre-existing quantum computing resources (both simulators and physical implementations) will facilitate a more complete investigation of this promising and exotic algorithm.

\newpage
\printbibliography

@article{haner_0.5_2017,
	title = {0.5 Petabyte Simulation of a 45-Qubit Quantum Circuit},
	abstract = {Near-term quantum computers will soon reach sizes that are challenging to directly simulate, even when employing the most powerful supercomputers. Yet, the ability to simulate these early devices using classical computers is crucial for calibration, validation, and benchmarking. In order to make use of the full potential of systems featuring multi- and many-core processors, we use automatic code generation and optimization of compute kernels, which also enables performance portability. We apply a scheduling algorithm to quantum supremacy circuits in order to reduce the required communication and simulate a 45-qubit circuit on the Cori {II} supercomputer using 8,192 nodes and 0.5 petabytes of memory. To our knowledge, this constitutes the largest quantum circuit simulation to this date. Our highly-tuned kernels in combination with the reduced communication requirements allow an improvement in time-to-solution over state-of-the-art simulations by more than an order of magnitude at every scale.},
	pages = {Proceedings of the International Conference for High Performance Computing, Networking, Storage and Analysis. SC 2017. Article No. 33},
	journaltitle = {arxiv1704.01127},
	author = {Häner, Thomas and Steiger, Damian S.},
	date = {2017},
	keywords = {Quantum Physics, And Cluster Computing, Computer Science - Distributed, Computer Science - Emerging Technologies, Parallel, Computer Science - Distributed, Parallel, and Cluster Computing},
	file = {Häner and Steiger - 2017 - 0.5 Petabyte Simulation of a 45-Qubit Quantum Circ.pdf:/home/nicholas/Zotero/storage/DH3KWF42/Häner and Steiger - 2017 - 0.5 Petabyte Simulation of a 45-Qubit Quantum Circ.pdf:application/pdf}
}

@article{chen_classical_2018,
	title = {Classical Simulation of Intermediate-Size Quantum Circuits},
	journaltitle = {{ArXiv} e-prints},
	author = {Chen, J. and Zhang, F. and Huang, C. and Newman, M. and Shi, Y.},
	date = {2018-05},
	keywords = {Quantum Physics},
	file = {Chen et al. - 2018 - Classical Simulation of Intermediate-Size Quantum .pdf:/home/nicholas/Zotero/storage/AMY2L89P/Chen et al. - 2018 - Classical Simulation of Intermediate-Size Quantum .pdf:application/pdf}
}

@article{papadimitriou_web_2010,
	title = {Web graph similarity for anomaly detection},
	volume = {1},
	issn = {1869-0238},
	url = {https://doi.org/10.1007/s13174-010-0003-x},
	doi = {10.1007/s13174-010-0003-x},
	abstract = {Web graphs are approximate snapshots of the web, created by search engines. They are essential to monitor the evolution of the web and to compute global properties like {PageRank} values of web pages. Their continuous monitoring requires a notion of graph similarity to help measure the amount and significance of changes in the evolving web. As a result, these measurements provide means to validate how well search engines acquire content from the web. In this paper, we propose five similarity schemes: three of them we adapted from existing graph similarity measures, and two we adapted from well-known document and vector similarity methods (namely, the shingling method and random projection based method). We empirically evaluate and compare all five schemes using a sequence of web graphs from Yahoo!, and study if the schemes can identify anomalies that may occur due to hardware or other problems.},
	pages = {19--30},
	number = {1},
	journaltitle = {Journal of Internet Services and Applications},
	author = {Papadimitriou, Panagiotis and Dasdan, Ali and Garcia-Molina, Hector},
	date = {2010-05}
}

@article{han_topological_2015,
	title = {Topological Similarity-Based Feature Selection for Graph Classification.},
	volume = {58},
	url = {http://dx.doi.org/10.1093/comjnl/bxt123},
	doi = {10.1093/comjnl/bxt123},
	pages = {1884--1893},
	number = {9},
	journaltitle = {The Computer Journal},
	author = {Han, Yongkoo and Park, Kisung and Guan, Donghai and Halder, Sajal and Lee, Young-Koo},
	date = {2015}
}

@article{wecker_training_2016,
	title = {Training a quantum optimizer},
	volume = {94},
	url = {https://link.aps.org/doi/10.1103/PhysRevA.94.022309},
	doi = {10.1103/PhysRevA.94.022309},
	pages = {022309},
	number = {2},
	journaltitle = {Phys. Rev. A},
	author = {Wecker, Dave and Hastings, Matthew B. and Troyer, Matthias},
	date = {2016-08}
}

@article{brin_anatomy_1998,
	title = {The anatomy of a large-scale hypertextual Web search engine},
	volume = {30},
	issn = {0169-7552},
	abstract = {In this paper, we present Google, a prototype of a large-scale search engine which makes heavy use of the structure present in hypertext. Google is designed to crawl and index the Web efficiently and produce much more satisfying search results than existing systems. The prototype with a full text and hyperlink database of at least 24 million pages is available at http://google.stanford.edu/ To engineer a search engine is a challenging task. Search engines index tens to hundreds of millions of Web pages involving a comparable number of distinct terms. They answer tens of millions of queries every day. Despite the importance of large-scale search engines on the Web, very little academic research has been done on them. Furthermore, due to rapid advance in technology and Web proliferation, creating a Web search engine today is very different from three years ago. This paper provides an in-depth description of our large-scale Web search engine — the first such detailed public description we know of to date. Apart from the problems of scaling traditional search techniques to data of this magnitude, there are new technical challenges involved with using the additional information present in hypertext to produce better search results. This paper addresses this question of how to build a practical large-scale system which can exploit the additional information present in hypertext. Also we look at the problem of how to effectively deal with uncontrolled hypertext collections where anyone can publish anything they want.},
	pages = {107,117},
	number = {1},
	journaltitle = {Computer Networks and {ISDN} Systems},
	author = {Brin, Sergey and Page, Lawrence},
	date = {1998},
	keywords = {World Wide Web, Google, Information Retrieval, Pagerank, Search Engines}
}

@book{johnson_nlopt_2011,
	title = {The {NLopt} nonlinear-optimization package},
	url = {http://ab-initio.mit.edu/nlopt},
	author = {Johnson, Steven G.},
	date = {2011}
}

@article{bennett_strengths_1997,
	title = {Strengths and Weaknesses of Quantum Computing},
	volume = {26},
	issn = {0097-5397},
	abstract = {Recently a great deal of attention has been focused on quantum computation following a sequence of results [Bernstein and Vazirani, in Proc . 25 th Annual {ACM} Symposium Theory Comput. , 1993, pp. 11–20, {SIAM} J. Comput., 26 (1997), pp. 1277–1339], [Simon, in Proc. 35th Annual {IEEE} Symposium Foundations Comput. Sci., 1994, pp. 116–123, {SIAM} J. Comput., 26 (1997), pp. 1340–1349], [Shor, in Proc. 35 th Annual {IEEE} Symposium Foundations Comput. Sci. , 1994, pp. 124–134] suggesting that quantum computers are more powerful than classical probabilistic computers. Following Shor's result that factoring and the extraction of discrete logarithms are both solvable in quantum polynomial time, it is natural to ask whether all of \${\textbackslash}{NP}\$ can be efficiently solved in quantum polynomial time. In this paper, we address this question by proving that relative to an oracle chosen uniformly at random with probability 1 the class \${\textbackslash}{NP}\$ cannot be solved on a quantum Turing machine ({QTM}) in time \$o(2{\textasciicircum}n/2)\$. We also show that relative to a permutation oracle chosen uniformly at random with probability 1 the class \${\textbackslash}{NP} {\textbackslash}cap {\textbackslash}{coNP}\$ cannot be solved on a {QTM} in time \$o(2{\textasciicircum}n/3)\$. The former bound is tight since recent work of Grover [in ıt Proc.{\textbackslash} \$28\$th Annual {ACM} Symposium Theory Comput., 1996] shows how to accept the class \${\textbackslash}{NP}\$ relative to any oracle on a quantum computer in time \$O(2{\textasciicircum}n/2)\$.},
	pages = {1510,1523},
	number = {5},
	journaltitle = {{SIAM} Journal on Computing},
	author = {Bennett, Charles H. and Bernstein, Ethan and Brassard, Gilles and Vazirani, Umesh},
	date = {1997},
	keywords = {03d10, 03d15, 68q05, 68q15, Oracle Quantum Turing Machines, Quantum Polynomial Time, Quantum Turing Machines}
}

@article{james_six_2006,
	title = {Six Degrees of Information Seeking: Stanley Milgram and the Small World of the Library},
	volume = {32},
	issn = {0099-1333},
	pages = {527,532},
	number = {5},
	journaltitle = {Journal of Academic Librarianship},
	author = {James, Kathryn},
	date = {2006},
	keywords = {Information Networks, Information Seeking, Libraries, Use Studies, Users (Information)}
}

@article{boixo_simulation_2017,
	title = {Simulation of low-depth quantum circuits as complex undirected graphical models},
	journaltitle = {{ArXiv} e-prints},
	author = {Boixo, S. and Isakov, S. V. and Smelyanskiy, V. N. and Neven, H.},
	date = {2017-12},
	keywords = {Quantum Physics},
	file = {Boixo et al. - 2017 - Simulation of low-depth quantum circuits as comple.pdf:/home/nicholas/Zotero/storage/3LNCMSZ8/Boixo et al. - 2017 - Simulation of low-depth quantum circuits as comple.pdf:application/pdf}
}

@article{feynman_simulating_1982,
	title = {Simulating physics with computers},
	volume = {21},
	issn = {0020-7748},
	pages = {467,488},
	number = {6},
	journaltitle = {International Journal of Theoretical Physics},
	author = {Feynman, Richard},
	date = {1982},
	keywords = {Physics}
}

@article{deutsch_rapid_1992,
	title = {Rapid Solution of Problems by Quantum Computation},
	volume = {439},
	issn = {0962-8444},
	abstract = {A class of problems is described which can be solved more efficiently by quantum computation than by any classical or stochastic method. The quantum computation solves the problem with certainty in exponentially less time than any classical deterministic computation.},
	pages = {553,558},
	number = {1907},
	journaltitle = {Proceedings of the Royal Society: Mathematical and Physical Sciences (1990-1995)},
	author = {Deutsch, David and Jozsa, Richard},
	date = {1992},
	keywords = {Sciences (General), Mathematics},
	file = {Deutsch and Jozsa - 1992 - Rapid Solution of Problems by Quantum Computation.pdf:/home/nicholas/Zotero/storage/YVUWWZIC/Deutsch and Jozsa - 1992 - Rapid Solution of Problems by Quantum Computation.pdf:application/pdf}
}

@article{raymond_rascal:_2002,
	title = {{RASCAL}: Calculation of Graph Similarity using Maximum Common Edge Subgraphs},
	volume = {45},
	issn = {0010-4620},
	abstract = {A new graph similarity calculation procedure is introduced for comparing labeled graphs. Given a minimum similarity threshold, the procedure consists of an initial screening process to determine whether it is possible for the measure of similarity between the two graphs to exceed the minimum threshold, followed by a rigorous maximum common edge subgraph {MCES} detection algorithm to compute the exact degree and composition of similarity. The proposed {MCES} algorithm is based on a maximum clique formulation of the problem and is a significant improvement over other published algorithms. It presents new approaches to both lower and upper bounding as well as vertex selection.},
	pages = {631,644},
	number = {6},
	journaltitle = {The Computer Journal},
	author = {Raymond, John W. and Gardiner, Eleanor J. and Willett, Peter},
	date = {2002},
	keywords = {Algorithms, Formulations, Analogies, Ante, Article, Composition, Computation, Deduction (Ci), Detection, Graph Theory, Improvement, Measures, Microcomputers, Problems, Procedures, Processes, Screening, Selection, Thresholds}
}

@article{deutsch_quantum_nodate,
	title = {Quantum Theory, the Church-Turing Principle and the Universal Quantum Computer},
	volume = {400},
	issn = {0080-4630},
	abstract = {It is argued that underlying the Church-Turing hypothesis there is an implicit physical assertion. Here, this assertion is presented explicitly as a physical principle: `every finitely realizible physical system can be perfectly simulated by a universal model computing machine operating by finite means'. Classical physics and the universal Turing machine, because the former is continuous and the latter discrete, do not obey the principle, at least in the strong form above. A class of model computing machines that is the quantum generalization of the class of Turing machines is described, and it is shown that quantum theory and the `universal quantum computer' are compatible with the principle. Computing machines resembling the universal quantum computer could, in principle, be built and would have many remarkable properties not reproducible by any Turing machine. These do not include the computation of non-recursive functions, but they do include `quantum parallelism', a method by which certain probabilistic tasks can be performed faster by a universal quantum computer than by any classical restriction of it. The intuitive explanation of these properties places an intolerable strain on all interpretations of quantum theory other than Everett's. Some of the numerous connections between the quantum theory of computation and the rest of physics are explored. Quantum complexity theory allows a physically more reasonable definition of the `complexity' or `knowledge' in a physical system than does classical complexity theory.},
	pages = {97,117},
	number = {1818},
	journaltitle = {Proceedings of the Royal Society of London. Series A, Mathematical and Physical Sciences (1934-1990)},
	author = {Deutsch, D.},
	keywords = {Sciences (General)}
}

@article{biamonte_quantum_2018,
	title = {Quantum Supremacy Lower Bounds by Entanglement Scaling},
	url = {http://arxiv.org/abs/1808.00460},
	abstract = {A contemporary technological milestone is to build a quantum device performing a computational task beyond the capability of any classical computer, an achievement known as quantum supremacy. The minimal requirements to realize quantum supremacy have so far been based on building a device to surpass extrapolated numerical simulation times. Here, we derive a formula for the minimal number of nearest-neighbor gates on a 2D lattice required to possibly generate a quantum state possessing volumetric entanglement scaling in the number of qubits. Hence, we predict the minimum random circuit depth needed to generate the maximal bipartite entanglement correlations between all problem variables (qubits). This derivation leads immediately to an interval which is constant on a coarse-grained entanglement upper bound—all existing numerical simulations fall inside this interval. These bounds make explicit a nuance implicit in other proposals with critical physical consequence. The hardware itself must be able to support super-logarithmic ebits of entanglement across some poly(n) number of qubit-bipartitions, otherwise the quantum state itself will not possess volumetric entanglement growth and full-lattice-range correlations. In conclusion we forecast that quantum supremacy will not be possible for gate-depths of less than 100 on systems of 80 to 150 qubits. These ﬁndings are just above the entanglement interval lower-bound that we derived in the study.},
	journaltitle = {{arXiv}:1808.00460 [cond-mat, physics:quant-ph]},
	author = {Biamonte, Jacob D. and Morales, Mauro E. S. and Koh, Dax Enshan},
	urldate = {2018-08-03},
	date = {2018-08-01},
	langid = {english},
	eprinttype = {arxiv},
	eprint = {1808.00460},
	keywords = {Quantum Physics, Computer Science - Computational Complexity, Condensed Matter - Other Condensed Matter},
	file = {Biamonte et al. - 2018 - Quantum Supremacy Lower Bounds by Entanglement Sca.pdf:/home/nicholas/Zotero/storage/AKAE4ABF/Biamonte et al. - 2018 - Quantum Supremacy Lower Bounds by Entanglement Sca.pdf:application/pdf;Biamonte et al. - 2018 - Quantum Supremacy Lower Bounds by Entanglement Sca.pdf:/home/nicholas/Zotero/storage/FNFNNH22/Biamonte et al. - 2018 - Quantum Supremacy Lower Bounds by Entanglement Sca.pdf:application/pdf}
}

@article{farhi_quantum_2016,
	title = {Quantum Supremacy through the Quantum Approximate Optimization Algorithm},
	journaltitle = {{ArXiv} e-prints},
	author = {Farhi, E. and Harrow, A. W},
	date = {2016-02},
	keywords = {Quantum Physics},
	file = {Farhi and Harrow - 2016 - Quantum Supremacy through the Quantum Approximate .pdf:/home/nicholas/Zotero/storage/2KLVVV49/Farhi and Harrow - 2016 - Quantum Supremacy through the Quantum Approximate .pdf:application/pdf}
}

@article{ciliberto_quantum_2018,
	title = {Quantum machine learning: a classical perspective},
	volume = {474},
	issn = {1364-5021},
	abstract = {Recently, increased computational power and data availability, as well as algorithmic advances, have led machine learning ({ML}) techniques to impressive results in regression, classification, data generation and reinforcement learning tasks. Despite these successes, the proximity to the physical limits of chip fabrication alongside the increasing size of datasets is motivating a growing number of researchers to explore the possibility of harnessing the power of quantum computation to speed up classical {ML} algorithms. Here we review the literature in quantum {ML} and discuss perspectives for a mixed readership of classical {ML} and quantum computation experts. Particular emphasis will be placed on clarifying the limitations of quantum algorithms, how they compare with their best classical counterparts and why quantum resources are expected to provide advantages for learning problems. Learning in the presence of noise and certain computationally hard problems in {ML} are identified as promising directions for the field. Practical questions, such as how to upload classical data into quantum form, will also be addressed.},
	pages = {Proceedings. Mathematical, Physical, and Engineering Sciences, 2018, Vol.474(2209)},
	number = {2209},
	journaltitle = {Proceedings. Mathematical, Physical, and Engineering Sciences},
	author = {Ciliberto, Carlo and Herbster, Mark and Ialongo, Alessandro Davide and Pontil, Massimiliano and Rocchetto, Andrea and Severini, Simone and Wossnig, Leonard},
	date = {2018},
	keywords = {1003, 1009, 157, 159, 7, Machine Learning, Quantum, Quantum Computing, Review Articles}
}

@article{li_quantum_2018,
	title = {Quantum Supremacy Circuit Simulation on Sunway {TaihuLight}},
	journaltitle = {{ArXiv} e-prints},
	author = {Li, R. and Wu, B. and Ying, M. and Sun, X. and Yang, G.},
	date = {2018-04},
	keywords = {Quantum Physics},
	file = {Li et al. - 2018 - Quantum Supremacy Circuit Simulation on Sunway Tai.pdf:/home/nicholas/Zotero/storage/9HRY8G7U/Li et al. - 2018 - Quantum Supremacy Circuit Simulation on Sunway Tai.pdf:application/pdf}
}

@book{nielsen_quantum_2000,
	location = {Cambridge},
	title = {Quantum computation and quantum information},
	isbn = {0-521-63235-8},
	publisher = {Cambridge University Press},
	author = {Nielsen, Michael A.},
	date = {2000},
	lccn = {00028911},
	keywords = {Quantum computers}
}

@article{aggarwal_quantum_2017,
	title = {Quantum attacks on Bitcoin, and how to protect against them},
	journaltitle = {{ArXiv} e-prints},
	author = {Aggarwal, D. and Brennen, G. K. and Lee, T. and Santha, M. and Tomamichel, M.},
	date = {2017-10},
	keywords = {Quantum Physics, Quantitative Finance - General Finance},
	file = {Aggarwal et al. - 2017 - Quantum attacks on Bitcoin, and how to protect aga.pdf:/home/nicholas/Zotero/storage/TE46XKKY/Aggarwal et al. - 2017 - Quantum attacks on Bitcoin, and how to protect aga.pdf:application/pdf}
}

@article{cleve_quantum_1998,
	title = {Quantum algorithms revisited},
	volume = {454},
	issn = {1364-5021},
	abstract = {Quantum computers use the quantum interference of different computational paths to enhance correct outcomes and suppress erroneous outcomes of computations. A common pattern underpinning quantum algorithms can be identified when quantum computation is viewed as multiparticle interference. We use this approach to review (and improve) some of the existing quantum algorithms and to show how they are related to different instances of quantum phase estimation. We provide an explicit algorithm for generating any prescribed interference pattern with an arbitrary precision.},
	pages = {339,354},
	number = {1969},
	journaltitle = {Proceedings of the Royal Society A: Mathematical, Physical and Engineering Sciences},
	author = {Cleve, R. and Ekert, A. and Macchiavello, C. and Mosca, M.},
	date = {1998},
	keywords = {Quantum Physics},
	file = {Cleve et al. - 1998 - Quantum Algorithms Revisited.pdf:/home/nicholas/Zotero/storage/WMBZQ39S/Cleve et al. - 1998 - Quantum Algorithms Revisited.pdf:application/pdf}
}

@article{shor_polynomial-time_1997,
	title = {Polynomial-time algorithms for prime factorization and discrete logarithms on a quantum computer},
	issn = {0097-5397},
	pages = {SIAM journal on computing , (5), p.1509},
	number = {5},
	journaltitle = {{SIAM} journal on computing},
	author = {Shor, Peter W},
	date = {1997},
	lccn = {sn 97004539},
	keywords = {Quantum Physics},
	file = {Shor - 1997 - Polynomial-Time Algorithms for Prime Factorization.pdf:/home/nicholas/Zotero/storage/FCY4NSPU/Shor - 1997 - Polynomial-Time Algorithms for Prime Factorization.pdf:application/pdf}
}

@article{hadfield_quantum_2018,
	title = {Quantum Algorithms for Scientific Computing and Approximate Optimization},
	journaltitle = {{ArXiv} e-prints},
	author = {Hadfield, S.},
	date = {2018-05},
	keywords = {Quantum Physics},
	file = {Hadfield - 2018 - Quantum Algorithms for Scientific Computing and Ap.pdf:/home/nicholas/Zotero/storage/GQHR9AJ6/Hadfield - 2018 - Quantum Algorithms for Scientific Computing and Ap.pdf:application/pdf;Hadfield - 2018 - Quantum Algorithms for Scientific Computing and Ap.pdf:/home/nicholas/Zotero/storage/I7HQU88D/Hadfield - 2018 - Quantum Algorithms for Scientific Computing and Ap.pdf:application/pdf}
}

@article{smelyanskiy_qhipster:_2016,
	title = {{qHiPSTER}: The Quantum High Performance Software Testing Environment},
	journaltitle = {{ArXiv} e-prints},
	author = {Smelyanskiy, M. and Sawaya, N. P. D. and Aspuru-Guzik, A.},
	date = {2016-01},
	keywords = {Quantum Physics, Computer Science - Distributed, Parallel, and Cluster Computing, Computer Science - Distributed, Parallel, and Cluster Computing},
	file = {Smelyanskiy et al. - 2016 - qHiPSTER The Quantum High Performance Software Te.pdf:/home/nicholas/Zotero/storage/NLAVR86S/Smelyanskiy et al. - 2016 - qHiPSTER The Quantum High Performance Software Te.pdf:application/pdf}
}

@article{silver_mastering_2016,
	title = {Mastering the game of Go with deep neural networks and tree search},
	volume = {529},
	issn = {0028-0836},
	pages = {484--9},
	number = {7587},
	journaltitle = {Nature},
	author = {Silver, David and Huang, Aja and Maddison, Chris J. and Guez, Arthur and Sifre, Laurent and Driessche, George Van Den and Schrittwieser, Julian and Antonoglou, Ioannis and Panneershelvam, Veda and Lanctot, Marc and Dieleman, Sander and Grewe, Dominik and Nham, John and Kalchbrenner, Nal and Sutskever, Ilya and Lillicrap, Timothy and Leach, Madeleine and Kavukcuoglu, Koray and Graepel, Thore and Hassabis, Demis},
	date = {2016},
	keywords = {Artificial Neural Networks – Analysis, Monte Carlo Methods – Analysis}
}

@article{grover_how_1998,
	title = {How fast can a quantum computer search?},
	journaltitle = {eprint {arXiv}:quant-ph/9809029},
	author = {Grover, L. K.},
	date = {1998-09},
	keywords = {Quantum Physics},
	file = {Grover - How fast can a quantum computer search.pdf:/home/nicholas/Zotero/storage/WAZC8IAT/Grover - How fast can a quantum computer search.pdf:application/pdf}
}

@book{diestel_graph_2000,
	location = {New York},
	edition = {2nd ed.},
	title = {Graph theory},
	isbn = {0-387-95014-1},
	series = {Graduate texts in mathematics ; 173.},
	publisher = {Springer},
	author = {Diestel, Reinhard.},
	date = {2000},
	lccn = {99057468},
	keywords = {Graph theory}
}

@article{zager_graph_2008,
	title = {Graph similarity scoring and matching},
	volume = {21},
	issn = {0893-9659},
	url = {http://www.sciencedirect.com/science/article/pii/S0893965907001012},
	doi = {https://doi.org/10.1016/j.aml.2007.01.006},
	pages = {86 -- 94},
	number = {1},
	journaltitle = {Applied Mathematics Letters},
	author = {Zager, Laura A. and Verghese, George C.},
	date = {2008},
	keywords = {Graph algorithms, Graph alignment, Graph matching, Graphs and networks, Similarity measures}
}

@article{kollias_fast_2014,
	title = {Fast parallel algorithms for graph similarity and matching},
	volume = {74},
	issn = {0743-7315},
	url = {http://www.sciencedirect.com/science/article/pii/S0743731513002529},
	doi = {https://doi.org/10.1016/j.jpdc.2013.12.010},
	pages = {2400 -- 2410},
	number = {5},
	journaltitle = {Journal of Parallel and Distributed Computing},
	author = {Kollias, Giorgos and Sathe, Madan and Schenk, Olaf and Grama, Ananth},
	date = {2014},
	keywords = {Graph alignment, Auction algorithm, Parallel matching, Vertex similarity}
}

@article{hadfield_quantum_2017,
	title = {From the Quantum Approximate Optimization Algorithm to a Quantum Alternating Operator Ansatz},
	journaltitle = {{ArXiv} e-prints},
	author = {Hadfield, S. and Wang, Z. and O'Gorman, B. and Rieffel, E. G. and Venturelli, D. and Biswas, R.},
	date = {2017-09},
	keywords = {Quantum Physics},
	file = {Hadfield et al. - 2017 - From the Quantum Approximate Optimization Algorith.pdf:/home/nicholas/Zotero/storage/CCCP2Y6N/Hadfield et al. - 2017 - From the Quantum Approximate Optimization Algorith.pdf:application/pdf}
}

@article{wiskott_face_1997,
	title = {Face recognition by elastic bunch graph matching},
	volume = {19},
	issn = {0162-8828},
	doi = {10.1109/34.598235},
	pages = {775--779},
	number = {7},
	journaltitle = {{IEEE} Transactions on Pattern Analysis and Machine Intelligence},
	author = {Wiskott, L. and Krüger, N. and Kuiger, N. and Malsburg, C. von der},
	date = {1997-07},
	keywords = {Biology computing, Computer science, Data mining, data structure, Data structures, elastic bunch graph matching, face recognition, Face recognition, Frequency, Gabor wavelet transform, graph theory, Image databases, image graphs, Image recognition, labeled graphs, large database, node positioning, object-adapted graphs, phase information, spatial data structures, visual databases, Wavelet coefficients, wavelet transforms, Wavelet transforms}
}

@article{zheng_efficient_2015,
	title = {Efficient Graph Similarity Search Over Large Graph Databases},
	volume = {27},
	issn = {1041-4347},
	abstract = {Since many graph data are often noisy and incomplete in real applications, it has become increasingly important to retrieve graphs g in the graph database D that approximately match the query graph q, rather than exact graph matching. In this paper, we study the problem of graph similarity search, which retrieves graphs that are similar to a given query graph under the constraint of graph edit distance. We propose a systematic method for edit-distance based similarity search problem. Specifically, we derive two lower bounds, i.e., partition-based and branch-based bounds, from different perspectives. More importantly, a hybrid lower bound incorporating both ideas of the two lower bounds is proposed, which is theoretically proved to have higher (at least not lower) pruning power than using the two lower bounds together. We also present a uniform index structure, namely u-tree, to facilitate effective pruning and efficient query processing. Extensive experiments confirm that our proposed approach outperforms the existing approaches significantly, in terms of both the pruning power and query response time.},
	pages = {964,978},
	number = {4},
	journaltitle = {Knowledge and Data Engineering, {IEEE} Transactions on},
	author = {Zheng, Weiguo and Zou, Lei and Lian, Xiang and Wang, Dong and Zhao, Dongyan},
	date = {2015},
	keywords = {Computing and Processing, Chemicals, Compounds, Graph Database, Graph Edit Distance, Graph Similarity Search, Indexes, Lower Bound, Partitioning Algorithms, Search Problems, Transforms}
}

@article{lades_distortion_1993,
	title = {Distortion invariant object recognition in the dynamic link architecture},
	volume = {42},
	issn = {0018-9340},
	doi = {10.1109/12.210173},
	pages = {300--311},
	number = {3},
	journaltitle = {{IEEE} Transactions on Computers},
	author = {Lades, M. and Vorbruggen, J. C. and Buhmann, J. and Lange, J. and Malsburg, C. von der and Wurtz, R. P. and Konen, W.},
	date = {1993-03},
	keywords = {face recognition, Face recognition, Image recognition, artificial neural networks, Artificial neural networks, Cameras, cellular signals, Cost function, dynamic link architecture, fine-scale temporal structure, geometrical distance vectors, gray-level camera images, human faces, Humans, local power spectrum, matching cost function, multiresolution description, Neurons, Object recognition, object recognition system, self-organising feature maps, Signal resolution, sparse graphs, stochastic optimization, Stochastic processes, transputer network, transputer systems, video images}
}

@article{crosson_different_2014,
	title = {Different Strategies for Optimization Using the Quantum Adiabatic Algorithm},
	journaltitle = {{ArXiv} e-prints},
	author = {Crosson, E. and Farhi, E. and Yen-Yu Lin, C. and Lin, H.-H. and Shor, P.},
	date = {2014-01},
	keywords = {Quantum Physics},
	file = {Crosson et al. - 2014 - Different Strategies for Optimization Using the Qu.pdf:/home/nicholas/Zotero/storage/BDEU2IS6/Crosson et al. - 2014 - Different Strategies for Optimization Using the Qu.pdf:application/pdf}
}

@article{heymans_deriving_2003,
	title = {Deriving phylogenetic trees from the similarity analysis of metabolic pathways},
	volume = {19},
	issn = {1367-4803},
	abstract = {Motivation: Comparative analysis of metabolic pathways in different genomes can give insights into the understanding of evolutionary and organizational relationships among species. This type of analysis allows one to measure the evolution of complete processes (with different functional roles) rather than the individual elements of a conventional analysis. We present a new technique for the phylogenetic analysis of metabolic pathways based on the topology of the underlying graphs. A distance measure between graphs is defined using the similarity between nodes of the graphs and the structural relationship between them. This distance measure is applied to the enzyme-enzyme relational graphs derived from metabolic pathways. Using this approach, pathways and group of pathways of different organisms are compared to each other and the resulting distance matrix is used to obtain a phylogenetic tree. Results: We apply the method to the Citric Acid Cycle and the Glycolysis pathways of different groups of organisms, as well as to the Carbohydrate metabolic networks. Phylogenetic trees obtained from the experiments were close to existing phylogenies and revealed interesting relationships among organisms. Availability: Software available upon request from the authors. Supplementary information: The technical report is available at http://www.cs.ucsb.edu/ maureen/{TR}200233.pdf Contact: ambuj@cs.ucsb.edu Keywords: phylogenetic trees, metabolic pathways, graph comparison},
	pages = {138,146},
	number = {1},
	journaltitle = {Bioinformatics},
	author = {Heymans, Maureen and Singh, Ambuj K.},
	date = {2003},
	keywords = {Biology}
}

@article{hattori_development_2003,
	title = {Development of a chemical structure comparison method for integrated analysis of chemical and genomic information in the metabolic pathways},
	volume = {125},
	issn = {0002-7863},
	pages = {11853,11865},
	number = {39},
	journaltitle = {Journal Of The American Chemical Society},
	author = {Hattori, M and Okuno, Y and Goto, S and Kanehisa, M},
	date = {2003}
}

@article{al-mohy_computing_2011,
	title = {Computing the Action of the Matrix Exponential, with an Application to Exponential Integrators},
	volume = {33},
	issn = {1064-8275, 1095-7197},
	url = {http://epubs.siam.org/doi/10.1137/100788860},
	doi = {10.1137/100788860},
	pages = {488--511},
	number = {2},
	journaltitle = {{SIAM} Journal on Scientific Computing},
	author = {Al-Mohy, Awad H. and Higham, Nicholas J.},
	urldate = {2018-08-02},
	date = {2011-01},
	langid = {english},
	file = {Al-Mohy and Higham - 2011 - Computing the Action of the Matrix Exponential, wi.pdf:/home/nicholas/Zotero/storage/78WDAGQP/Al-Mohy and Higham - 2011 - Computing the Action of the Matrix Exponential, wi.pdf:application/pdf}
}

@article{jones_lipschitzian_1993,
	title = {Lipschitzian optimization without the Lipschitz constant},
	volume = {79},
	issn = {0022-3239},
	pages = {157,181},
	number = {1},
	journaltitle = {Journal of Optimization Theory and Applications},
	author = {Jones, D. and Perttunen, C. and Stuckman, B.},
	date = {1993},
	keywords = {Global optimization, Lipschitzian optimization, space covering, space partitioning}
}

@article{rinnooy_kan_stochastic_1987,
	title = {Stochastic global optimization methods part {II}: Multi level methods},
	volume = {39},
	issn = {0025-5610},
	abstract = {In Part {II} of our paper, two stochastic methods for global optimization are described that, with probability 1, find all relevant local minima of the objective function with the smallest possible number of local searches. The computational performance of these methods is examined both analytically and empirically.},
	pages = {57,78},
	number = {1},
	journaltitle = {Mathematical Programming},
	author = {Rinnooy Kan, A. and Timmer, G.},
	date = {1987},
	keywords = {computational experiments, Global optimisation}
}

@article{powell_direct_1998,
	title = {Direct search algorithms for optimization calculations},
	volume = {7},
	doi = {10.1017/S0962492900002841},
	pages = {287--336},
	journaltitle = {Acta Numerica},
	author = {Powell, M. J. D.},
	date = {1998}
}

@article{pednault_breaking_2017,
	title = {Breaking the 49-Qubit Barrier in the Simulation of Quantum Circuits},
	journaltitle = {{ArXiv} e-prints},
	author = {Pednault, E. and Gunnels, J. A. and Nannicini, G. and Horesh, L. and Magerlein, T. and Solomonik, E. and Wisnieff, R.},
	date = {2017-10},
	keywords = {Quantum Physics}
}

@article{chen_64-qubit_2018,
	title = {64-Qubit Quantum Circuit Simulation},
	journaltitle = {{ArXiv} e-prints},
	author = {Chen, Z.-Y. and Zhou, Q. and Xue, C. and Yang, X. and Guo, G.-C. and Guo, G.-P.},
	date = {2018-02},
	keywords = {Quantum Physics}
}

@book{garey_computers_1979,
	location = {San Francisco},
	title = {Computers and intractability : a guide to the theory of {NP}-completeness},
	isbn = {0-7167-1044-7},
	series = {A Series of books in the mathematical sciences.},
	publisher = {W. H. Freeman},
	author = {Garey, Michael R.},
	date = {1979},
	lccn = {78012361},
	keywords = {Computer programming}
}

@article{strug_automatic_2013,
	title = {Automatic design quality evaluation using graph similarity measures},
	volume = {32},
	issn = {0926-5805},
	url = {http://www.sciencedirect.com/science/article/pii/S0926580512002543},
	doi = {https://doi.org/10.1016/j.autcon.2012.12.015},
	pages = {187 -- 195},
	journaltitle = {Automation in Construction},
	author = {Strug, B.},
	date = {2013},
	keywords = {Design evaluation, Graph kernels, Graph representation, Graph similarity, Hierarchical graphs, Spatial relations}
}

@article{kleinberg_authoritative_1999,
	title = {Authoritative sources in a hyperlinked environment},
	volume = {46},
	issn = {0004-5411},
	url = {http://search.proquest.com/docview/1808072946/},
	pages = {604,632},
	number = {5},
	journaltitle = {Journal of the {ACM}},
	author = {Kleinberg, Jon},
	date = {1999},
	keywords = {Algorithms, Computing Milieux (General) (Ci), Eigenvectors, Environmental Impact Statements, Formulations, Graph Algorithms, Graphs, Hypertext Structure, Joints, Link Analysis, Links, Networks, World Wide Web}
}

@book{kelly_preview_2018,
	title = {A Preview of Bristlecone, Google's New Quantum Processor},
	url = {https://research.googleblog.com/2018/03/a-preview-of-bristlecone-googles-new.html},
	author = {Kelly, Julian},
	date = {2018}
}

@article{nelder_simplex_1965,
	title = {A Simplex Method for Function Minimization},
	volume = {7},
	issn = {0010-4620},
	pages = {308,313},
	number = {4},
	journaltitle = {The Computer Journal},
	author = {Nelder, J. A. and Mead, R.},
	date = {1965}
}

@article{grover_fast_1996,
	title = {A fast quantum mechanical algorithm for database search},
	journaltitle = {eprint {arXiv}:quant-ph/9605043},
	author = {Grover, L. K.},
	date = {1996-05},
	keywords = {Quantum Physics}
}

@book{chailloux_efficient_2017,
	title = {An Efficient Quantum Collision Search Algorithm and Implications on Symmetric Cryptography},
	url = {https://eprint.iacr.org/2017/847},
	author = {Chailloux, André and Naya-Plasencia, María and Schrottenloher, André},
	date = {2017},
	note = {Published: Cryptology {ePrint} Archive, Report 2017/847}
}

@article{boixo_characterizing_2016,
	title = {Characterizing Quantum Supremacy in Near-Term Devices},
	journaltitle = {{ArXiv} e-prints},
	author = {Boixo, S. and Isakov, S. V. and Smelyanskiy, V. N. and Babbush, R. and Ding, N. and Jiang, Z. and Bremner, M. J. and Martinis, J. M. and Neven, H.},
	date = {2016-07},
	keywords = {Quantum Physics}
}

@inproceedings{sete_functional_2016,
	title = {A functional architecture for scalable quantum computing},
	doi = {10.1109/ICRC.2016.7738703},
	pages = {1--6},
	booktitle = {2016 {IEEE} International Conference on Rebooting Computing ({ICRC})},
	author = {Sete, E. A. and Zeng, W. J. and Rigetti, C. T.},
	date = {2016-10},
	keywords = {building blocks-superconducting qubits, computer architecture, Computer architecture, computing devices, core building blocks, Couplings, {DC} controlled two-qubit gate, fluxonium qubits, Integrated circuit modeling, Logic gates, optimization algorithms, parametric amplifiers, planar lattice, quantum chemistry simulations, quantum computing, Quantum computing, quantum error correction schemes, quantum processor systems, quantum-limited amplifiers, scalable functional architecture, scalable quantum computing, scalable technology, superconducting integrated circuits, Superconducting integrated circuits, Superconducting microwave devices, superconducting quantum circuits, transmon}
}

@article{shor_algorithms_1994,
	title = {Algorithms for quantum computation: discrete logarithms and factoring},
	abstract = {A computer is generally considered to be a universal computational device; i.e., it is believed able to simulate any physical computational device with a cost in computation time of at most a polynomial factor: It is not clear whether this is still true when quantum mechanics is taken into consideration. Several researchers, starting with David Deutsch, have developed models for quantum mechanical computers and have investigated their computational properties. This paper gives Las Vegas algorithms for finding discrete logarithms and factoring integers on a quantum computer that take a number of steps which is polynomial in the input size, e.g., the number of digits of the integer to be factored. These two problems are generally considered hard on a classical computer and have been used as the basis of several proposed cryptosystems. We thus give the first examples of quantum cryptanalysis.{\textgreater}},
	pages = {124,134},
	journaltitle = {Foundations of Computer Science, 1994 Proceedings., 35th Annual Symposium on},
	author = {Shor, P.W.},
	date = {1994}
}

@article{farhi_quantum_2014,
	title = {A Quantum Approximate Optimization Algorithm Applied to a Bounded Occurrence Constraint Problem},
	journaltitle = {{ArXiv} e-prints},
	author = {Farhi, E. and Goldstone, J. and Gutmann, S.},
	date = {2014-12},
	keywords = {Quantum Physics}
}

@article{farhi_quantum_2001,
	title = {A quantum adiabatic evolution algorithm applied to random instances of an {NP}-complete problem},
	volume = {292},
	issn = {00368075},
	url = {http://search.proquest.com/docview/213575154/?pq-origsite=primo},
	abstract = {A quantum system will stay near its instantaneous ground state if the Hamitonian that governs its evolution varies slowly enough. This quantum adiabatic behavior is the basis of a new class of algorithms for quantum computing.},
	pages = {472,5},
	number = {5516},
	journaltitle = {Science},
	author = {Farhi, Edward and Goldstone, Jeffrey and Gutmann, Sam and Lapan, Joshua},
	date = {2001},
	keywords = {Algorithms, Computers, Quantum Theory, Software},
	file = {Farhi et al. - 2001 - A Quantum Adiabatic Evolution Algorithm Applied to.pdf:/home/nicholas/Zotero/storage/5XJ7IXJY/Farhi et al. - 2001 - A Quantum Adiabatic Evolution Algorithm Applied to.pdf:application/pdf}
}

@article{moler_nineteen_2003,
	title = {Nineteen Dubious Ways to Compute the Exponential of a Matrix, Twenty-Five Years Later},
	volume = {45},
	issn = {0036-1445, 1095-7200},
	url = {http://epubs.siam.org/doi/10.1137/S00361445024180},
	doi = {10.1137/S00361445024180},
	abstract = {In principle, the exponential of a matrix could be computed in many ways. Methods involving approximation theory, diﬀerential equations, the matrix eigenvalues, and the matrix characteristic polynomial have been proposed. In practice, consideration of computational stability and eﬃciency indicates that some of the methods are preferable to others, but that none are completely satisfactory.},
	pages = {3--49},
	number = {1},
	journaltitle = {{SIAM} Review},
	author = {Moler, Cleve and Van Loan, Charles},
	urldate = {2018-08-02},
	date = {2003-01},
	langid = {english},
	file = {Moler and Van Loan - 2003 - Nineteen Dubious Ways to Compute the Exponential o.pdf:/home/nicholas/Zotero/storage/HIG9VH9P/Moler and Van Loan - 2003 - Nineteen Dubious Ways to Compute the Exponential o.pdf:application/pdf}
}

@software{ibm_qiskit_nodate,
	title = {Qiskit Aqua},
	url = {https://github.com/Qiskit/aqua},
	author = {{IBM}}
}

@article{markov_quantum_2018,
	title = {Quantum Supremacy Is Both Closer and Farther than It Appears},
	url = {http://arxiv.org/abs/1807.10749},
	abstract = {As quantum computers improve in the number of qubits and ﬁdelity, the question of when they surpass state-of-the-art classical computation for a well-deﬁned computational task is attracting much attention. The leading candidate task for this quantum computational supremacy milestone entails sampling from the output distribution deﬁned by a random quantum circuit. We perform this task on conventional computers for larger circuits than in previous results, by trading circuit ﬁdelity for computational resources to match the ﬁdelity of a given quantum computer. By using publicly available Google Cloud Computing, we can price such simulations and enable comparisons by total cost across multiple hardware types. We simulate approximate sampling from the output of a circuit with 7 × 8 qubits and depth 1 + 40 + 1 by producing one million bitstring probabilities with ﬁdelity 0.5\%, at an estimated cost of \$35184. The simulation costs scale linearly with ﬁdelity. We estimate the cost of approximate sampling for 7 × 7 qubits at depth 1 + 48 + 1 and ﬁdelity 0.5\% to be one million dollars. Approximate sampling would take seconds on a quantum computer. We pay particular attention to validating simulation results. Finally, we explain why recently reﬁned benchmarks substantially increase computation cost of leading simulators, halving the circuit depth that can be simulated within the same time.},
	journaltitle = {{arXiv}:1807.10749 [quant-ph]},
	author = {Markov, Igor L. and Fatima, Aneeqa and Isakov, Sergei V. and Boixo, Sergio},
	urldate = {2018-08-01},
	date = {2018-07-27},
	langid = {english},
	eprinttype = {arxiv},
	eprint = {1807.10749},
	keywords = {Quantum Physics, Computer Science - Emerging Technologies, Computer Science - Distributed, Parallel, and Cluster Computing},
	file = {Markov et al. - 2018 - Quantum Supremacy Is Both Closer and Farther than .pdf:/home/nicholas/Zotero/storage/VZWN6AJ3/Markov et al. - 2018 - Quantum Supremacy Is Both Closer and Farther than .pdf:application/pdf}
}

@article{otterbach_unsupervised_2017,
	title = {Unsupervised Machine Learning on a Hybrid Quantum Computer},
	journaltitle = {{ArXiv} e-prints},
	author = {Otterbach, J. S. and Manenti, R. and Alidoust, N. and Bestwick, A. and Block, M. and Bloom, B. and Caldwell, S. and Didier, N. and Schuyler Fried, E. and Hong, S. and Karalekas, P. and Osborn, C. B. and Papageorge, A. and Peterson, E. C. and Prawiroatmodjo, G. and Rubin, N. and Ryan, C. A. and Scarabelli, D. and Scheer, M. and Sete, E. A. and Sivarajah, P. and Smith, R. S. and Staley, A. and Tezak, N. and Zeng, W. J. and Hudson, A. and Johnson, B. R. and Reagor, M. and da Silva, M. P. and Rigetti, C.},
	date = {2017-12},
	keywords = {Quantum Physics},
	file = {Otterbach et al. - 2017 - Unsupervised Machine Learning on a Hybrid Quantum .pdf:/home/nicholas/Zotero/storage/GLLDWDA6/Otterbach et al. - 2017 - Unsupervised Machine Learning on a Hybrid Quantum .pdf:application/pdf}
}

@article{higham_block_2000,
	title = {A Block Algorithm for Matrix 1-Norm Estimation, with an Application to 1-Norm Pseudospectra},
	volume = {21},
	issn = {0895-4798, 1095-7162},
	url = {http://epubs.siam.org/doi/10.1137/S0895479899356080},
	doi = {10.1137/S0895479899356080},
	abstract = {The matrix 1-norm estimation algorithm used in {LAPACK} and various other software libraries and packages has proved to be a valuable tool. However, it has the limitations that it oﬀers the user no control over the accuracy and reliability of the estimate and that it is based on level 2 {BLAS} operations. A block generalization of the 1-norm power method underlying the estimator is derived here and developed into a practical algorithm applicable to both real and complex matrices. The algorithm works with n × t matrices, where t is a parameter. For t = 1 the original algorithm is recovered, but with two improvements (one for real matrices and one for complex matrices). The accuracy and reliability of the estimates generally increase with t and the computational kernels are level 3 {BLAS} operations for t {\textgreater} 1. The last t − 1 columns of the starting matrix are randomly chosen, giving the algorithm a statistical ﬂavor. As a by-product of our investigations we identify a matrix for which the 1-norm power method takes the maximum number of iterations. As an application of the new estimator we show how it can be used to eﬃciently approximate 1-norm pseudospectra.},
	pages = {1185--1201},
	number = {4},
	journaltitle = {{SIAM} Journal on Matrix Analysis and Applications},
	author = {Higham, Nicholas J. and Tisseur, Françoise},
	urldate = {2018-08-11},
	date = {2000-01},
	langid = {english},
	file = {Higham and Tisseur - 2000 - A Block Algorithm for Matrix 1-Norm Estimation, wi.pdf:/home/nicholas/Zotero/storage/MWQYSHPS/Higham and Tisseur - 2000 - A Block Algorithm for Matrix 1-Norm Estimation, wi.pdf:application/pdf}
}

@article{dirac_new_1939,
	title = {A new notation for quantum mechanics},
	volume = {35},
	issn = {0305-0041, 1469-8064},
	url = {http://www.journals.cambridge.org/abstract_S0305004100021162},
	doi = {10.1017/S0305004100021162},
	pages = {416},
	number = {3},
	journaltitle = {Mathematical Proceedings of the Cambridge Philosophical Society},
	author = {Dirac, P. A. M.},
	urldate = {2018-09-12},
	date = {1939-07},
	langid = {english},
	file = {Dirac - 1939 - A new notation for quantum mechanics.pdf:/home/nicholas/Zotero/storage/2JV7WAAQ/Dirac - 1939 - A new notation for quantum mechanics.pdf:application/pdf}
}

@online{weisstein_graph_2018,
	title = {Graph Isomorphism Complete},
	url = {http://mathworld.wolfram.com/GraphIsomorphismComplete.html},
	titleaddon = {{MathWorld} - A Wolfram Web Resource},
	author = {Weisstein, Eric W.},
	urldate = {2018-09-13},
	date = {2018}
}

@article{izaac_pyctqw:_2015,
	title = {{pyCTQW}: A continuous-time quantum walk simulator on distributed memory computers},
	volume = {186},
	issn = {00104655},
	url = {http://linkinghub.elsevier.com/retrieve/pii/S0010465514003166},
	doi = {10.1016/j.cpc.2014.09.011},
	shorttitle = {{pyCTQW}},
	abstract = {In the general field of quantum information and computation, quantum walks are playing an increasingly important role in constructing physical models and quantum algorithms. We have recently developed a distributed memory software package {pyCTQW}, with an object-oriented Python interface, that allows efficient simulation of large multi-particle {CTQW} (continuous-time quantum walk)-based systems. In this paper, we present an introduction to the Python and Fortran interfaces of {pyCTQW}, discuss various numerical methods of calculating the matrix exponential, and demonstrate the performance behavior of {pyCTQW} on a distributed memory cluster. In particular, the Chebyshev and Krylov-subspace methods for calculating the quantum walk propagation are provided, as well as methods for visualization and data analysis.},
	pages = {81--92},
	journaltitle = {Computer Physics Communications},
	author = {Izaac, Josh A. and Wang, Jingbo B.},
	urldate = {2018-09-16},
	date = {2015-01},
	langid = {english},
	file = {Izaac and Wang - 2015 - pyCTQW A continuous-time quantum walk simulator o.pdf:/home/nicholas/Zotero/storage/C5S3QIRJ/Izaac and Wang - 2015 - pyCTQW A continuous-time quantum walk simulator o.pdf:application/pdf}
}

@article{al-mohy_new_2010,
	title = {A New Scaling and Squaring Algorithm for the Matrix Exponential},
	volume = {31},
	issn = {0895-4798, 1095-7162},
	url = {http://epubs.siam.org/doi/10.1137/09074721X},
	doi = {10.1137/09074721X},
	abstract = {The scaling and squaring method for the matrix exponential is based on the approximation {eA} ≈ (rm(2−{sA}))2s , where rm(x) is the [m/m] Pad´e approximant to ex and the integers m and s are to be chosen. Several authors have identiﬁed a weakness of existing scaling and squaring algorithms termed overscaling, in which a value of s much larger than necessary is chosen, causing a loss of accuracy in ﬂoating point arithmetic. Building on the scaling and squaring algorithm of Higham [{SIAM} J. Matrix Anal. Appl., 26 (2005), pp. 1179–1193], which is used by the {MATLAB} function expm, we derive a new algorithm that alleviates the overscaling problem. Two key ideas are employed. The ﬁrst, speciﬁc to triangular matrices, is to compute the diagonal elements in the squaring phase as exponentials instead of from powers of rm. The second idea is to base the backward error analysis that underlies the algorithm on members of the sequence \{ Ak 1/k\} instead of A , since for nonnormal matrices it is possible that Ak 1/k is much smaller than A , and indeed this is likely when overscaling occurs in existing algorithms. The terms Ak 1/k are estimated without computing powers of A by using a matrix 1-norm estimator in conjunction with a bound of the form Ak 1/k ≤ max Ap 1/p, Aq 1/q that holds for certain ﬁxed p and q less than k. The improvements to the truncation error bounds have to be balanced by the potential for a large A to cause inaccurate evaluation of rm in ﬂoating point arithmetic. We employ rigorous error bounds along with some heuristics to ensure that rounding errors are kept under control. Our numerical experiments show that the new algorithm generally provides accuracy at least as good as the existing algorithm of Higham at no higher cost, while for matrices that are triangular or cause overscaling it usually yields signiﬁcant improvements in accuracy, cost, or both.},
	pages = {970--989},
	number = {3},
	journaltitle = {{SIAM} Journal on Matrix Analysis and Applications},
	author = {Al-Mohy, Awad H. and Higham, Nicholas J.},
	urldate = {2018-09-16},
	date = {2010-01},
	langid = {english},
	file = {Al-Mohy and Higham - 2010 - A New Scaling and Squaring Algorithm for the Matri.pdf:/home/nicholas/Zotero/storage/YHQ7N2UL/Al-Mohy and Higham - 2010 - A New Scaling and Squaring Algorithm for the Matri.pdf:application/pdf}
}

@article{wang_quantum_1999,
	title = {Quantum waveguide theory: A direct solution to the time-dependent Schrödinger equation},
	volume = {60},
	issn = {0163-1829, 1095-3795},
	url = {https://link.aps.org/doi/10.1103/PhysRevB.60.13668},
	doi = {10.1103/PhysRevB.60.13668},
	shorttitle = {Quantum waveguide theory},
	pages = {13668--13675},
	number = {19},
	journaltitle = {Physical Review B},
	author = {Wang, J. B. and Midgley, S.},
	urldate = {2018-09-17},
	date = {1999-11-15},
	langid = {english},
	file = {Wang and Midgley - 1999 - Quantum waveguide theory A direct solution to the.pdf:/home/nicholas/Zotero/storage/FRU2FW7P/Wang and Midgley - 1999 - Quantum waveguide theory A direct solution to the.pdf:application/pdf}
}

@article{wang_time-dependent_1998,
	title = {Time-dependent approach to scattering by Chebyshev-polynomial expansion and the fast-Fourier-transform algorithm},
	volume = {57},
	issn = {1050-2947, 1094-1622},
	url = {https://link.aps.org/doi/10.1103/PhysRevA.57.3554},
	doi = {10.1103/PhysRevA.57.3554},
	pages = {3554--3559},
	number = {5},
	journaltitle = {Physical Review A},
	author = {Wang, J. B. and Scholz, T. T.},
	urldate = {2018-09-17},
	date = {1998-05-01},
	langid = {english},
	file = {Wang and Scholz - 1998 - Time-dependent approach to scattering by Chebyshev.pdf:/home/nicholas/Zotero/storage/G6TC448A/Wang and Scholz - 1998 - Time-dependent approach to scattering by Chebyshev.pdf:application/pdf}
}

@article{sidje_expokit:_1998,
	title = {Expokit: a software package for computing matrix exponentials},
	volume = {24},
	issn = {00983500},
	url = {http://portal.acm.org/citation.cfm?doid=285861.285868},
	doi = {10.1145/285861.285868},
	shorttitle = {Expokit},
	pages = {130--156},
	number = {1},
	journaltitle = {{ACM} Transactions on Mathematical Software},
	author = {Sidje, Roger B.},
	urldate = {2018-09-17},
	date = {1998-03-01},
	langid = {english},
	file = {Sidje - 1998 - Expokit a software package for computing matrix e.pdf:/home/nicholas/Zotero/storage/NJHE53BL/Sidje - 1998 - Expokit a software package for computing matrix e.pdf:application/pdf}
}

@article{bergamaschi_efficient_2000,
	title = {Efficient computation of the exponential operator for large, sparse, symmetric matrices},
	volume = {7},
	issn = {1070-5325, 1099-1506},
	url = {http://doi.wiley.com/10.1002/%28SICI%291099-1506%28200001/02%297%3A1%3C27%3A%3AAID-NLA185%3E3.0.CO%3B2-4},
	doi = {10.1002/(SICI)1099-1506(200001/02)7:1<27::AID-NLA185>3.0.CO;2-4},
	pages = {27--45},
	number = {1},
	journaltitle = {Numerical Linear Algebra with Applications},
	author = {Bergamaschi, Luca and Vianello, Marco},
	urldate = {2018-09-17},
	date = {2000-01},
	langid = {english},
	file = {Bergamaschi and Vianello - 2000 - Efficient computation of the exponential operator .pdf:/home/nicholas/Zotero/storage/429QR9ZD/Bergamaschi and Vianello - 2000 - Efficient computation of the exponential operator .pdf:application/pdf}
}

@article{moler_nineteen_1978,
	title = {Nineteen Dubious Ways to Compute the Exponential of a Matrix},
	volume = {20},
	issn = {0036-1445, 1095-7200},
	url = {http://epubs.siam.org/doi/10.1137/1020098},
	doi = {10.1137/1020098},
	abstract = {In principle, the exponential of a matrix could be computed in many ways. Methods involving approximation theory, differential equations, the matrix eigenvalues, and the matrix characteristic polynomial have been proposed. In practice, consideration of computational stability and efficiency indicates that some of the methods are preferable to others, but that none are completely satisfactory.},
	pages = {801--836},
	number = {4},
	journaltitle = {{SIAM} Review},
	author = {Moler, Cleve and Van Loan, Charles},
	urldate = {2018-09-17},
	date = {1978-10},
	langid = {english},
	file = {Moler and Van Loan - 1978 - Nineteen Dubious Ways to Compute the Exponential o.pdf:/home/nicholas/Zotero/storage/ZGDXBGY9/Moler and Van Loan - 1978 - Nineteen Dubious Ways to Compute the Exponential o.pdf:application/pdf}
}

@article{auckenthaler_matrix_2010,
	title = {Matrix exponentials and parallel prefix computation in a quantum control problem},
	volume = {36},
	issn = {01678191},
	url = {http://linkinghub.elsevier.com/retrieve/pii/S0167819110000244},
	doi = {10.1016/j.parco.2010.01.006},
	abstract = {Quantum control plays a key role in quantum technology, in particular for steering quantum systems. As problem size grows exponentially with the system size, it is necessary to deal with fast numerical algorithms and implementations. We improved an existing code for quantum control concerning two linear algebra tasks: the computation of the matrix exponential and efﬁcient parallelisation of preﬁx matrix multiplication.},
	pages = {359--369},
	number = {5},
	journaltitle = {Parallel Computing},
	author = {Auckenthaler, T. and Bader, M. and Huckle, T. and Spörl, A. and Waldherr, K.},
	urldate = {2018-09-17},
	date = {2010-06},
	langid = {english},
	file = {Auckenthaler et al. - 2010 - Matrix exponentials and parallel prefix computatio.pdf:/home/nicholas/Zotero/storage/9YA34WKA/Auckenthaler et al. - 2010 - Matrix exponentials and parallel prefix computatio.pdf:application/pdf}
}

@article{fang_one_1996,
	title = {One step time propagation method for systems with time-dependent Hamiltonians},
	volume = {263},
	issn = {00092614},
	url = {http://linkinghub.elsevier.com/retrieve/pii/S0009261496012729},
	doi = {10.1016/S0009-2614(96)01272-9},
	abstract = {A one step time propagation method for systems with time-dependent Hamiltonians is proposed. The essential feature of the method is the use of a smallest grid with only two points to represent the physical time coordinate. The method is in principle the most efficient possible among those based on an extended Hilbert space formalism. The efficiency is gained in sacrificing unphysical information contained in the vectors of an extended Hilbert space. The method is implemented and tested using an one-dimensional model system. The Chebyshev implementation of the method is found of capable of longer than 10 ns time propagation.},
	pages = {759--766},
	number = {6},
	journaltitle = {Chemical Physics Letters},
	author = {Fang, Jian-Yun},
	urldate = {2018-09-17},
	date = {1996-12},
	langid = {english},
	file = {Fang - 1996 - One step time propagation method for systems with .pdf:/home/nicholas/Zotero/storage/IHGMAPVF/Fang - 1996 - One step time propagation method for systems with .pdf:application/pdf}
}

@article{ndong_chebychev_2010,
	title = {A Chebychev propagator with iterative time ordering for explicitly time-dependent Hamiltonians},
	volume = {132},
	issn = {0021-9606, 1089-7690},
	url = {http://aip.scitation.org/doi/10.1063/1.3312531},
	doi = {10.1063/1.3312531},
	pages = {064105},
	number = {6},
	journaltitle = {The Journal of Chemical Physics},
	author = {Ndong, Mamadou and Tal-Ezer, Hillel and Kosloff, Ronnie and Koch, Christiane P.},
	urldate = {2018-09-17},
	date = {2010-02-14},
	langid = {english},
	file = {Ndong et al. - 2010 - A Chebychev propagator with iterative time orderin.pdf:/home/nicholas/Zotero/storage/TTUYY6UZ/Ndong et al. - 2010 - A Chebychev propagator with iterative time orderin.pdf:application/pdf}
}

@article{lucas_ising_2014,
	title = {Ising formulations of many {NP} problems},
	volume = {2},
	issn = {2296-424X},
	url = {http://arxiv.org/abs/1302.5843},
	doi = {10.3389/fphy.2014.00005},
	abstract = {We provide Ising formulations for many {NP}-complete and {NP}-hard problems, including all of Karp's 21 {NP}-complete problems. This collects and extends mappings to the Ising model from partitioning, covering and satisfiability. In each case, the required number of spins is at most cubic in the size of the problem. This work may be useful in designing adiabatic quantum optimization algorithms.},
	journaltitle = {Frontiers in Physics},
	author = {Lucas, Andrew},
	urldate = {2018-09-20},
	date = {2014},
	langid = {english},
	eprinttype = {arxiv},
	eprint = {1302.5843},
	keywords = {Quantum Physics, Computer Science - Computational Complexity, Computer Science - Data Structures and Algorithms, Condensed Matter - Statistical Mechanics},
	file = {Lucas - 2014 - Ising formulations of many NP problems.pdf:/home/nicholas/Zotero/storage/5D3ZQ8NP/Lucas - 2014 - Ising formulations of many NP problems.pdf:application/pdf;Lucas - 2014 - Ising formulations of many NP problems.pdf:/home/nicholas/Zotero/storage/QTDP44DS/Lucas - 2014 - Ising formulations of many NP problems.pdf:application/pdf}
}

@article{farhi_quantum_2014-1,
	title = {A Quantum Approximate Optimization Algorithm},
	url = {http://arxiv.org/abs/1411.4028},
	abstract = {We introduce a quantum algorithm that produces approximate solutions for combinatorial optimization problems. The algorithm depends on an integer p ≥ 1 and the quality of the approximation improves as p is increased. The quantum circuit that implements the algorithm consists of unitary gates whose locality is at most the locality of the objective function whose optimum is sought. The depth of the circuit grows linearly with p times (at worst) the number of constraints. If p is ﬁxed, that is, independent of the input size, the algorithm makes use of eﬃcient classical preprocessing. If p grows with the input size a diﬀerent strategy is proposed. We study the algorithm as applied to {MaxCut} on regular graphs and analyze its performance on 2-regular and 3-regular graphs for ﬁxed p. For p = 1, on 3-regular graphs the quantum algorithm always ﬁnds a cut that is at least 0.6924 times the size of the optimal cut.},
	journaltitle = {{arXiv}:1411.4028 [quant-ph]},
	author = {Farhi, Edward and Goldstone, Jeffrey and Gutmann, Sam},
	urldate = {2018-10-05},
	date = {2014-11-14},
	langid = {english},
	eprinttype = {arxiv},
	eprint = {1411.4028},
	keywords = {Quantum Physics},
	file = {Farhi et al. - 2014 - A Quantum Approximate Optimization Algorithm.pdf:/home/nicholas/Zotero/storage/HAUQN4PM/Farhi et al. - 2014 - A Quantum Approximate Optimization Algorithm.pdf:application/pdf;Farhi et al. - 2014 - A Quantum Approximate Optimization Algorithm.pdf:/home/nicholas/Zotero/storage/VNDD8CDZ/Farhi et al. - 2014 - A Quantum Approximate Optimization Algorithm.pdf:application/pdf}
}

@article{guerreschi_practical_2017,
	title = {Practical optimization for hybrid quantum-classical algorithms},
	url = {http://arxiv.org/abs/1701.01450},
	abstract = {A novel class of hybrid quantum-classical algorithms based on the variational approach have recently emerged from separate proposals addressing, for example, quantum chemistry and combinatorial problems. These algorithms provide an approximate solution to the problem at hand by encoding it in the state of a quantum computer. The operations used to prepare the state are not a priori fixed but, quite the opposite, are subjected to a classical optimization procedure that modifies the quantum gates and improves the quality of the approximate solution. While the quantum hardware determines the size of the problem and what states are achievable (limited, respectively, by the number of qubits and by the kind and number of possible quantum gates), it is the classical optimization procedure that determines the way in which the quantum states are explored and whether the best available solution is actually reached. In addition, the quantities required in the optimization, for example the objective function itself, have to be estimated with finite precision in any experimental implementation. While it is desirable to have very precise estimates, this comes at the cost of repeating the state preparation and measurement multiple times. Here we analyze the competing requirements of high precision and low number of repetitions and study how the overall performance of the variational algorithm is affected by the precision level and the choice of the optimization method. Finally, this study introduces quasi-Newton optimization methods in the general context of hybrid variational algorithms and presents quantitative results for the Quantum Approximate Optimization Algorithm.},
	journaltitle = {{arXiv}:1701.01450 [quant-ph]},
	author = {Guerreschi, Gian Giacomo and Smelyanskiy, Mikhail},
	urldate = {2018-10-05},
	date = {2017-01-05},
	langid = {english},
	eprinttype = {arxiv},
	eprint = {1701.01450},
	keywords = {Quantum Physics},
	file = {Guerreschi and Smelyanskiy - 2017 - Practical optimization for hybrid quantum-classica.pdf:/home/nicholas/Zotero/storage/LGQCZI9W/Guerreschi and Smelyanskiy - 2017 - Practical optimization for hybrid quantum-classica.pdf:application/pdf}
}

@article{barak_beating_2015,
	title = {Beating the random assignment on constraint satisfaction problems of bounded degree},
	url = {http://arxiv.org/abs/1505.03424},
	abstract = {We show that for any odd \$k\$ and any instance of the Max-{kXOR} constraint satisfaction problem, there is an efficient algorithm that finds an assignment satisfying at least a \${\textbackslash}frac\{1\}\{2\} + {\textbackslash}Omega(1/{\textbackslash}sqrt\{D\})\$ fraction of constraints, where \$D\$ is a bound on the number of constraints that each variable occurs in. This improves both qualitatively and quantitatively on the recent work of Farhi, Goldstone, and Gutmann (2014), which gave a {\textbackslash}emph\{quantum\} algorithm to find an assignment satisfying a \${\textbackslash}frac\{1\}\{2\} + {\textbackslash}Omega(D{\textasciicircum}\{-3/4\})\$ fraction of the equations. For arbitrary constraint satisfaction problems, we give a similar result for "triangle-free" instances; i.e., an efficient algorithm that finds an assignment satisfying at least a \${\textbackslash}mu + {\textbackslash}Omega(1/{\textbackslash}sqrt\{D\})\$ fraction of constraints, where \${\textbackslash}mu\$ is the fraction that would be satisfied by a uniformly random assignment.},
	journaltitle = {{arXiv}:1505.03424 [cs]},
	author = {Barak, Boaz and Moitra, Ankur and O'Donnell, Ryan and Raghavendra, Prasad and Regev, Oded and Steurer, David and Trevisan, Luca and Vijayaraghavan, Aravindan and Witmer, David and Wright, John},
	urldate = {2018-10-05},
	date = {2015-05-13},
	langid = {english},
	eprinttype = {arxiv},
	eprint = {1505.03424},
	keywords = {Computer Science - Computational Complexity, Computer Science - Data Structures and Algorithms},
	file = {Barak et al. - 2015 - Beating the random assignment on constraint satisf.pdf:/home/nicholas/Zotero/storage/5RJ9V282/Barak et al. - 2015 - Beating the random assignment on constraint satisf.pdf:application/pdf}
}

@article{sedgewick_permutation_1977,
	title = {Permutation Generation Methods},
	volume = {9},
	issn = {0360-0300},
	url = {http://doi.acm.org/10.1145/356689.356692},
	doi = {10.1145/356689.356692},
	pages = {137--164},
	number = {2},
	journaltitle = {{ACM} Comput. Surv.},
	author = {Sedgewick, Robert},
	date = {1977-06},
	file = {p137-sedgewick.pdf:/home/nicholas/Zotero/storage/NDANARL6/p137-sedgewick.pdf:application/pdf}
}

@article{marsh_quantum_2018,
	title = {A quantum walk assisted approximate algorithm for bounded {NP} optimisation problems},
	url = {http://arxiv.org/abs/1804.08227},
	abstract = {This paper describes an application of the Quantum Approximate Optimisation Algorithm ({QAOA}) to eﬃciently ﬁnd approximate solutions for computational problems contained in the polynomially bounded {NP} optimisation complexity class ({NPO} {PB}). We consider a generalisation of the {QAOA} state evolution to alternating quantum walks and solution-quality-dependent phase shifts, and use the quantum walks to integrate the problem constraints of {NPO} problems. We apply the recent concept of a hybrid quantum-classical variational scheme to attempt ﬁnding the highest expectation value, which contains a high-quality solution. The algorithm is applied to the problem of minimum vertex cover, showing promising results using only a ﬁxed and low number of optimisation parameters.},
	journaltitle = {{arXiv}:1804.08227 [quant-ph]},
	author = {Marsh, Samuel and Wang, Jingbo},
	urldate = {2018-10-15},
	date = {2018-04-22},
	langid = {english},
	eprinttype = {arxiv},
	eprint = {1804.08227},
	keywords = {Quantum Physics},
	file = {Marsh and Wang - 2018 - A quantum walk assisted approximate algorithm for .pdf:/home/nicholas/Zotero/storage/ANJM7VHM/Marsh and Wang - 2018 - A quantum walk assisted approximate algorithm for .pdf:application/pdf}
}

@article{chiew_graph_2018,
	title = {Graph comparison via nonlinear quantum search},
	url = {http://arxiv.org/abs/1810.01647},
	abstract = {In this paper we present an efficiently scaling quantum algorithm which finds the size of the maximum common edge subgraph for a pair of arbitrary graphs and thus provides a meaningful measure of graph similarity. The algorithm makes use of a two-part quantum dynamic process: in the first part we obtain information crucial for the comparison of two graphs through linear quantum computation. However, this information is hidden in the quantum system with vanishingly small amplitude that even quantum algorithms such as Grover's search are not fast enough to distill the information efficiently. In order to extract the information we call upon techniques in nonlinear quantum computing to provide the speed-up necessary for an efficient algorithm. The linear quantum circuit requires \${\textbackslash}mathcal\{O\}(n{\textasciicircum}3 {\textbackslash}log{\textasciicircum}3 (n) {\textbackslash}log {\textbackslash}log (n))\$ elementary quantum gates and the nonlinear evolution under the Gross-Pitaevskii equation has a time scaling of \${\textbackslash}mathcal\{O\}({\textbackslash}frac\{1\}\{g\} n{\textasciicircum}2 {\textbackslash}log{\textasciicircum}3 (n) {\textbackslash}log {\textbackslash}log (n))\$, where \$n\$ is the number of vertices in each graph and \$g\$ is the strength of the Gross-Pitaveskii non-linearity. Through this example, we demonstrate the power of nonlinear quantum search techniques to solve a subset of {NP}-hard problems.},
	journaltitle = {{arXiv}:1810.01647 [quant-ph]},
	author = {Chiew, M. and de Lacy, K. and Yu, C. H. and Marsh, S. and Wang, J. B.},
	urldate = {2018-10-15},
	date = {2018-10-03},
	langid = {english},
	eprinttype = {arxiv},
	eprint = {1810.01647},
	keywords = {Quantum Physics},
	file = {Chiew et al. - 2018 - Graph comparison via nonlinear quantum search.pdf:/home/nicholas/Zotero/storage/Z5JPTSDI/Chiew et al. - 2018 - Graph comparison via nonlinear quantum search.pdf:application/pdf}
}

@article{hernandez_slepc:_2005,
	title = {{SLEPc}: A scalable and flexible toolkit for the solution of eigenvalue problems},
	volume = {31},
	pages = {351--362},
	number = {3},
	journaltitle = {{ACM} Trans. Math. Software},
	author = {Hernandez, Vicente and Roman, Jose E. and Vidal, Vicente},
	date = {2005}
}

@book{balay_petsc_2018,
	title = {{PETSc} Web page},
	url = {http://www.mcs.anl.gov/petsc},
	author = {Balay, Satish and Abhyankar, Shrirang and Adams, Mark F. and Brown, Jed and Brune, Peter and Buschelman, Kris and Dalcin, Lisandro and Dener, Alp and Eijkhout, Victor and Gropp, William D. and Kaushik, Dinesh and Knepley, Matthew G. and May, Dave A. and {McInnes}, Lois Curfman and Mills, Richard Tran and Munson, Todd and Rupp, Karl and Sanan, Patrick and Smith, Barry F. and Zampini, Stefano and Zhang, Hong and Zhang, Hong},
	date = {2018}
}

@article{childs_quantum_2004,
	title = {Quantum Information Processing in Continuous Time},
	abstract = {Quantum mechanical computers can solve certain problems asymptotically faster than any classical computing device. Several fast quantum algorithms are known, but the nature of quantum speedup is not well understood, and inventing new quantum algorithms seems to be diﬃcult. In this thesis, we explore two approaches to designing quantum algorithms based on continuous-time Hamiltonian dynamics.},
	pages = {140},
	author = {Childs, Andrew {MacGregor}},
	date = {2004},
	langid = {english},
	file = {Childs - Quantum Information Processing in Continuous Time.pdf:/home/nicholas/Zotero/storage/3ELAANVQ/Childs - Quantum Information Processing in Continuous Time.pdf:application/pdf}
}

@article{welch_efficient_2014,
	title = {Efficient quantum circuits for diagonal unitaries without ancillas},
	volume = {16},
	issn = {1367-2630},
	url = {http://stacks.iop.org/1367-2630/16/i=3/a=033040?key=crossref.8d88fafdb935d9cac92d5040d8b6a5c5},
	doi = {10.1088/1367-2630/16/3/033040},
	abstract = {The accurate evaluation of diagonal unitary operators is often the most resourceintensive element of quantum algorithms such as real-space quantum simulation and Grover search. Efﬁcient circuits have been demonstrated in some cases but generally require ancilla registers, which can dominate the qubit resources. In this paper, we give a simple way to construct efﬁcient circuits for diagonal unitaries without ancillas, using a correspondence between Walsh functions and a basis for diagonal operators. This correspondence reduces the problem of constructing the minimal-depth circuit within a given error tolerance, for an arbitrary diagonal unitary eif (xˆ) in the x basis, to that of ﬁnding the minimallength Walsh-series approximation to the function f(x). We apply this approach to the quantum simulation of the classical Eckart barrier problem of quantum chemistry, demonstrating that high-ﬁdelity quantum simulations can be achieved with few qubits and low depth.},
	pages = {033040},
	number = {3},
	journaltitle = {New Journal of Physics},
	author = {Welch, Jonathan and Greenbaum, Daniel and Mostame, Sarah and Aspuru-Guzik, Alan},
	urldate = {2018-10-16},
	date = {2014-03-31},
	langid = {english},
	file = {Welch et al. - 2014 - Efficient quantum circuits for diagonal unitaries .pdf:/home/nicholas/Zotero/storage/V9F5NHRM/Welch et al. - 2014 - Efficient quantum circuits for diagonal unitaries .pdf:application/pdf}
}

@article{duff_sparse_1989,
	title = {Sparse matrix test problems},
	volume = {15},
	issn = {1557-7295},
	abstract = {{\textless}p{\textgreater}{\textless}p{\textgreater}We describe the Harwell-Boeing sparse matrix collection, a set of standard test matrices for sparse matrix problems. Our test set comprises problems in linear systems, least squares, and eigenvalue calculations from a wide variety of scientific and engineering disciplines. The problems range from small matrices, used as counter-examples to hypotheses in sparse matrix research, to large test cases arising in large-scale computation. We offer the collection to other researchers as a standard benchmark for comparative studies of algorithms. The procedures for obtaining and using the test collection are discussed. We also describe the guidelines for contributing further test problems to the collection.{\textless}/p{\textgreater}{\textless}/p{\textgreater}},
	pages = {1,14},
	number = {1},
	journaltitle = {{ACM} Transactions on Mathematical Software ({TOMS})},
	author = {Duff, I and Grimes, Roger and Lewis, John},
	date = {1989-03-01},
	keywords = {Computer Science, Sciences (General)},
	file = {Duff and Laboratory - Sparse Matrix Test Problems.pdf:/home/nicholas/Zotero/storage/5C4BYQNU/Duff and Laboratory - Sparse Matrix Test Problems.pdf:application/pdf}
}

@book{noauthor_intelr_nodate,
	title = {Intel(R) Math Kernel Library 2018 Update 4},
	url = {https://software.intel.com/en-us/mkl}
}

@article{abrams_simulation_1997,
	title = {Simulation of Many-Body Fermi Systems on a Universal Quantum Computer},
	volume = {79},
	issn = {0031-9007, 1079-7114},
	url = {https://link.aps.org/doi/10.1103/PhysRevLett.79.2586},
	doi = {10.1103/PhysRevLett.79.2586},
	abstract = {We provide fast algorithms for simulating many body Fermi systems on a universal quantum computer. Both first and second quantized descriptions are considered, and the relative computational complexities are determined in each case. In order to accommodate fermions using a first quantized Hamiltonian, an efficient quantum algorithm for antisymmetrization is given. Finally, a simulation of the Hubbard model is discussed in detail.},
	pages = {2586--2589},
	number = {13},
	journaltitle = {Physical Review Letters},
	author = {Abrams, Daniel S. and Lloyd, Seth},
	urldate = {2018-10-15},
	date = {1997-09-29},
	langid = {english},
	file = {Abrams and Lloyd - 1997 - Simulation of Many-Body Fermi Systems on a Univers.pdf:/home/nicholas/Zotero/storage/C9PPU828/Abrams and Lloyd - 1997 - Simulation of Many-Body Fermi Systems on a Univers.pdf:application/pdf}
}

@article{santoro_theory_2002,
	title = {Theory of Quantum Annealing of an Ising Spin Glass},
	volume = {295},
	issn = {00368075, 10959203},
	url = {http://www.sciencemag.org/cgi/doi/10.1126/science.1068774},
	doi = {10.1126/science.1068774},
	pages = {2427--2430},
	number = {5564},
	journaltitle = {Science},
	author = {Santoro, G. E.},
	urldate = {2018-10-15},
	date = {2002-03-29},
	langid = {english},
	file = {Santoro - 2002 - Theory of Quantum Annealing of an Ising Spin Glass.pdf:/home/nicholas/Zotero/storage/HTC6YG2L/Santoro - 2002 - Theory of Quantum Annealing of an Ising Spin Glass.pdf:application/pdf}
}

@article{boros_pseudo-boolean_2002,
	title = {Pseudo-Boolean optimization},
	abstract = {This survey examines the state of the art of a variety of problems related to pseudo-Boolean optimization, i.e. to the optimization of set functions represented by closed algebraic expressions. The main parts of the survey examine general pseudo-Boolean optimization, the specially important case of quadratic pseudo-Boolean optimization (to which every pseudo-Boolean optimization can be reduced), several other important special classes, and approximation algorithms. ? 2002 Elsevier Science B.V. All rights reserved.},
	pages = {71},
	journaltitle = {Discrete Applied Mathematics},
	author = {Boros, Endre and Hammer, Peter L},
	date = {2002},
	langid = {english},
	file = {Boros and Hammer - 2002 - Pseudo-Boolean optimization.pdf:/home/nicholas/Zotero/storage/XH493MQZ/Boros and Hammer - 2002 - Pseudo-Boolean optimization.pdf:application/pdf}
}

@article{hen_solving_2012,
	title = {Solving the graph-isomorphism problem with a quantum annealer},
	volume = {86},
	issn = {1050-2947, 1094-1622},
	url = {https://link.aps.org/doi/10.1103/PhysRevA.86.042310},
	doi = {10.1103/PhysRevA.86.042310},
	number = {4},
	journaltitle = {Physical Review A},
	author = {Hen, Itay and Young, A. P.},
	urldate = {2018-10-15},
	date = {2012-10-10},
	langid = {english},
	file = {Hen and Young - 2012 - Solving the graph-isomorphism problem with a quant.pdf:/home/nicholas/Zotero/storage/85A2D762/Hen and Young - 2012 - Solving the graph-isomorphism problem with a quant.pdf:application/pdf}
}

@thesis{rowan_functional_1990,
	title = {Functional stability analysis of numerical algorithms},
	url = {https://search-proquest-com.ezproxy.library.uwa.edu.au/docview/303865032?accountid=14681},
	abstract = {The standard technique for detecting instability in numerical algorithms is backward error analysis. The analysis is difficult and tedious when performed by hand, while attempts to automate it have always placed severe restrictions on the tested numerical algorithms. A new approach for detecting instability, functional stability analysis, removes these restrictions by treating numerical algorithms as black boxes. The approach consists of two parts. The first part uses the relationship between the forward error, the backward error, and a problem's condition to define a function that estimates a lower bound on the backward error. In the second part, a new optimization method maximizes the function. A numerical algorithm is unstable if the maximization shows that the backward error can become large. Since numerical algorithms are treated as black boxes, functional stability analysis normally requires little more than an executable version of a numerical algorithm to determine if it is unstable.},
	pagetotal = {218},
	type = {phdthesis},
	author = {Rowan, Thomas H.},
	date = {1990},
	keywords = {Computer science, 0984:Computer science, Applied sciences, Error analysis},
	file = {Functional_stability_analysis_.pdf:/home/nicholas/Zotero/storage/WJJ2C74T/Functional_stability_analysis_.pdf:application/pdf}
}

@book{jones_scipy:_2001,
	title = {{SciPy}: Open source scientific tools for Python},
	url = {http://www.scipy.org/},
	author = {Jones, Eric and Oliphant, Travis and Peterson, Pearu and {others}},
	date = {2001}
}
\appendix
\chapter{Original Honours Proposal}
\label{ap:Proposal}
\newcommand{\namelistlabel}[1]{\mbox{#1}\hfil}
\newenvironment{namelist}[1]{%1
\begin{list}{}
    {
        \let\makelabel\namelistlabel
        \settowidth{\labelwidth}{#1}
        \setlength{\leftmargin}{1.1\labelwidth}
    }
  }{%1
\end{list}}
\begin{namelist}{xxxxxxxxxxxx}
\item[{\bf Title:}]
	Quantum Graph Similarity and Applications
\item[{\bf Author:}]
	Nicholas Pritchard
\item[{\bf Supervisor:}]
	Professor Jingbo Wang and Professor Amitava Datta
\item[{\bf Degree:}]
	BSc(Hons.)
\end{namelist}
\section*{Background}
The goal of quantum computing is the exploit the complexity of quantum systems for useful computation. Such a motivation arises from the fact that despite decades of research modelling quantum systems in classical computers has alluded the scientific community \autocite{boixo_characterizing_2016}. The development of near-term physical hardware \autocite{kelly_preview_2018}, \autocite{sete_functional_2016} has combined with a surge of interest to apply quantum computing to numerous fields in computation \autocite{chailloux_efficient_2017, farhi_quantum_2014, biamonte_quantum_2018, ciliberto_quantum_2018, cleve_quantum_1998}. Graph similarity and graph-isomorphism are long-standing difficulties in computer science. Many useful formulations of graph similarity exist such as compound matching in chemistry \autocite{han_topological_2015, hattori_development_2003, raymond_rascal:_2002}, machine vision \autocite{lades_distortion_1993} and web-search \autocite{brin_anatomy_1998}.  This problem has no tractable exact formulation for graphs with unknown node correspondence and as such approximate solutions are considered industry standard. More specifically we define whole-graph similarity
\begin{definition}{Whole Graph Similarity:}
Given two graphs $G_1(v_1,e_1)$ and $G_2(v_2,e_2)$ with possibly different numbers of vertices and edges, find an algorithm which returns a measure of similarity $S | S \in [0,1]$. Furthermore:
\begin{enumerate}
\item $S(G_1, G_1) = 1$
\item $S(G_1, G_2) = S(G_2, G_1)$
\end{enumerate}
\label{ref:similarity_proposal}
\end{definition}
\section*{Aim}
To investigate a quantum algorithmic approach to graph similarity and its applications this project will examine graph similarity by applying the recently proposed 'Quantum Approximate Optimisation Algorithm' (QAOA) \autocite{farhi_quantum_2014}. The QAOA can be formulated to reduce a combinatorial optimisation problem to a parameter search on around two variables. This project primarily aims to investigate if such an approach leads to any improvements in speed, accuracy or robustness over classical methods.\newline
A secondary objective is to extend the generated model of graph similarity into a real-world contextual use such as object tracking or common sub-graph matching for example. Currently it is unknown whether a quantum advantage will yield any benefits in speed, accuracy or robustness over classical counter-parts and hence makes a suitable topic for research.
\section*{Method}
There will be a large amount of theoretical work in the development and validation of a quantum or hybrid quantum/classical algorithms to tackle the problem of graph similarity. Testing will require a series of standardised sources as well as comparison results or implementations of classical algorithms. Due to the compute-heavy nature of simulating quantum systems it is likely the Magnus supercomputer at the Pawsey Super-computing Centre will need to be utilised. \textit{Python}, \textit{C/C++} and possibly \textit{Fortran} will be used to simulate various approaches and to create visualisations of resulting circuit designs.
\subsection*{Status}
This project does not follow from previous work commencing at the start of the 2018 academic year. This research is a collaboration between the \textit{Quantum dynamics and computation} research group and the department of \textit{Computer science and software engineering}.\newline
Currently, a general understanding of quantum computing and potential object detection methods are being investigated in addition to classical object-tracking frameworks with the aim of finding a suitable starting point to apply quantum methods. An investigation into state-of-the-art quantum simulations yields a number of possible methods and frameworks \autocite{smelyanskiy_qhipster:_2016, pednault_breaking_2017, boixo_simulation_2017, chen_64-qubit_2018, chen_classical_2018}. Preliminary-work on simulating the Quantum Approximate Optimisation Algorithm (QAOA) locally is underway.
\section*{Software and Hardware Requirements}
Costs are expected to be negligible as access to required software packages are openly available on-line or through the University of Western Australia. The majority of testing is to be run on personal machines. Use of the Pawsey Supercomputing Centre is available if needed based on an agreement with the University. 
\chapter{Tail Complexity}
\label{ap:Tail}
\begin{table}[htbp]\label{tab:stateSize}
	\begin{center}
		\begin{tabular}{|c|c|c|c|c|c|}
			\hline 
			Vertices ($V$) & $V!$ & Qubits ($q$) & $2^{q}$ & Difference($2^q - V!$) & $2^q/V!$\\ 
			\hline 
			$2$ & $2$  & $1$ & $2$ & $0$ & $0$ \\ 
			\hline 
			$4$ & $6$ & $5$ & $32$ & $26$ & $4.3$ \\ 
			\hline 
			$8$ & $40320$ & $16$ & $65536$ & $25216$ & $0.65$ \\ 
			\hline 
			$10$ & $3628800$ & $22$ & $4194304$ & $565504$ & $0.15$ \\ 
			\hline 
			$12$ & $479001600$ & $29$ & $536870912$ & $57869312$ & $0.12$ \\ 
			\hline
			$15$ & $1.30767E12$ & $41$ & $2.19902E12$ & $8.91349E11$	& $0.68$\\
			\hline
			$22$ & $1.124E21$ & $70$ &$1.18059E21$ & $5.65909E19$ & $0.05$\\
			\hline 
		\end{tabular}
		\caption{Graph-size compared to qubit state-space}
	\end{center}
\end{table}
\section{Series Expansion at $n = \infty$}
\begin{equation}
2^{log_2(n!)}(\frac{\sqrt{\frac{1}{n}}}{\sqrt{2\pi}} + \mathcal{O}((\frac{1}{n})^{3/2}))exp((1 - log(n))n + \mathcal{O}((\frac{1}{n})^2)) - 1
\end{equation}
\chapter{Description of Qolab}
\label{ap:Qolab}
Qolab is a near problem agnostic simulation of the QAOA \cite{farhi_quantum_2014-1} with extended constraint ability \cite{marsh_quantum_2018}. Implementation using Intel's Math Kernel Library \cite{noauthor_intelr_nodate} facilitates maximal desktop and single node performance. The open-source nlopt optimisation suite  \cite{johnson_nlopt_2011} allows for a variety of optimisation algorithms to be tested simply. Qolab supports cluster execution allowing exacting state-space decomposition. Code is available at
\begin{center}
	\url{https://bitbucket.org/qaoa_uwa/graphsimilarity/src/master/}
\end{center}
Currently, Qolab supports the following arguments:
\begin{itemize}
	\item Arbitrary number of qubits
	\item Arbitrary cost function 
	\item Mixing masks on the $\hat{U}_B$ operator with an alternate function evaluation path as per Algorithm \ref{alg:QAOAn} proposed by Marsh and Wang \cite{marsh_quantum_2018}.
	\item Arbitary trotterisation depth ($p$ variable)
	\item Confidence interval based sampling scaling
	\item Arbitrary sampling
	\item Command line support for nlopt optimisation selection
	\item QAOA $(\vec{\gamma}, \vec{\beta})$ argument optimisation tolerances
	\item Variable function output tolerance selection
	\item Full-desktop and cluster implementations 
\end{itemize}
\chapter{Pseudocode}
\label{ap:Pseudo}
\section{Lehmer Code Permutation}
\begin{algorithm}
	\caption{Factoradic Permutation Generator}\label{alg:Perm}
	\begin{algorithmic}[1]
		\Procedure{k\_Perm}{n, k}
			\State facts$[] \gets \emptyset$
			\State items$[] \gets {0,\dots,n}$
			\State out$[] \gets \emptyset$
			\State nnz $\gets 0$
			\State size $\gets n$
			\While{size $> 0$}
				\State $f \gets $factorial(size$ - 1$)
				\State $i \gets k/f$
				\State $x \gets $items$[k]$
				\State $k \gets k $mod$f$
				\State out[nnz] $\gets x$ 
				\For{j = 0 to n - 1}
					\State items[j] $\gets$ items[j+1]
				\EndFor
				\State size $\gets$ size - 1
				\State nnz $\gets$ nnz + 1
			\EndWhile
			\State \textbf{Return} out
		\EndProcedure
	\end{algorithmic}
\end{algorithm}
This procedure returns the k-th permutation of natural numbers $[0,\dots,n)$
\section{NPO QAOA}
\begin{algorithm}
	\caption{NPO QAOA Overview}\label{alg:QAOAn}
	\begin{algorithmic}[1]
		\Procedure{NPO\_QAOA\_Core}{numQubits, P, optimisationMethod, C(x), Mask(x)}
			\State $\hat{U}_C \gets $\textbf{genUC}(C(x))
			\State $\hat{U}_B \gets $\textbf{genUB}(numQubits, Mask(x))
			\State $\vec{\gamma}, \vec{\beta} \gets $\textbf{initialParameters()}
			\While{terminateTest()}
				\State $\ket{\psi} \gets $\textbf{initialState}
				\For{$i = 0$ to $p$}
					\State $\ket{\psi} \gets \hat{U}_B(\beta_i)\ket{\psi}$
					\State $\ket{\psi} \gets \hat{U}_C(\gamma_i)\ket{\psi}$
				\EndFor
				\State $\ket{\psi} \gets \hat{U}_B(\beta_{p+1})\ket{\psi}$
				\State $F_p \gets \textbf{Measure}(\psi)$
				\State $\vec{\gamma}, \vec{\beta} \gets $ \textbf{updateParameters($F_{p}$)}
			\EndWhile
		\EndProcedure
		\State \textbf{Report}()
	\end{algorithmic}
\end{algorithm}
Note the subtle differences to the QAOA in Algorithm \ref{alg:QAOAn}, an additional $\hat{U}_B$ operation is applied at the start and a validation mask is passed to the $\hat{B}$ generation routine.
\chapter{Proofs}
\label{ap:Proofs}
We rely on the Perron-Frobenius theroem which states
\begin{theorem}
	A symmetric elements of only real, non-negative entries has all real non-negative eigenvalues. Furthermore there exists a Perron-Root $r$ for which all other eigenvalues $\lambda \leq r$.
\end{theorem}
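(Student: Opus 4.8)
The statement bundles two claims, which I would establish separately. The first --- that the eigenvalues are real --- is immediate from the spectral theorem: since $A$ is real and symmetric ($A = A^\top$) it is self-adjoint for the standard inner product on $\mathbb{R}^n$, hence diagonalisable in an orthonormal eigenbasis with a purely real spectrum $\lambda_1 \geq \lambda_2 \geq \dots \geq \lambda_n$. This disposes of the reality question using only the symmetry hypothesis, and is the one place symmetry is strictly needed. For the existence of the dominant Perron root I would avoid the general topological route (Brouwer's fixed-point theorem applied to the probability simplex, the usual argument for arbitrary non-negative matrices) and instead exploit symmetry to give a short Rayleigh-quotient argument.

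The plan for the Perron root is as follows. For a symmetric matrix the operator $2$-norm coincides with the spectral radius, and both are captured by the Rayleigh quotient:
\[
\rho(A) = \max_{\|x\|_2 = 1} |x^\top A x|.
\]
The key step, and the only one using non-negativity of the entries, is the entrywise bound
\[
|x^\top A x| = \Big| \sum_{i,j} x_i A_{ij} x_j \Big| \leq \sum_{i,j} |x_i|\, A_{ij}\, |x_j| = |x|^\top A\, |x|,
\]
where $|x|$ is the entrywise absolute value. Since $\||x|\|_2 = \|x\|_2$ and $|x| \geq 0$, maximising over the unit sphere yields the chain $\rho(A) \leq \max_{\,y \geq 0,\ \|y\|_2=1} y^\top A y \leq \lambda_1 \leq \rho(A)$, forcing equality throughout. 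Hence $r := \lambda_1 = \rho(A)$; the maximiser $y$ is a non-negative eigenvector, the Rayleigh value $y^\top A y \geq 0$ makes $r$ non-negative, and every eigenvalue obeys $|\lambda| \leq r$. This is precisely the Perron root claim.

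The main obstacle is the absolute-value symmetrisation in the second display: it is exactly here that non-negativity of the entries is indispensable, since for an arbitrary symmetric matrix one cannot replace $x$ by $|x|$ without disturbing the sign structure of the off-diagonal contributions. A secondary point worth flagging is that the literal reading ``all eigenvalues are non-negative'' is stronger than what holds --- non-negativity of the entries forces only the Perron root $r$ to be non-negative, not the whole spectrum. The minimal counterexample is the single-bit-flip block $\left(\begin{smallmatrix} 0 & 1 \\ 1 & 0 \end{smallmatrix}\right)$, with spectrum $\{+1,-1\}$. This is also why the hypercube mixing operator $\hat{B}$ has a genuinely negative minimal eigenvalue, and it is consistent with the derivation $\lambda_{\min,\max} = \pm q$ needed in Equation \ref{eq:MinMaxEigen}: the dominant value $r = q$ follows from Perron--Frobenius, while the companion value $-q$ comes instead from the bipartite symmetry of the flip graph, whose spectrum is symmetric about the origin.
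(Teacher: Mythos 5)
Your proposal is correct, but it takes a genuinely different route from the paper --- principally because the paper never actually proves this theorem. The statement is introduced with ``We rely on the Perron--Frobenius theorem which states\ldots'', i.e.\ it is invoked as a known result, and the proof environment that follows it establishes a different claim: that the hypercube mixing operator $\hat{B} = \sum_{i=1}^{n}\sigma_i^x$ has extreme eigenvalues $\lambda_{\min,\max} = \pm q$ (Equation \ref{eq:MinMaxEigen}), via the Rayleigh-quotient bound $\mathbf{v}\cdot B\mathbf{v} \leq \lambda \lVert \mathbf{v}\rVert^2$ in an orthonormal eigenbasis, followed by a row-count observation. Your Rayleigh-quotient argument with the entrywise bound $|x^\top A x| \leq |x|^\top A\,|x|$ is a clean, self-contained proof of the symmetric case of Perron--Frobenius, so it supplies exactly what the paper leaves as a citation. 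More importantly, your flag on the first clause is right: a symmetric matrix with non-negative entries need \emph{not} have all non-negative eigenvalues, as your counterexample $\left(\begin{smallmatrix} 0 & 1 \\ 1 & 0 \end{smallmatrix}\right)$ shows; only the Perron root $r$ is guaranteed non-negative, and the correct conclusion is $|\lambda| \leq r$. The paper's own derivation corroborates you against its stated theorem, since $\hat{B}$ is symmetric with non-negative entries yet has $\lambda_{\min} = -q < 0$. Finally, your bipartiteness remark fills the genuine gap in the paper's argument: the paper's proof asserts ``for a negative $\mathbf{v}$, $\lambda = -q$'' without justification, whereas the bipartite structure of the hypercube (its adjacency spectrum is symmetric about the origin) is what actually forces $\lambda_{\min} = -q$ once $\lambda_{\max} = q$ is known from $q$-regularity.
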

\begin{proof}[We show $\lambda_{min, max} = \pm q$]
Let $B$ be our $2^q \times 2^q$ hypercube adjacency matrix constructed according to
\begin{equation}
\hat{B} = \sum_{i=1}^{n}\sigma_i^x
\end{equation}
We note that $B$ is Hermitian and therefore symmetric and by definition contains real values. We shall show that there exists and eigenvalue $\lambda$ of $B$ such that for any vector $\mathbf{v} \in \mathbb{R}^n$ we have 
\begin{equation}
\mathbf{v} \cdot B \mathbf{v} \leq \lambda\norm{\mathbf{v}}^2
\end{equation}
For a real symmetric matrix, there exist eigenvectors $\mathbf{v_1, v_2,\dots v_n}$ corresponding to $\lambda_1,\lambda_2,\dots\lambda_n$ such that
\begin{equation}
E = {\mathbf{v_1, v_2 \dots v_n}}
\end{equation}
forms an orthonormal basis of $\mathbb{R}^n$. Equivalently, every real symmetric matrix is diagonalisable by an orthogonal matrix. If this were not the case, this operator would be impossible to implement in a quantum computer since all operations must be unitary in nature.\newline
Let $\mathbf{v}$ be any vector in $\mathbb{R}^n$\newline
Since $E$ is a basis of $\mathbb{R}^n$ we can write
\begin{equation}
\mathbf{v} = c_1\mathbf{v_1} + c_2\mathbf{v_2} + \dots + c_n\mathbf{v_n}
\end{equation}
Where $c_1\dots c_n \in \mathbb{R}$
We then calculate $B\mathbf{v}$ as
\begin{eqnarray}
B\mathbf{v} = B(c_1\mathbf{v_1}+\dots+c_n\mathbf{v_n}\\
= c_1B\mathbf{v_1}+\dots+c_n\mathbf{v_n}\\
= c_1\lambda_1\mathbf{v_1}+\dots+c_n\lambda_n\mathbf{v_n}
\end{eqnarray}
Knowing $B\mathbf{v_i} = \lambda_i\mathbf{v_i}$ for $i=1,\dots,n$
We can then apply another $\mathbf{v}$
\begin{eqnarray}
\mathbf{v}\cdot B\mathbf{v} = (c_1\mathbf{v_1}+\dots+c_n\mathbf{v_n})\cdot(c_1\lambda_1\mathbf{v_1}+\dots+c_n\lambda_n\mathbf{v_n})\\
= c_1^2\lambda_1+\dots+c_n^2\lambda_n
\end{eqnarray}
Using the facto that $E$ is an orthonormal basis of $\mathcal{R}^3$\newline
Since $\lambda$ is the largest eigenvalue of $B$ we show
\begin{eqnarray}
\mathbf{v}\cdot B\mathbf{v} = c_1^2\lambda_1+\dots+c_n^2\lambda_n\\
\leq c_1^2\lambda+\dots+c_n^2\lambda\\
= \lambda(c_2^2+\dots+c_n^2)\\
= \lambda\norm{\mathbf{v}}^2.
\end{eqnarray}
We finally make the observation that for any $\hat{B}$ according to Equation \ref{eq:B} will have at most $q$ elements in each row. When considering a positive $\mathbf{v}, \lambda = q$ and for a negative $\mathbf{v}, \lambda = -q$
\end{proof}
\chapter{Data}\label{ap:Data}
All source data files, plots, csv aggregates and the code used to generate them are available at 
\begin{center}
	\url{https://bitbucket.org/qaoa_uwa/results/src/master/}
\end{center}
We present the remaining plots for all optimisation methods considered.`
\begin{center}
	\begin{figure}
		\centering
		\begin{tabular}{cc}
			\includegraphics[width=0.5\textwidth]{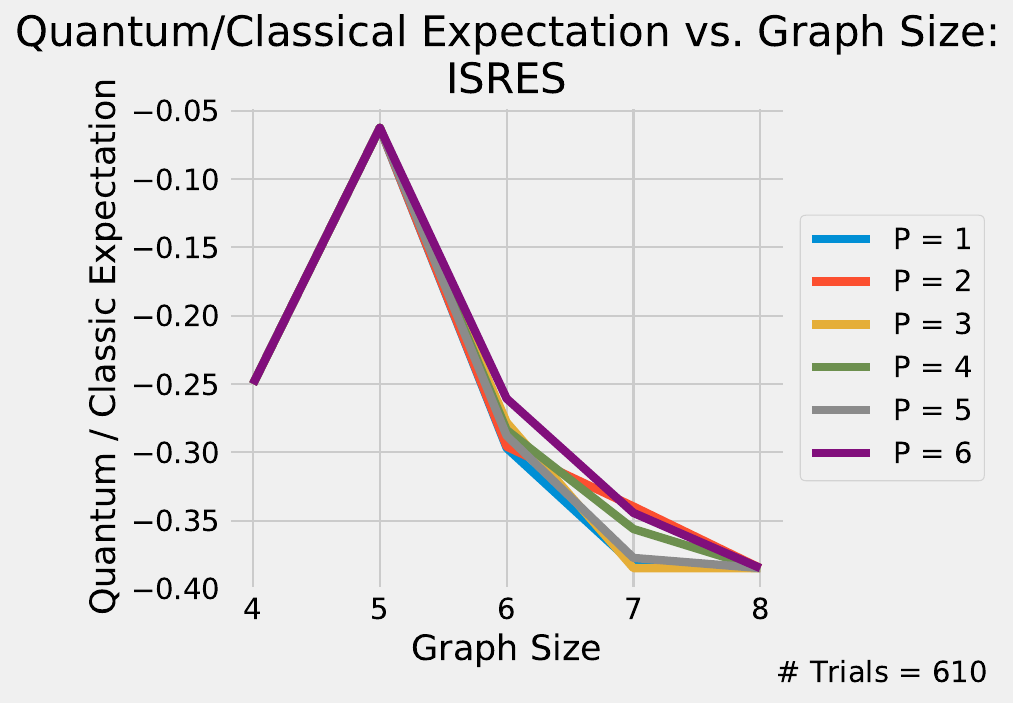} &   \includegraphics[width=0.5\textwidth]{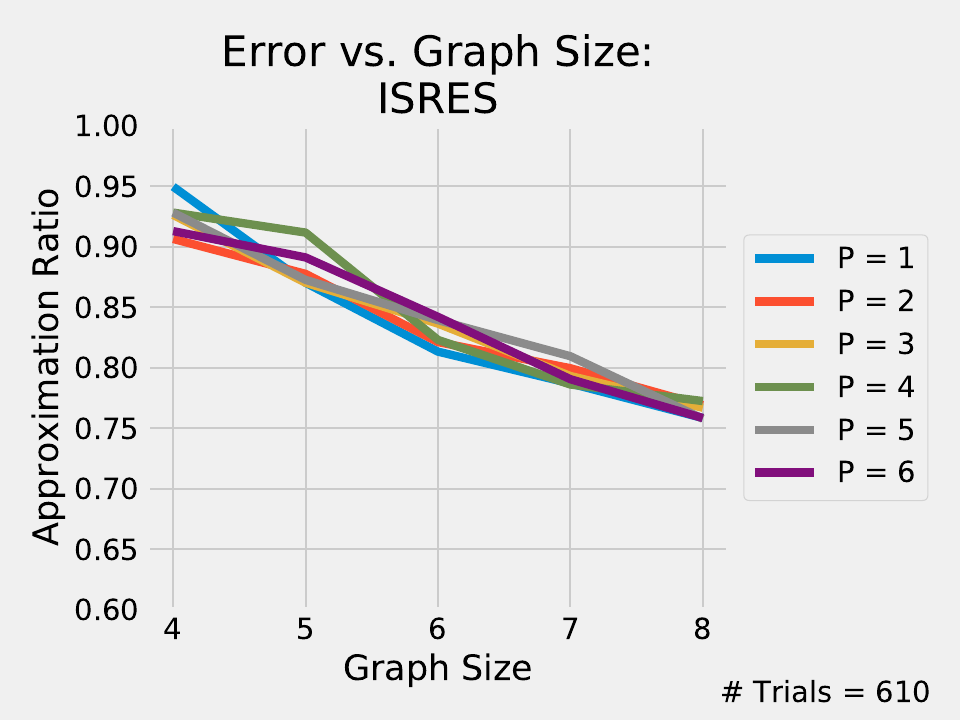} \\
			(a) Expectation value comparison & (b) Solution error \\[6pt]
			\includegraphics[width=0.5\textwidth]{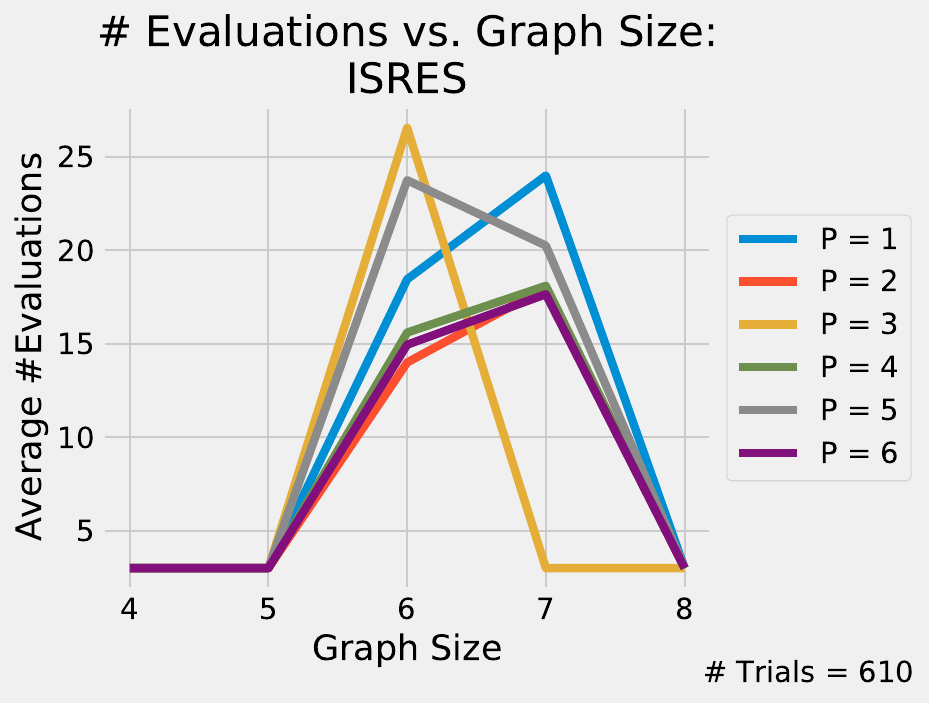} &   \includegraphics[width=0.5\textwidth]{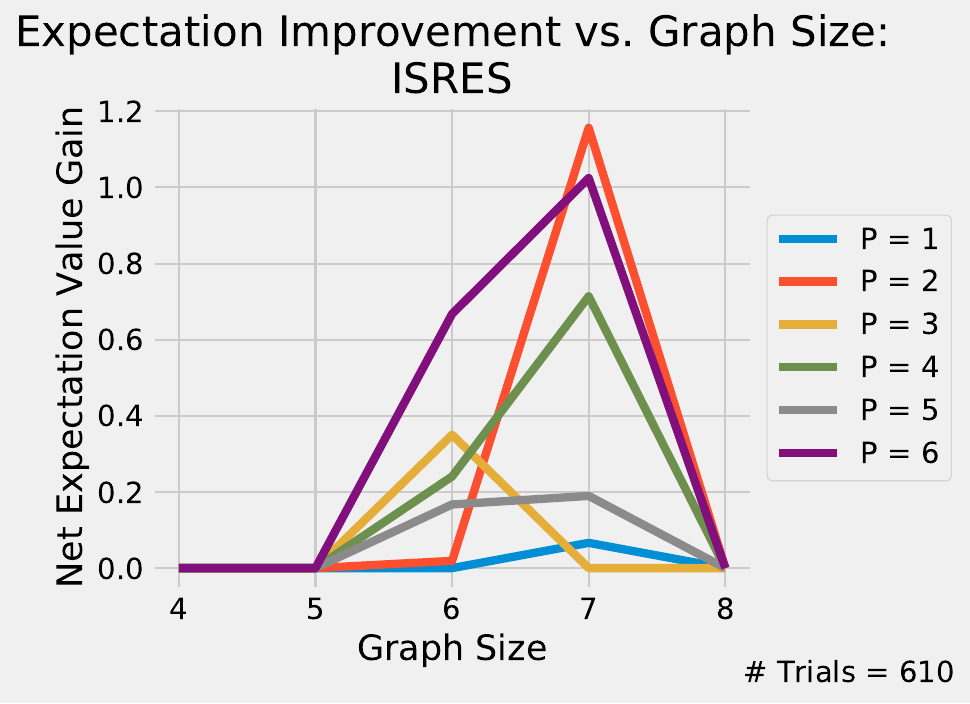} \\
			(c) Function evaluations required & (d) Improvement\\[6pt]
		\end{tabular}
		\label{fig:finalISRES}
		\caption{Final performance metrics for the ISRES algorithm (global) (directed graphs)}
	\end{figure}
\end{center}
\begin{center}
	\begin{figure}
		\centering
		\begin{tabular}{cc}
			\includegraphics[width=0.5\textwidth]{NLOPT_LN_BOBYQA_comp.pdf} &   \includegraphics[width=0.5\textwidth]{NLOPT_LN_BOBYQA_error.pdf} \\
			(a) Expectation value comparison & (b) Solution error \\[6pt]
			\includegraphics[width=0.5\textwidth]{NLOPT_LN_BOBYQA_evals.pdf} &   \includegraphics[width=0.5\textwidth]{NLOPT_LN_BOBYQA_impr.pdf} \\
			(c) Function evaluations required & (d) Improvement\\[6pt]
		\end{tabular}
		\label{fig:finalBOBYQA}
		\caption{Final performance metrics for the BOBYQA algorithm (local) (directed graphs)}
	\end{figure}
\end{center}
\begin{center}
	\begin{figure}
		\centering
		\begin{tabular}{cc}
			\includegraphics[width=0.5\textwidth]{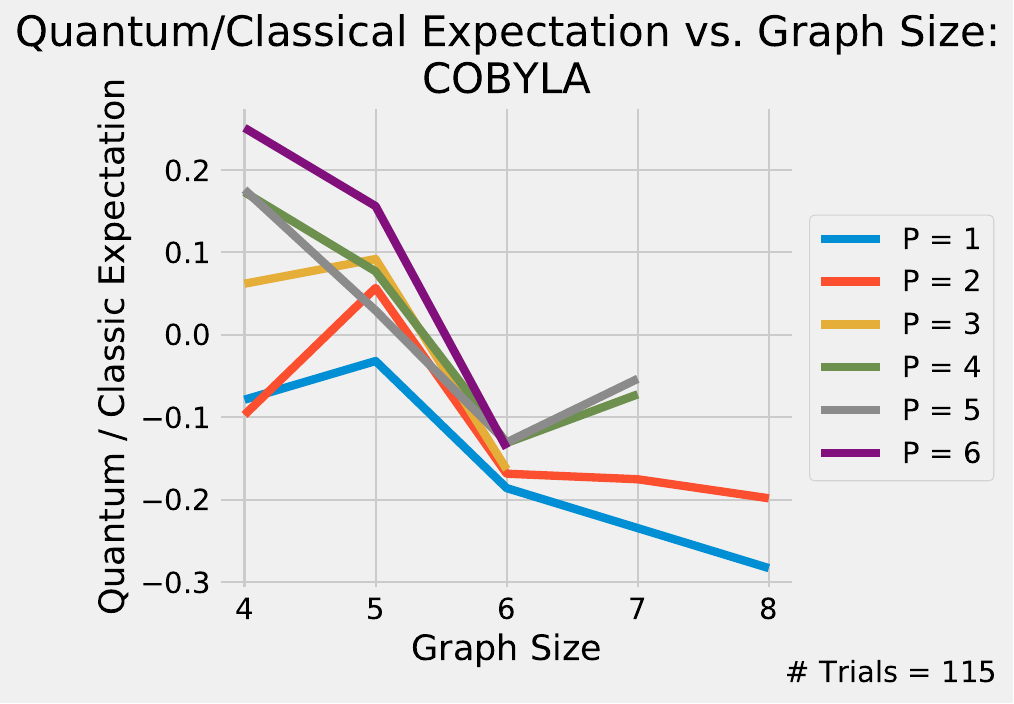} &   \includegraphics[width=0.5\textwidth]{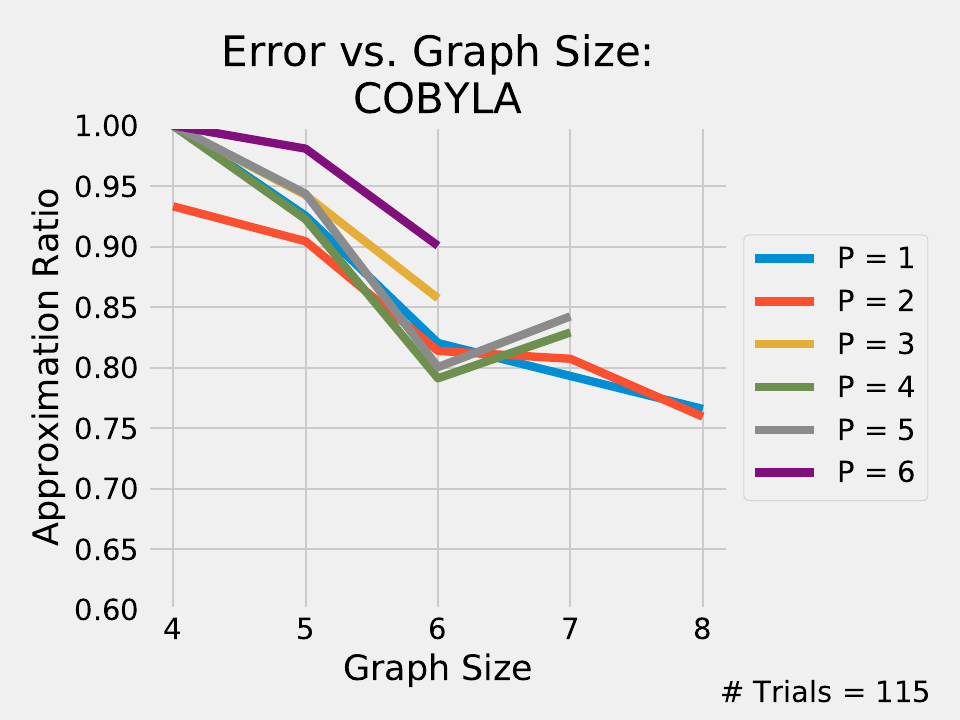} \\
			(a) Expectation value comparison & (b) Solution error \\[6pt]
			\includegraphics[width=0.5\textwidth]{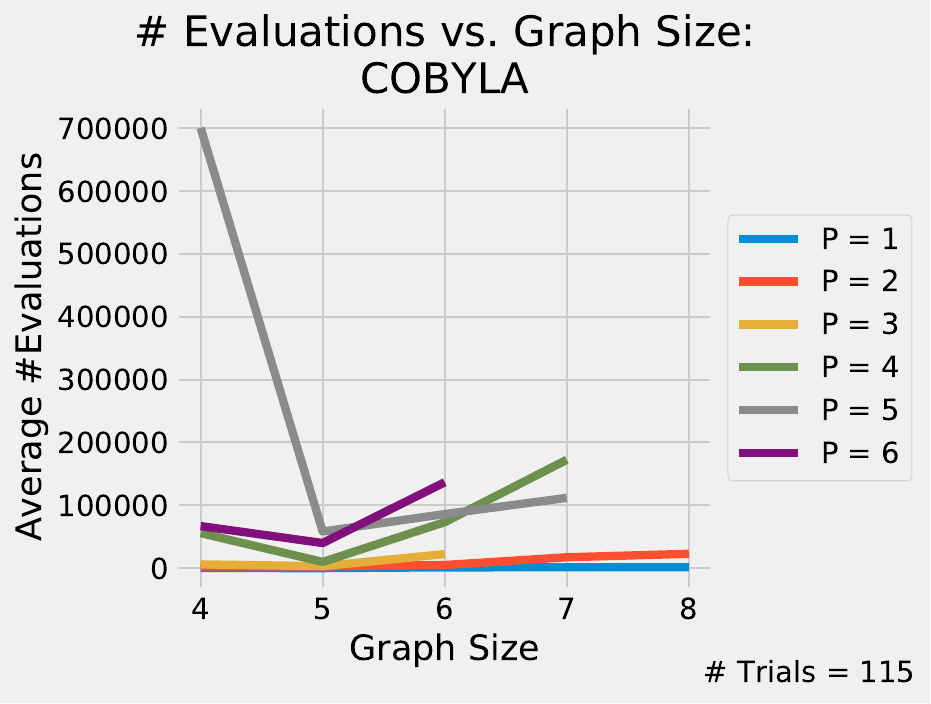} &   \includegraphics[width=0.5\textwidth]{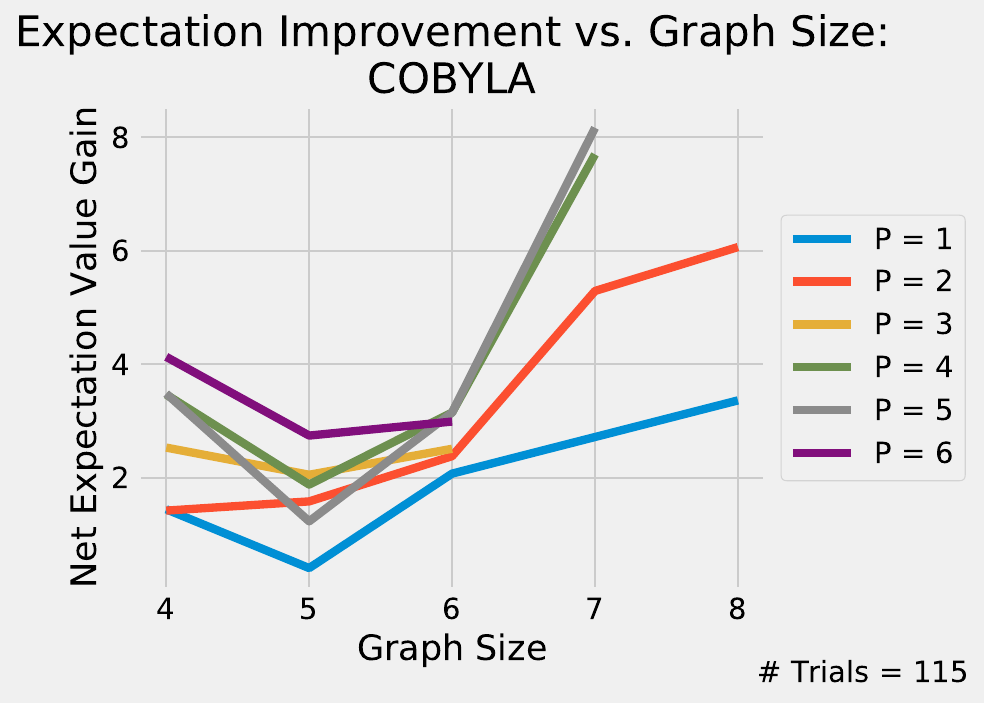} \\
			(c) Function evaluations required & (d) Improvement\\[6pt]
		\end{tabular}
		\label{fig:finalCOBYLA}
		\caption{Final performance metrics for the COBYLA algorithm (directed graphs)}
	\end{figure}
\end{center}
\begin{center}
	\begin{figure}
		\centering
		\begin{tabular}{cc}
			\includegraphics[width=0.5\textwidth]{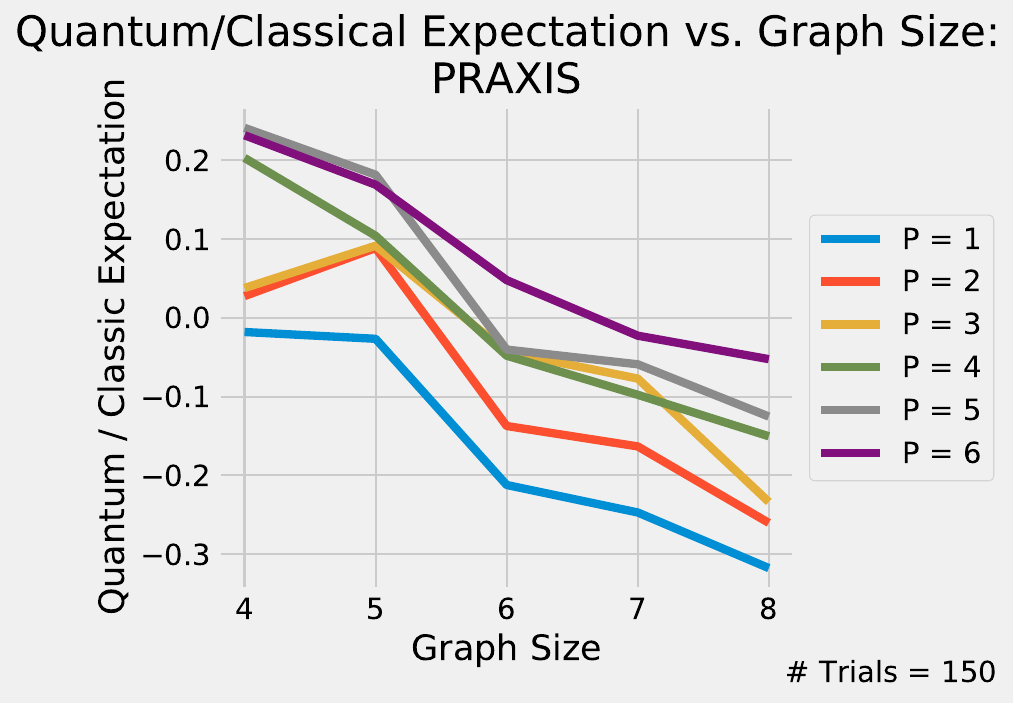} &   \includegraphics[width=0.5\textwidth]{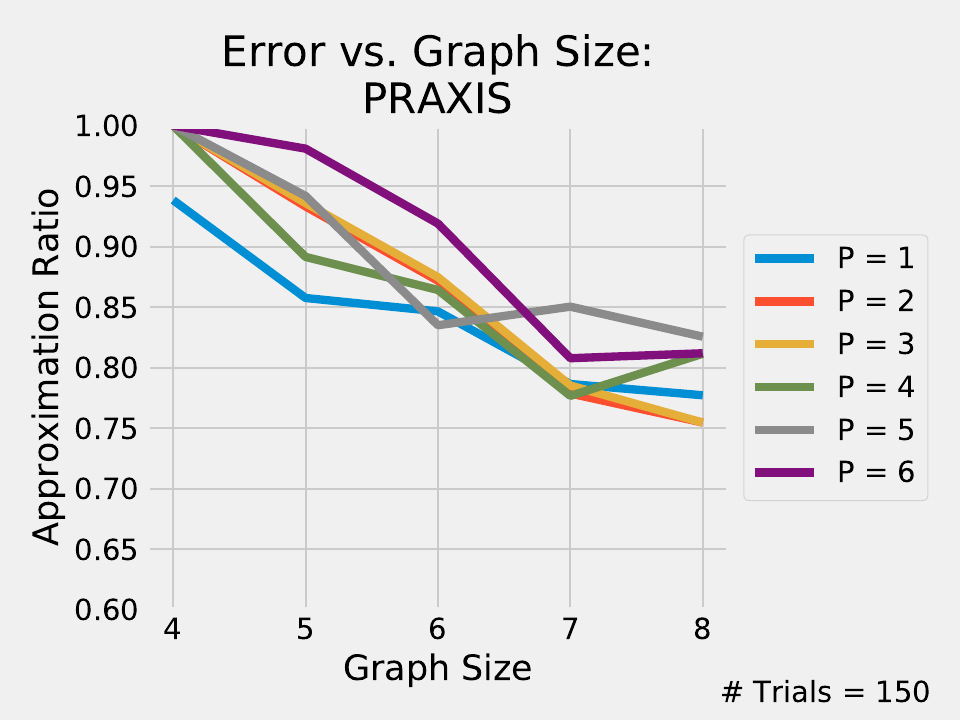} \\
			(a) Expectation value comparison & (b) Solution error \\[6pt]
			\includegraphics[width=0.5\textwidth]{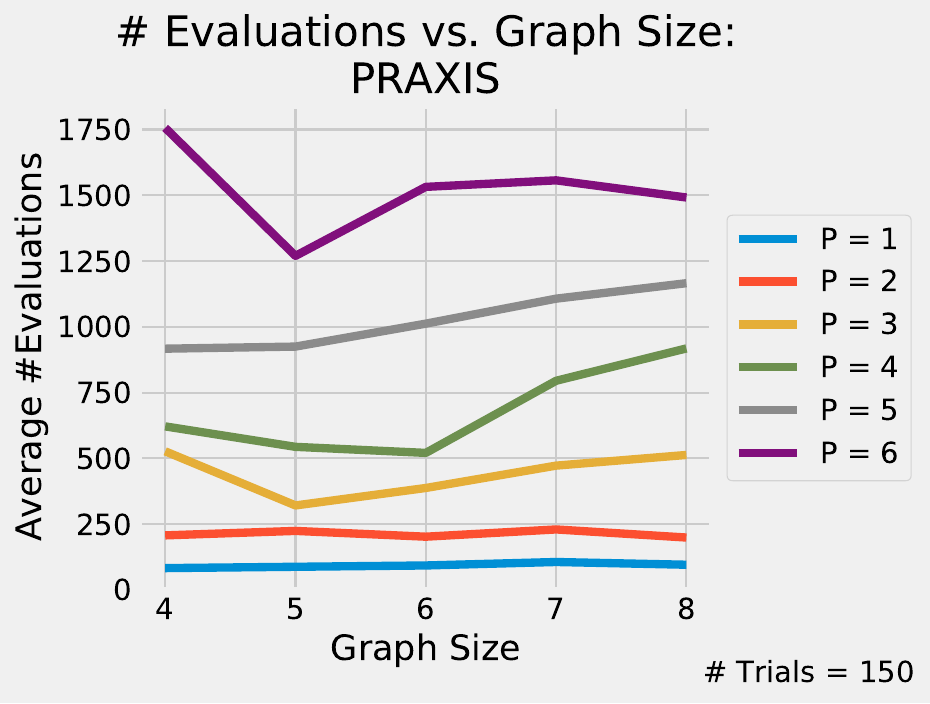} &   \includegraphics[width=0.5\textwidth]{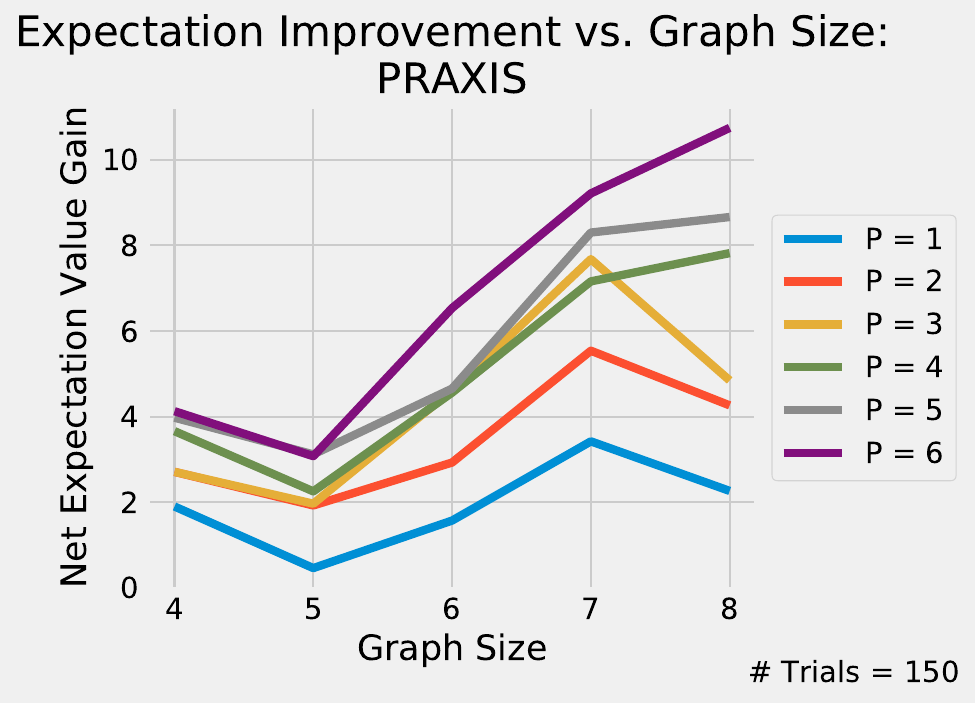} \\
			(c) Function evaluations required & (d) Improvement\\[6pt]
		\end{tabular}
		\label{fig:finalPRAXIS}
		\caption{Final performance metrics for the PRAXIS algorithm (local) (directed graphs)}
	\end{figure}
\end{center}
\end{document}